\documentclass[usletter, 12pt]{article}

\usepackage{amsmath,amsthm,amssymb,amstext,amsfonts,mathrsfs}
\usepackage{array, makecell}
\usepackage{authblk}
\usepackage[english]{babel}
\usepackage[utf8]{inputenc}
\usepackage{caption}
\usepackage{changepage}
\usepackage[usenames,dvipsnames]{color, xcolor}
\usepackage{comment}
\usepackage{enumerate,enumitem}
\usepackage[T2A]{fontenc}
\usepackage{graphicx,pgf}
\usepackage{geometry}
\usepackage{hhline}
\usepackage[hidelinks]{hyperref}
\hypersetup{colorlinks = true, urlcolor = blue, linkcolor = blue, citecolor = blue, allcolors=blue}
\usepackage{lipsum}
\usepackage{natbib}
\usepackage{multirow,multicol}
\usepackage{pbox}
\usepackage{rotating,xspace}
\usepackage{setspace}
\usepackage{subcaption}
\usepackage{type1cm}
\usepackage{tabularx}
\usepackage{tcolorbox}
\usepackage{tikz}
\usetikzlibrary{positioning}
\usetikzlibrary{decorations.pathreplacing}
\usetikzlibrary{arrows,backgrounds}
\tikzstyle{object}=[circle,draw=red]
\tikzstyle{agent}=[circle,draw=blue]
\tikzstyle{quantity}=[fill=white]

\def\w{\omega}

\def\cv{\mathcal{V}}
\def\ce{\mathcal{E}}
\def\bx{\mathbf{x}}
\def\G{\Gamma}

\hypersetup{colorlinks = true}
\hypersetup{allcolors=blue}

\newtheorem{definition}{Definition}

\newtheorem{proposition}{Proposition}
\newtheorem{lemma}{Lemma}
\newtheorem{example}{Example}
\newtheorem{corollary}{Corollary}

\newcommand{\norm}[1]{| #1 |}

\newcommand{\hide}[1]{} 

\geometry{left=1in,right=1in,top=1in,bottom=1in}

\usepackage{setspace}
\onehalfspacing


\pagestyle{plain}

\begin{document}
	
	\title{Efficient and fair trading mechanisms for resource exchange in market design}

\author{Jingsheng Yu\thanks{School of Economics and Management, Wuhan University. Email: yujingsheng1987@outlook.com} \quad \quad Jun Zhang\thanks{Institute for Social and Economic Research, Nanjing Audit University. Email: zhangjun404@gmail.com}
}


\maketitle	

\begin{abstract}\label{abstract}

We develop a method using parameterized linear equations to define trading mechanisms in market design models. Our method adeptly addresses challenges arising from factors such as complex endowments or coarse priorities, while offering flexibility to incorporate fairness concerns through the selection of equation parameters. Applying this method to models including fractional endowment exchange, priority-based allocation, and house allocation with existing tenants, we obtain new mechanisms that are both efficient and fair in these models.

\end{abstract}	

\bigskip

\noindent \textbf{Keywords}: market design; trading mechanism; efficiency; fairness

\noindent \textbf{JEL Classification}: C71, C78, D71

\thispagestyle{empty}
\setcounter{page}{0}


\newpage

\section{Introduction}\label{section:intro}

Efficiency and fairness are two central desiderata in market design. Trading mechanisms have been widely used in various models to find efficient allocations. Whenever an allocation is inefficient, there exist opportunities for agents to exchange their assignments to enhance welfare. This paper develops a new method for defining trading mechanisms and applies it to design new mechanisms that are efficient and fair in several well-studied models.

A pioneering trading mechanism in market design is Gale's \textit{top trading cycle} (TTC) for the housing market model \citep{ShapleyScarf1974}. In this model, finite agents own distinct objects and seek to exchange their endowments. TTC iteratively constructs directed graphs in which agents point to their most preferred objects and objects point to their owners. At each step, agents forming cycles in the graph trade their endowments. This graph-based cycle-clearing idea forms the foundation for subsequent trading mechanisms in the literature.\footnote{See \cite{AbduSonmez1999,papai2000strategyproof,roth2004kidney,erdil2008s,KestenUnver2014,erdil2017two,pycia2017incentive,dur2019two}.} For instance, assuming strict priorities, \cite{abdulkasonmez2003} generalize TTC to the school choice model by allowing students to trade their priorities for schools.

The cycle-clearing definition of trading mechanisms is appealing for specifying details of the trading process, such as who trades with whom and how trades occur. However, many environments are more complex than the housing market model or the school choice model with strict priorities. For instance, when an object has multiple owners or when agents are tied in priority for an object, the graph generated by the cycle-clearing idea can be complicated, with objects pointing to several agents and agents involved in overlapping cycles. Determining how to clear these cycles becomes a nontrivial challenge, especially when randomization must be incorporated into the trading process to address fairness concerns.\footnote{When co-ownership or priority ties exist, randomly generating private endowments or strict priorities before running TTC can result in efficiency loss. In general, when agents are involved in overlapping cycles in a graph, randomly selecting cycles to clear can result in efficiency loss.}

This paper introduces a ``reduced form'' approach to defining the trading process in a given graph. We characterize the outcome of a trading process using a linear equation system that describes an input-output equilibrium for all nodes in the graph; that is, each node supplies a certain amount of resources in exchange for an equal amount from others. The system includes parameters representing how agents with rights over a resource divide the rights to use it for trading. These parameters provide flexibility in controlling the trading process and can be selected to address fairness or other concerns. Compared with the cycle-clearing idea, our method circumvents the challenge of identifying cycles and specifying how trades occur in complex environments. Moreover, computation is not a concern for our method, since linear equations can be efficiently solved by well-developed computational methods.

The equation system we employ to describe a trading process resembles the classical Leontief model \citep{leontief1941}, which describes a general equilibrium in an economy consisting of finite industries. The input-output relationship between the industries is characterized by a linear equation system with coefficients forming a stochastic matrix (i.e., each element belongs to $ [0, 1] $ and each column sums to one). This is also the key feature of the coefficients in our equation system. Results on the Leontief model (e.g. \citealp{peterson1982leontief}) ensure that our equation system has positive solutions (Lemma \ref{thm:existence}). Thus, the trading mechanisms in our approach are well defined.

To demonstrate the effectiveness of our method, we apply it to three models and derive new mechanisms for each. We first consider the \textit{fractional endowment exchange} (FEE) model, a direct generalization of the housing market model. In this model, agents may own fractional shares of multiple objects, and each object may be jointly owned by multiple agents in varying shares. The intricate endowment structure of this model presents significant challenges in designing efficient and fair trading mechanisms, making it an ideal setting to showcase the advantages of our method. The FEE model is also notable for unifying the housing market model and another classical model known as house allocation \citep{hylland1979efficient,abdulkadirouglu1998random}, in which all agents collectively own all objects. We view the house allocation model as a special case of the FEE model in which agents own equal shares of all objects. For these reasons, we devote a substantial portion of the paper to this model. Once the FEE model is analyzed, applying our method to the other two models becomes straightforward.


We propose a class of \textit{balanced trading mechanisms} (BTMs) for the FEE model. At each step of these mechanisms, agents report their most preferred objects, and an equation system specifies the trading process. All mechanisms in this class are individually rational and sd-efficient (Proposition \ref{prop:BTM:efficiency}), and they differ only in their equation parameters. The definition of BTM is intentionally broad to accommodate various selections of parameters. These selections determine the fairness and other properties of the mechanisms. We therefore devote substantial discussions to the selection of parameters. We first show that, by selecting parameters satisfying conditions we provide, these mechanisms can achieve fairness criteria including equal treatment of equals (ETE), equal-endowment no envy (EENE), and a new criterion we introduce called \textit{bounded envy} (Proposition \ref{prop:BTA:fairness}). ETE and EENE are standard axioms that ensure fairness between agents with equal endowments (and identical preferences). Bounded envy is a more general criterion applying to any pair of agents, requiring that the potential envy one agent may feel toward another be bounded by the latter's relative advantage in endowments. Bounded envy implies EENE, which implies ETE.

Since the FEE is a model of endowment exchange, and the equation parameters in BTMs determine agents' trading opportunities, it seems natural to select parameters based only on the distribution of agents' endowments. Thus, we then study a subclass of BTMs with this feature, which we call \textit{regular}. All regular BTMs satisfy EENE and are anonymous and neutral (Proposition \ref{prop:regular:fairness}), meaning that all agents and objects are treated equivalently regardless of their identities. Although individual rationality and sd-efficiency together are incompatible with strategy-proofness, we prove that regular BTMs satisfy \textit{bounded invariance}, meaning that an agent cannot manipulate the allocation of an object by changing the reported preferences over less preferred objects, and that a subclass of regular BTMs are \textit{asymptotically strategy-proof} in large economies (Proposition \ref{prop:asymptotic:IC}). In proving asymptotically strategy-proofness, the EENE property of regular BTMs plays a central role.

Regular BTMs also satisfy \textit{decomposability}: if an agent is split into sub-agents, each owning exactly one of the original agent’s endowments, then the combined assignments of these sub-agents equal the assignment of the original agent, while other agents’ assignments remain unchanged. Consequently, any economy can be decomposed into a \textit{simple} economy in which each agent owns only one object, and each pair of agents either owns distinct objects or owns the same object, but potentially in different amounts. Every regular BTM is uniquely characterized by its allocations in these simple economies.

Within the class of BTMs, we recommend a specific regular BTM, called \textit{Equal-BTM} and denoted by $ \psi^E $, due to its intuitive fairness property. In an FEE economy, agents may own different sets and amounts of objects. Each step of $ \psi^E $ treats the remaining owners of each object equally by requiring them to supply equal amounts of that object to the trading process in exchange for equal amounts of their respective favorite objects. Thus, if an agent owns more of an object than another, she must utilize the extra amount in the trading process only after the other agent has exhausted her amount of that object. $ \psi^E $ satisfies all fairness criteria mentioned above and meets even stronger criteria in simple economies: \textit{ordinal fairness} and \textit{generalized EENE} (Proposition \ref{prop:ordinalfair}). Ordinal fairness requires that, for any two agents owning the same object, the surplus of one agent at any object in her assignment is no greater than the surplus of the other at the same object, unless the former agent's surplus exceeds the size of the latter's entire assignment, which can occur only when the former owns more of that object. Generalized EENE requires that for any two agents owning the same object, neither agent envies the other's assignment when the comparison is restricted to assignments no larger than their respective endowments. Ordinal fairness implies generalized EENE.

We also provide a characterization of $ \psi^E $. We introduce a property of mechanisms called \textit{endowment-expansion invariance}, which captures the key feature of $ \psi^E $: in any economy, if an agent owns the largest amount of an object, increasing her endowment of that object does not affect other owners' trading opportunities, and if this object is no better than the objects received by other agents in their assignments, the others' assignments remain unchanged. $ \psi^E $ uniquely satisfies sd-efficiency and endowment-expansion invariance (Proposition \ref{prop:characterization}).

In the housing market model, every BTM coincides with TTC. In the house allocation model, every BTM coincides with a simultaneous eating mechanism of \cite{bogomolnaia2001new} that satisfies equal-division lower bound, and every BTM satisfying a mild fairness condition coincides with the probabilistic serial (PS) mechanism (Proposition \ref{prop:house allocation:housing market}). Hence, $ \psi^E $ coincides with TTC and PS in these models, implying new characterizations of both mechanisms (Corollary \ref{corollary:PS} and Corollary \ref{corollary:TTC}). In fact, bounded invariance and ordinal fairness, which we use to describe the fairness of BTMs, are the axioms originally used by \cite{BH2012} and \cite{hashimoto2014two} in their characterizations of PS.\footnote{PS is characterized by sd-efficiency, envy-freeness, and bounded invariance \citep{BH2012}. When objects are no more than agents, ordinal fairness alone characterizes PS \citep{hashimoto2014two}.} 

After studying the FEE model, we turn to the \textit{priority-based allocation} model, in which agents' rights over objects are determined by priorities rather than endowments. Existing extensions of TTC typically assume strict priorities, yet weak priorities are common in practice. For example, in school choice, schools often rank students coarsely based on criteria such as residence, producing large indifference classes \citep{erdil2008s}. To address this, we allow for weak priorities and propose the \textit{priority trading mechanism} (PTM), which generalizes TTC. In PTM, at each step, agents point to their most preferred objects, and each object points to the agents in its highest priority class. Analogous to $ \psi^E $, PTM treats all agents in an object’s highest priority class equally by requiring them to use equal amounts of the object for trading. The outcome of PTM is sd-efficient and fair (Proposition \ref{prop:priority}).

Finally, we consider the \textit{house allocation with existing tenants} (HET) model, a hybrid of the house allocation model and the housing market model. In this model, existing tenants own private endowments, while other objects are publicly owned by all agents. We treat HET as a special case of the priority-based allocation model: all agents have equal priority for public endowments, while existing tenants hold the highest priority for their private endowments, with other agents sharing the next priority level. Applying PTM to this model, we obtain the \textit{eating-trading mechanism} (ETM). Interestingly, ETM can be described as a mixture of the simultaneous eating algorithm and TTC (Proposition \ref{prop:pse}): agents continuously ``eat'' objects at rates determined by the ``you request my house—I get your rate'' rule; however, whenever a group of existing tenants simultaneously wishes to eat one another's private endowments in a way that forms a cycle, they immediately trade these private endowments, possibly in fractional amounts. ETM differs from existing extensions of PS to HET.

The paper is organized as follows. We discuss related literature in the remainder of this section. Section \ref{section:method} introduces our method for defining trading mechanisms. Our description is general. However, to allow readers to have a concrete model in mind while reading that section, we define the FEE model in Section \ref{section2:FEEmodel} and discuss examples that are later used to illustrate our method. Section \ref{section:FEEmech} presents the class of BTMs for the FEE model. Section \ref{section:priority} develops our mechanism for the priority-based allocation model. Section \ref{section:HET} presents our mechanism for the HET model. Section \ref{section:conclusion} concludes with additional results and directions for future research. The appendix contains all proofs and supplementary materials.

\paragraph{Related literature}
This paper contributes to the market design literature by introducing a new approach to defining trading mechanisms for various environments. Our approach employs linear equations to characterize the outcome of a trading process, circumventing the challenge of specifying how trades occur in complex environments. 
Using equations to define market design mechanisms has been explored in the literature. A well-known example is the pseudo-market mechanism introduced by \cite{hylland1979efficient} for the house allocation model, which uses supply-demand equations to define a competitive equilibrium. Another example is the characterization of stable matchings by \cite{azevedo2016supply} in a continuum two-sided matching model, which uses supply-demand equations to define stability.  Similar equations are also used by \cite{han2021blood} to define allocations of blood donation. The equations in these papers define a fixed point, which directly defines the outcome of economies. Differently, equations in our approach define trading processes in discrete steps, with one equation system for each step. The equations do not describe a fixed point and do not directly define the outcome of a trading mechanism.

Other studies have employed equations to address the challenge of defining mechanisms in complex settings. \cite{KestenUnver2014} propose the fractional deferred acceptance (FDA) mechanism for the school choice model with weak priorities. To solve the problem that the algorithm may not terminate in finite steps, they reduce every infinite convergence cycle in the algorithm to a linear equation system. \cite{leshno2021cutoff} present a continuous-time description of TTC in a continuum school choice model to study the role of strict priorities. It solves the difficulty with the description of the standard TTC due to measure issues. Their equations do not involve parameters and reduce to the standard TTC in finite economies.

This paper also contributes to the literature by proposing new mechanisms in well-studied models. Here, we discuss related mechanisms in the literature. In the HET model, \citet{Yilmaz2010} propose the individually rational probabilistic serial (IR-PS) mechanism, which approximates PS while addressing the IR constraints for existing tenants. If an existing tenant views her private endowment as the least preferred object, IR-PS will disregard her IR constraint and treat her as one without private endowment. IR-PS does not reduce to TTC in the housing market model. Section \ref{section:HET} discusses the differences between IR-PS and our ETM.

\cite{AS2011} propose the \textit{controlled consuming} (CC) mechanism, which is a generalization of IR-PS, for the FEE model.\footnote{To be precise, \cite{AS2011} consider a more general model than FEE, since they allow some shares of objects to be publicly owned by all agents.} CC satisfies a fairness axiom different from EENE. However, the authors admit that EENE is more natural for the model and leave an extension of TTC to the model as an open question. In an unpublished paper, \cite{aziz2015generalizing} propose an extension of TTC to the FEE model. To address the challenge of clearing cycles in complex graphs, the author uses exogenous orders of agents and objects to prioritize cycles. This approach violates basic fairness criteria, such as ETE. Similarly, \cite{altuntacs2022trading} define probabilistic variants of TTC by assuming strict priorities.

In the priority-based allocation model with weak priorities, to our knowledge, the literature has not proposed a sd-efficient mechanism based on ordinal preferences in which priorities playing nontrivial roles. \cite{KestenUnver2014} generalize DA to find the most efficient ex-ante stable allocations.  \cite{han2017exante} combines the features of DA and PS to find the most efficient ex-ante fair allocation, where ex-ante fairness is a criterion defined by the author. Both mechanisms reduce to DA under strict priorities. Thus, they differ from PTM.

\cite{aziz2022vigilant} propose a generalization of PS that can deal with a broad class of feasibility constraints, including IR constraints arising from
endowments and stability constraints arising from priorities. Their mechanism degenerates to existing extensions of PS, such as the CC mechanism in the FEE model. The roles of endowments or priorities in our mechanisms cannot be transformed into the type of constraints they study. Thus, our mechanisms differ from theirs.

\cite{Kesten2009} is the first to show that PS is equivalent to a probabilistic version of TTC. The author defines a probabilistic version of TTC, which follows the traditional approach of selecting and clearing cycles.
Since $ \psi^E $ coincides with PS in the house allocation model, our approach provides a new perspective on this equivalence.\footnote{Notably, \citeauthor{Kesten2009}'s mechanism does not coincide with PS step by step, though they are equivalent in outcomes, while $ \psi^E $ coincides with PS step by step.}

Finally, as corollaries of our results, we provide new characterizations of TTC and PS. These results are obtained by changing the set and amounts of objects owned by agents and using a new property capturing how agents respond to the changes. This differs from existing characterizations of the two mechanisms that are typically proven in fixed-size economies and that often rely on axioms such as individual rationality and strategy-proofness for TTC (e.g.,  \citealp{ma1994strategy,miyagawa2002strategy,fujinaka2018endowments,ekici2023pair}), or fairness axioms for PS (e.g., \citealp{BH2012,hashimoto2014two}).


\section{The fractional endowment exchange model}\label{section2:FEEmodel}

Let $ \mathcal{O} $ denote the grand set of objects and $ \mathcal{I} $ the grand set of agents, both finite. A \textbf{fractional endowment exchange} (FEE) economy is represented by a quadruple $ \G=(I,O,\succ_I, \w) $ where $ I \subseteq \mathcal{I}$, $ O \subseteq \mathcal{O}$, $ \succ_I=(\succ_i)_{i\in I} $ is the preference profile of agents, and $ \w=(\w_{i,o})_{i\in I,o\in O} \in [0,1]^{I\times O} $ is the endowment matrix. For each $ i\in I $, $ \succ_i $ represents $ i $'s strict preference over  $ O $ and $ \w_i=(\w_{i,o})_{o\in O}$ represents $ i $'s endowments, where $ \w_{i,o} $ is the amount of object $ o $ owned by $ i $. Each agent demands the equivalent of one unit of an object, and their endowments do not exceed this demand; that is, for all $ i\in I $, $ \sum_{o\in O} \w_{i,o}\le 1 $.  For each $ o\in O $, $ \w_o=(\w_{i,o})_{i\in I}$ represents the distribution of agents' endowments of $ o $.  Let $ q_o = \sum_{i\in I}\w_{i,o} $, which denotes the quota of object $ o $ in the economy. For all $ o, o'\in O $, we write $ o\succsim_i o' $ if $ o\succ_i o' $ or $ o=o' $.

When objects are indivisible, $ \w_{i,o} $ denotes the probability share of object $ o $ initially owned by agent $ i $, and $ q_o $ is typically an integer. However, since we will study properties of mechanisms under varying endowments, we allow objects to have non-integer (including zero) quantities. 
This accommodates another interpretation where objects are infinitely divisible and might be available in any quantity, such as the allocation of time in time banks. This interpretation has been adopted by \cite{Bogomolnaia2015random} and others.

\paragraph{Allocation} Given an economy, an (random) \textbf{assignment} for agent $ i $ is a vector $ p_i\in [0,1]^{O} $ such that $ \sum_{o\in O}p_{i,o}\le 1 $. The \textbf{size} of $ p_i $ is the total amount of objects it includes, defined as $ \norm{p_{i}}=\sum_{o\in O} p_{i,o}$. The \textbf{support} of $ p_i $ is the set of objects it includes, defined as $ \mathrm{supp}(p_i)=\{o\in O: p_{i,o}>0\}$. Following the convention in the literature (e.g., \citealp{bogomolnaia2001new}), we use first-order stochastic dominance to extend agents' preferences to assignments.
Given $ \succ_i $, $p_i$ \textbf{weakly (first-order) stochastically dominates} $p_i'$, denoted by $ p_i\succsim^{sd}_i p'_i $, if $ \sum_{o^{\prime }\succsim _{i}o}p_{i,o^{\prime }}\geq\sum_{o^{\prime }\succsim _{i}o}p'_{i,o^{\prime }}$ for all $ o\in O $. If the inequality is strict for some $ o $, $p_i$ \textbf{strictly stochastically dominates} $p'_i$, denoted by $ p_i \succ^{sd}_i p'_i $.

An assignment $ p'_i $ is a \textbf{subassignment} of $ p_i $ if, for all $ o\in O $, $ p'_{i,o}\le p_{i,o} $. Let $ 2^{p_i} $ denote the set of all subassignments of $ p_i $. Given $ \succ_i $, for any $ x\in [0, \norm{p_i}] $,  $ p_i[x] $ denotes the subassignment of $ p_i $ such that $ \norm{p_i[x]}=x $ and $ p_i[x] \succsim^{sd}_i p'_i $ for all $ p'_i\in 2^{p_i} $ with $ \norm{p'_i}=x $. In words, $ p_i[x] $ is the best subassignment of $ p_i $ with size $ x $ for $ \succ_i $.

\begin{figure}[!htb]
\centering
\small
\begin{tikzpicture}[x=3cm]
	\draw[black,thick,>=latex,line cap=rect]
	(0,0) -- (3,0);
	\foreach \Xc in {0,.7,1.4,2.1,2.5,3}
	{
		\draw[black,thick] 
		(\Xc,-2pt) -- ++(0,4pt);
	}   
	
	
	
	
	
	\node[below,align=left,anchor=north,inner xsep=0pt] 
	at (2.5,.4) {$ \ldots $};
	
	\draw[decorate,decoration={brace,amplitude=8pt}] (2.5,-0.15) -- (0,-0.15)
	node[anchor=north,midway,below=8pt] {$ \scriptstyle x $};
	
	\node[below,align=left,anchor=north,inner xsep=0pt] 		 at (2.5,-.2) {$ \scriptstyle p_i[x] $};
	
	\draw[decorate,decoration={brace,amplitude=8pt}] (0,.15) -- (.7,.15)
	node[anchor=south,midway,above=5pt] {$ \scriptstyle p_{i,o_1} $};
	
	\draw[decorate,decoration={brace,amplitude=8pt}] (0.7,.15) -- (1.4,.15)
	node[anchor=south,midway,above=5pt] {$ \scriptstyle p_{i,o_2} $};
	
	\draw[decorate,decoration={brace,amplitude=8pt}] (1.4,.15) -- (2.1,.15)
	node[anchor=south,midway,above=5pt] {$ \scriptstyle p_{i,o_3} $};
	
	\node[below,align=left,anchor=north,inner xsep=0pt] 
	at (3,-.2) 
	{$ \scriptstyle p_i $};
\end{tikzpicture}
\end{figure}

A convenient way to understand any assignment $p_i$ for $\succ_i$ is to arrange the objects in $ \mathrm{supp}(p_i)$ from most to least preferred along a line such that $[0, p_{i,o_1}]$ represents the amount of the most preferred object $o_1$, $[p_{i,o_1}, p_{i,o_1} + p_{i,o_2}]$ represents the amount of the next preferred object $o_2$, and so on. Then, for any $x \in [0, \norm{p_i}]$, $p_i[x]$ is the assignment represented by the interval $[0, x]$. For this reason, we also refer to $p_i[x]$ as the ``upper part'' of $p_i$ with size $x$.

An allocation specifies an assignment for each agent, subject to the supply constraints of objects. Since we are considering exchange economies without monetary transfers, we focus on allocations resulting from one-for-one exchanges, where each agent receives an amount of objects equal to that of her initial endowments. 

Formally, an \textbf{allocation} is denoted by a matrix $p=(p_{i,o})_{i\in I, o\in O}\in [0,1]^{I\times O}$ such that, for all $ o\in O$, $\sum_{i\in I}p_{i,o}= q_o$, and, for all $ i\in I$, $\sum_{o\in O}p_{i,o}=\sum_{o\in O}\w_{i,o}$. For each $ i\in I $, $p_{i}=(p_{i,o})_{o\in O}$ denotes the assignment received by $ i $.  An allocation $p$ weakly stochastically dominates another $p^{\prime }$, denoted by $p \succsim^{sd}_I p^{\prime }  $, if $p_{i}\succsim^{sd}_i p_{i}^{\prime }$ for all $i\in I$; $ p $ strictly stochastically dominates $ p' $, denoted by $p \succ^{sd}_I p^{\prime }  $, if there further exists an agent $ j $ such that $p_{j} \succ^{sd}_j p_{j}^{\prime }$. 
An allocation $ p $ is \textbf{sd-efficient} if it is not strictly stochastically dominated by any other allocation.  It is \textbf{individually rational} (IR) if, for all $ i\in I $, $ p_i \succsim^{sd}_i \w_i $.

In an allocation $ p $, an agent $ i $ is said to \textbf{envy} another agent $ j $ if $ p_i\not\succsim^{sd}_i p_j $. An allocation $ p $ satisfies \textbf{equal treatment of equals} (ETE) if, for all $ i,j\in I $ with $ \w_i=\w_j $ and $ \succ_i=\succ_j $, $ p_i=p_j $. An allocation $ p $ satisfies \textbf{equal-endowment no envy} (EENE) if, for all $ i,j \in I$ with $ \w_i=\w_j $, $ p_i\succsim^{sd}_i p_j $ and $ p_j\succsim^{sd}_j p_i $. EENE implies ETE.

\paragraph{Mechanism} A \textbf{mechanism} $ \psi $ is a rule that selects an allocation for each economy. When there is no confusion, we represent an economy by its preference profile $ \succ_I$ and use $ \psi(\succ_I) $ to denote the allocation selected by $ \psi $. We use $ \psi_i(\succ_I) $ to denote the assignment to agent $ i $. A mechanism is said to satisfy an efficiency or fairness axiom if its allocations for all economies satisfy the axiom. 
An agent $ i $ can \textbf{manipulate} a mechanism $\psi $ in an economy $ \succ_I $ by reporting $ \succ'_i\neq\succ_i $ if $\psi_{i}(\succsim_{I}) \not\succsim^{sd}_i \psi _{i}(\succsim^\prime_{i},\succsim_{-i})$. An agent $ i $ can  \textbf{strongly manipulate} $\psi $ in $ \succ_I $ by reporting $ \succ'_i \neq \succ_i $ if $\psi _{i}(\succsim^\prime_{i},\succsim_{-i}) \succ^{sd}_i \psi_{i}(\succsim_{I})$. A mechanism $ \psi $ is \textbf{(weakly) strategy-proof} if it cannot be (strongly) manipulated by any agent in any economy.

For any $ \Gamma = (I,O,\succ_I, \w) $, any $ i\in I $, and any $ o \in O $, define $ U(\succsim_i,o)=\{o'\in O: o'\succsim_i o\} $ as the upper contour set of $ o $ in $ \succ_i $, and $\succ_i|_{U(\succsim_i,o)} $ denotes the restriction of $ \succ_i $ to $U(\succsim_i,o)$. 
A mechanism $\psi$ satisfies \textbf{bounded invariance} if, for any $ \Gamma = (I,O,\succ_I, \w) $, any $ i\in I $, any $ o \in O $, and any $ \succ'_i \neq \succ_i $ such that $ \succ_i|_{U(\succsim_i,o)} = \succ'_i|_{U(\succsim'_i,o)} $, we have, for all $ j \in I $, $\psi_{j,o} (\succ_I) = \psi_{j,o}(\succ_{I \backslash \{i\}},\succ'_i)$. Bounded invariance implies that an agent cannot change the allocation of an object by changing the reported ordering of less preferred objects, but it is independent of strategy-proofness.

\paragraph{Classical models} The \textbf{housing market model} is a special case of the FEE model where $ |I|=|O| $, $ q_o=1 $ for all $ o\in O $, and $ \w $ is a permutation matrix. In the \textbf{house allocation model}, all objects are collectively owned by all agents.  In this paper, we regard it as a special case of the FEE model where agents own equal divisions of all objects; that is, for all $ i\in I $ and all $ o\in O $, $ \w_{i,o}=\frac{q_o}{|I|}$. For this sake, we assume that $ \sum_{o\in O}q_o \le |I| $. The general form of the house allocation model, where objects can be more than agents, is dealt with in Section \ref{section:priority} in which we regard it as a special case of the priority-based allocation model where agents have equal priority for all objects.



\paragraph{Examples} In the housing market model, TTC iteratively generates graphs in which agents point to their preferred objects and objects point to their owners. Since every node points to only one other node, cycles must exist and be disjoint, straightforwardly indicating how trades should occur. However, in the FEE model, Example \ref{example:2} shows that the graphs generated from the cycle-clearing idea may be complex, and Example \ref{example:3} shows that even though the generated graphs are simple, clearing cycles may not find an efficient and fair allocation.

\begin{example}\label{example:2}
Consider five agents $ \{1,2,3,4,5\} $ and five objects $ \{a,b,c,d,e\} $. Agents' endowments and preferences are represented by the following tables.
\begin{table}[!ht]
	\centering
	\scalebox{.95}{
		\begin{subtable}{.25\linewidth}
			\raggedright
			\begin{tabular}[c]{c|ccccc}
				& $a$ & $b$ & $c$ & $d$ & $e$ \\\hline
				$1$ & $1$ &  &  \\
				$2$ &  & $1$ &  \\
				$3$ &  &  & $1$ & &  \\
				$4$ &  &  &  & $1$ &  \\
				$5$ &  &  &  &  & $1$ \\
			\end{tabular}
			\subcaption{Discrete endowments}\label{table:example2:discreteendowment}
		\end{subtable}
		\quad
		\begin{subtable}{.35\linewidth}
			\centering
			\begin{tabular}[c]{c|ccccc}
				& $a$ & $b$ & $c$ & $d$ & $e$ \\\hline
				$1$ & $1/2$ & $1/2$ &  \\
				$2$ & $1/2$ & $1/2$ &  \\
				$3$ &  &  & $1/4$ & $1/2$ & $1/4$ \\
				$4$ &  &  & $1/4$ & $1/2$ & $1/4$ \\
				$5$ &  &  & $1/2$ &  & $1/2$ \\
			\end{tabular}
			\subcaption{Fractional endowments}\label{table:example2:fractionalendowment}
		\end{subtable}
		\quad
		\begin{subtable}{.2\linewidth}
			\centering
			\begin{tabular}[c]{ccccc}%
				$\succsim_{1}$ & $\succsim_{2}$ & $\succsim_{3}$ & $\succsim_{4}$ & $\succsim_{5}$ \\\hline
				$c$ & $d$ & $d$ & $a$ & $c$ \\
				$d$ & $c$ & $c$ & $d$ & $e$ \\
				$a$ & $a$ & $a$ & $c$ & $a$\\
				$b$ & $b$ & $e$ & $e$ & $ b $\\
				$e$ & $e$ & $b$ & $b$ & $ d $			
			\end{tabular}
			\subcaption{Preferences}\label{table:example2:preference}
		\end{subtable}
	}
\end{table}

First, consider the case that agents have discrete endowments as shown above. This is a housing market economy. In TTC, step one would generate the following graph in which $ 1,3,4 $ form a cycle and exchange their endowments. At step two, $ 2 $ and $ 5 $ form cycles with their own endowments. Thus, they receive their own endowments.

\begin{figure}[!ht]
	\centering
	\scalebox{.9}{
		\begin{tikzpicture}[bend angle=15,xscale=.7,yscale=.6]
			\node (i2) at (0,0) [agent] {$2$};
			\node (d) at (2,0) [object] {$d$};
			\node (i3) at (4,0) [agent] {$3$};
			\node (c) at (6,0) [object] {$c$};
			\node (i5)  at (8,0) [agent] {$ 5 $};
			\node (b)   at (0,2) [object] {$ b $};
			\node (i4) at (2,2) [agent] {$4$};
			\node (a) at (4,2) [object] {$a$};
			\node (i1) at (6,2) [agent] {$1$};
			\node (e)  at (8,2)  [object] {$ e $};

			\draw[-latex] (a) to (i1);
			\draw[-latex] (b) to (i2);
			\draw[-latex] (c) to (i3);
			\draw[-latex] (d) to (i4);
			\draw[-latex] (e) to (i5);
			\draw[-latex] (i1) to (c);
			\draw[-latex] (i2) to (d);
			\draw[-latex] (i3) to (d);
			\draw[-latex] (i4) to (a);
			\draw[-latex] (i5) to (c);
		\end{tikzpicture}
	}
\end{figure}

Next, suppose that agents change to have fractional endowments as shown above, but their preferences remain unchanged.  
Following the procedure of TTC, we let agents point to most preferred objects and objects point to all of their owners. Then, we would generate the graph in \autoref{figure:example2}. We label the amount of each $ o $ owned by each owner $ i $ besides the edge $ o\rightarrow i $. This graph is more complex than that in the housing market economy. Since the cycles in the graph are \textbf{not disjoint}, the idea of clearing cycles cannot be directly applied. Note that, to find an IR, sd-efficient, and EENE allocation, cycles cannot be cleared arbitrarily.\footnote{We discuss two seemingly intuitive ideas of solving Example \ref{example:2} via clearing cycles in the Online Appendix. The allocations found by both ideas fail to satisfy IR, sd-efficiency, and EENE simultaneously.}

\begin{figure}[!htb]
	\centering
	\scalebox{.9}{
		\begin{subtable}{.5\linewidth}
			\raggedleft
			\begin{tikzpicture}[bend angle=20,xscale=0.9,yscale=0.9]
				\node (o1) at (2,0) [object] {$a$};
				\node (o5) at (8,4.5) [object] {$e$};
				\node (o3) at (5,2) [object] {$c$};
				\node (o4) at (2,4.5) [object] {$d$};
				\node (o2) at (0,0) [object] {$b$};
				\node (i1) at (5,0) [agent] {$1$};
				\node (i2) at (0,4.5) [agent] {$2$};
				\node (i3) at (5,4.5) [agent] {$3$};
				\node (i4) at (2,2) [agent] {$4$};
				\node (i5) at (8,2) [agent] {$5$};
				
				\node at (0.7,2) {\footnotesize $1/2$};
				\node at (2.3,3.5) {\footnotesize $1/2$};
				\node at (3.5,3.8) {\footnotesize $1/2$};
				\node at (3.5,1.7) {\footnotesize $1/4$};
				\node at (5.3,3) {\footnotesize $1/4$};
				\node at (6.6,3.6) {\footnotesize $1/4$};
				\node at (6.5,4.8) {\footnotesize $1/4$};
				\node at (6.5,1.4) {\footnotesize $1/2$};
				\node at (7.6,3.5) {\footnotesize $1/2$};
				\node at (-.3,2) {\footnotesize $1/2$};
				\node at (2.4,-.8) {\footnotesize $1/2$};
				\node at (3.5,.2) {\footnotesize $1/2$};
				
				\draw[-latex] (i1) to (o3);
				\draw[-latex] (i2) to (o4);
				\draw[-latex] (i3) [bend right] to (o4);
				\draw[-latex] (i5) [bend right] to (o3);
				\draw[-latex] (i4) to (o1);
				\draw[-latex] (o1) to (i1);
				\draw[-latex] (o1) to (i2);
				\draw[-latex] (o3) to (i3);
				\draw[-latex] (o3) to (i4);
				\draw[-latex] (o3) [bend right] to (i5);
				\draw[-latex] (o4) [bend right] to (i3);
				\draw[-latex] (o4) to (i4);
				\draw[-latex] (o2) to (i2);
				\draw[-latex] (o2) [bend right] to (i1);
				\draw[-latex] (o5) [] to (i3);
				\draw[-latex] (o5) to (i5);
				\draw[-latex] (o5) to (i4);
			\end{tikzpicture}
		\end{subtable}
		\quad
		\begin{subtable}{.4\linewidth}
			\raggedright
			\begin{tabular}{ll}
				cycle1:& $ 3 \rightarrow d \rightarrow 3 $;\\ cycle2: & $ 5 \rightarrow c \rightarrow 5 $;\\ cycle3: & $ 1 \rightarrow c \rightarrow 4 \rightarrow a \rightarrow 1 $; \\
				cycle4: & $ 2 \rightarrow d \rightarrow 4 \rightarrow a \rightarrow 2 $; \\
				cycle5: & $ 1 \rightarrow c \rightarrow 3 \rightarrow d \rightarrow 4\rightarrow a \rightarrow 1 $.
			\end{tabular}
		\end{subtable}
	}
	\caption{Generated graph  in Example \ref{example:2}}\label{figure:example2}
\end{figure}

\end{example}

\begin{example}\label{example:3}
Consider four agents $ \{1,2,3,4\} $ and three objects $ \{a,b,c\} $. Agents have the endowments and preferences shown in the following tables. 
\begin{figure}[!ht]
	\centering
	\scalebox{.95}{
		\begin{subtable}{.2\linewidth}
			\centering
			\begin{tabular}[c]{c|ccc}
				& $a$ & $b$ & $c$  \\\hline
				$1$ & $1$ &  \\
				$2$ &  & $1$ &  \\
				$3$ &  & $1$ &  \\
				$4$ &  &  & $1$  \\
			\end{tabular}
			\subcaption{Endowments}
		\end{subtable}
		\quad
		\begin{subtable}{.2\linewidth}
			\centering
			\begin{tabular}[c]{cccc}%
				$\succsim_{1}$ & $\succsim_{2}$ & $\succsim_{3}$ & $\succsim_{4}$  \\\hline
				$a$ & $a$ & $c$ & $b$  \\
				$b$ & $c$ & $b$ & $c$  \\
				$c$ & $b$ & $a$ & $a$ 	\\
				
				& & \\
				
			\end{tabular}
			\subcaption{Preferences}
		\end{subtable}
		\quad
		\begin{subfigure}{.5\linewidth}
			\centering
			\begin{tikzpicture}[bend angle=15,xscale=.8,yscale=.8]
				\node (i11) at (0,-1) [agent] {$1$};
				\node (i21) at (2,0) [agent] {$2$};
				\node (i31) at (6,0) [agent] {$3$};
				\node (i41) at (4,0) [agent] {$4$};
				\node (a1) at (0,1) [object] {$a$};
				\node (b1) at (4,2) [object] {$b$};
				\node (c1) at (4,-2) [object] {$c$};
				
				\node at (0.5,0) {\footnotesize $1$};
				\node at (2.8,1.3) {\footnotesize $1$};
				\node at (5.2,1.3) {\footnotesize $1$};
				\node at (4.4,-1) {\footnotesize $1$};

				\draw[-latex] (i11) [bend left] to (a1);
				\draw[-latex] (i21) to (a1);
				\draw[-latex] (i31) to (c1);
				\draw[-latex] (i41) to (b1);
				\draw[-latex] (a1) [bend left] to (i11);
				\draw[-latex] (b1) to (i21);
				\draw[-latex] (b1) to (i31);
				\draw[-latex] (c1) to (i41);
			\end{tikzpicture}
			\subcaption{Generated graph in Example 2}\label{figure:illus:ex2}
		\end{subfigure}
	}
	\caption{Example \ref{example:3}}\label{table:example3}
\end{figure}

There is a unique IR, sd-efficient, and EENE allocation. IR requires that $ 1 $ must receive her endowment $ a $. Among the remaining agents, $ 2 $ and $ 3 $ own equal amounts of $ b $ and both most prefer $ 4 $'s endowment, while $4$ most prefers $b$. EENE requires that $ 2 $ and $ 3 $ must each provide $ 1/2b $ to $ 4 $ in exchange for $ 1/2 c$. So, they each receive the assignment $ (1/2b,1/2c) $.

If we follow the procedure of TTC, we would generate the graph in \autoref{figure:illus:ex2} at step one. Since the two cycles in the graph are disjoint, there is no difficulty to clear both. If clearing them, $ 1 $ would receive $ a $, and $ 3 $ and $4 $ would exchange their endowments. At step two, $ 2 $ would have to retain her endowment $ b $. However, this allocation violates EENE, since $ 2 $ and $ 3 $ own equal endowments, but $ 2 $ envies $ 3 $.
\end{example}

\section{Formal definition of our approach}\label{section:method}

This section outlines our approach to defining a trading process for a given market situation, which lays the foundation for defining trading mechanisms in various models, including those discussed in the paper and others not discussed. Before providing a formal definition, we first use the examples in Section \ref{section2:FEEmodel} to illustrate our approach.

\paragraph{Examples revisited} \autoref{figure:example2} represents a market situation in which agents want to exchange their endowments. To define a trading process for \autoref{figure:example2}, we introduce the following notations. For each agent $i$, let $x_i$ denote the amount of the object she demands that she receives through the trading process. For each object $o$, let $x_o$ denote the total amount of $o$ allocated to agents. It follows immediately that $x_o$ equals the sum of $x_i$ over all agents $i$ who demand $o$. On the other hand, all objects are supplied by their owners. Since we consider balanced exchange, the amount of objects supplied by each agent equals the amount of objects she receives. Our equations involve parameters that determine how the owners of each object divide the right to use the object for trading. Depending on the fairness criterion in the context, the mechanism designer may select different parameters. For \autoref{figure:example2}, let us use parameters requiring that the owners of each object supply equal amounts of the object for trading. Thus, each of 1 and 2 supplies $1/2 x_a$ of $ a$ and $1/2 x_b$ of $ b$. Similarly, each of 3, 4, and 5 supplies $1/3 x_c $ of $ c$ and $1/3 x_e$ of $ e$. Each of 3 and 4 additionally supplies $1/2 x_d $ of $ d$.  Thus, for each $i \in \{1,2\}$, we obtain the equation $x_i = 1/2 x_a + 1/2 x_b$. The equations for other agents are similar and presented below. Since agents cannot supply more objects than their endowments, we use the following inequalities to represent such constraints:

\begin{equation}\label{equation:ex1}
\scalebox{.9}{$\begin{cases}
		x_{a}=x_4,\\
		x_{b}=0,\\
		x_{c}=x_1+x_5,\\
		x_{d}=x_2+x_3,\\
		x_{e}=0,\\
		x_1=1/2x_{a} +1/2x_{b},\\
		x_2=1/2x_{a} +1/2x_{b}\\
		x_3=1/3x_{c} +1/2x_{d}+1/3x_{e},\\
		x_4=1/3x_{c} +1/2x_{d}+1/3x_{e},\\
		x_5=1/3x_{c} +1/3x_{e}.
	\end{cases}
	\text{subject to} \quad
	\begin{cases}
		1/2x_{a}\le 1/2,\\
		1/2x_{b}\le 1/2, \\
		1/3x_{c}\le 1/4, \\
		1/2x_{d}\le 1/2, \\
		1/3x_{e}\le 1/4.
	\end{cases}$
}
\end{equation}

There are more than one solution to this equation system subject to the constraints, but there is a unique coordinate-wise maximum solution satisfying the constraints:

\[
\scalebox{.9}{$\mathbf{x}^*=\left(
\begin{array}{cccccccccc}
	x^*_{1}& x^*_{2} & x^*_{3} & x^*_{4} & x^*_{5} & x^*_{a} & x^*_{b} & x^*_{c} & x^*_{d} & x^*_{e}\\
	1/3&1/3&2/3&2/3&1/6&2/3&0&1/2&1&0
\end{array}
\right).$
}
\]

Selecting the maximum solution  is equivalent to letting agents trade the largest possible amounts of objects. This solution means that 1 receives $1/3 c$ and supplies $1/3 a$, 2 receives $1/3 d$ and supplies $1/3 a$, 3 receives $2/3 d$ and supplies $1/6 c$ and $1/2 d$, 4 receives $2/3 a$ and supplies $1/6 c$ and $1/2 d$, and 5 receives $1/6 c$ and supplies $1/6 c$. Objects $b$ and $e$ are not involved in the trading process, since no agents demand them. After this trading process, object $ d $ is exhausted. We then can write similar equations for the trading processes at the remaining steps. The complete procedure for solving Example \ref{example:2} is presented in the Appendix \ref{section:BTM:procedure}. The found allocation is IR, sd-efficient, and satisfies EENE.

In \autoref{figure:illus:ex2}, we write similar equations and constraints.\footnote{That is, we write the equations $ x_a=x_1 +x_2$, $ x_b=x_4 $, $ x_c=x_3 $, $ x_1=x_a $, $ x_2=1/2x_b $, $ x_3=1/2x_b $, and $ x_4=x_c $, and the constraints $ x_a\le 1 $, $ 1/2x_b\le 1 $, and $ x_c\le 1 $.} In the maximum solution, $ x^*_1=x^*_a=1 $, and $ x^*_\ell=0 $ for all other agents or objects $ \ell $. This means that $ 1 $ receives $ a $ and other agents receive nothing. After $ 1 $ is removed with $ a $, at the next step, the maximum solution to another equation system requires that $ 2 $ and $ 3 $ each receive $ 1/2c $ and $ 4 $ receive $ b $. At the last step, $ 2 $ and $ 3 $ each retain $ 1/2b $. This leads to the unique IR, sd-efficient, and EENE allocation for Example \ref{example:3}.

In the housing market model, our approach is equivalent to clearing cycles: when each agent points to one object and each object points to one agent, for any agent or object $ \ell $, $ x^*_\ell=1 $ if and only if $ \ell $ is involved in a cycle.

\paragraph{Formal definition} 
To facilitate comparison with the traditional cycle-clearing approach, we represent a market situation using a directed graph. 
Consider a directed graph $\mathcal{G}= (\mathcal{V},\mathcal{E},\mathbf{q}) $ where $ \cv $ is a finite set of nodes, $ \ce\subseteq \cv \times \cv$ is a set of directed edges, and $ \mathbf{q}=(q_v)_{v\in \mathcal{V}} $ specifies the quotas for nodes. If $(u,v) \in \mathcal{E}$, it means that node $u$ points to node $v$. Depending on the context, $u$ might be demanding something from $v$, $u$ could be a resource owned by $v$, or other interpretations may apply. Each node points to at least one other node, and no node points to itself.\footnote{In some papers, only agents are represented by nodes, and an agent points to herself when demanding her own endowment. In this paper, agents and objects are represented by distinct nodes.} For every $v \in \mathcal{V}$, let $\mathcal{V}_{\rightarrow v} = \{u \in \mathcal{V} : (u, v) \in \mathcal{E}\}$ represent the set of nodes pointing to $v$, and $\mathcal{V}_{v \rightarrow} = \{u \in \mathcal{V} : (v, u) \in \mathcal{E}\}$ the set of nodes pointed by $v$. 

Nodes engage in trading subject to quotas. To define a trading process, for every $u \in \mathcal{V}$, let $x_u$ denote the total amount of something that $u$ obtains from the nodes in $\mathcal{V}_{u \rightarrow}$. Since $u$ may point to multiple nodes, we introduce parameters $ \{\lambda_{v,u}\}_{v\in \cv_{u\rightarrow}} $ to represent how $u$ divides its demand among those nodes. Specifically, $\lambda_{v,u}$ denotes the proportion of $x_u$ coming from $v$. Hence, $\lambda_{v,u} \in [0,1]$ and $\sum_{v \in \mathcal{V}_{u \rightarrow}} \lambda_{v,u} = 1$. Through choosing these parameters, we can control the trading process to achieve various objectives. For instance, we may set $\lambda_{v,u} = \frac{1}{|\mathcal{V}_{u \rightarrow}|}$, meaning that each node equally divides its demand among the nodes it points to. 
In the trading process, each node's supply equals its demand. For every $v \in \mathcal{V}$, its demand is $x_v$, while its supply is $\sum_{u \in \mathcal{V}_{\rightarrow v}} \lambda_{v,u} x_u$. Therefore, for all $v \in \mathcal{V}$, we require
$$ x_v=\sum_{u\in \mathcal{V}_{\rightarrow v}} \lambda_{v,u} x_{u}. $$
These equations characterize the outcome of the trading process in the graph. We solve the equations subject to the constraints $x_v \leq q_v$ for all $v \in \mathcal{V}$.

For convenience, let $ \bx=(x_v)_{v\in \cv} $ and $ \Lambda=(\lambda_{v,u})_{v,u\in \cv} $, where for all $ v,u\in \cv $, $ \lambda_{v,u}\in [0,1] $, $ \lambda_{v,u}=0 $ if $ v\notin \mathcal{V}_{u\rightarrow} $, and $ \sum_{v\in \cv} \lambda_{v,u}=1$. The equations and constraints can thus be written as $ \Lambda \bx=\bx $ and $ \mathbf{x}\le \mathbf{q} $.
\hide{\begin{equation}\label{equation:abstract}
	\Lambda \bx=\bx,\quad \text{ subject to } \quad \mathbf{x}\le \mathbf{q}.
\end{equation}
}
We say that a solution $ \mathbf{x}^* $ to $ \Lambda \bx=\bx $ is the \textbf{maximum solution} if  $ \bx^*\le \mathbf{q} $, and for every other solution $ \mathbf{x}^\diamond $ satisfying $ \bx^\diamond\le \mathbf{q} $, $ \mathbf{x}^*_v\ge \mathbf{x}^\diamond_v $ for all $ v\in \cv$,  and $ \mathbf{x}^*_v> \mathbf{x}^\diamond_v $ for some $ v\in \cv $.
Lemma \ref{thm:existence} proves the existence of the maximum solution.

\begin{lemma}[\citealp{peterson1982leontief,leonlinear}]\label{thm:existence}
The maximum solution $ \mathbf{x}^* $ to $ \Lambda \bx=\bx $ subject to $ \mathbf{x}\le \mathbf{q} $ exists and is nonnegative.
\end{lemma}

The equation system $\Lambda \mathbf{x} = \mathbf{x}$ describes an ``equilibrium'' similar to the classical Leontief model \citep{leontief1941}. In the Leontief model, finite industries produce goods using inputs from other industries, where the input-output relationships are expressed as linear equations. When the market reaches equilibrium, the coefficient matrix of the equation system is stochastic: each element is in $[0,1]$ and each column sums to one. The coefficient matrix of $\Lambda \mathbf{x} = \mathbf{x}$ is also stochastic. Results on the Leontief model implies the existence of solutions to $\Lambda \mathbf{x} = \mathbf{x}$. We provide a self-contained proof of Lemma \ref{thm:existence} in the Online Appendix.

Trading mechanisms defined in our approach follow the procedure outlined below.

\begin{center}
\textbf{A trading mechanism in our approach}
\end{center}

\begin{itemize}

\item[] \textbf{Step} $ d\ge 1 $:  Given the market situation described by a directed graph $ \mathcal{G}(d)= (\mathcal{V}(d),\mathcal{E}(d),\mathbf{q}(d)) $ and the parameters $ \Lambda(d) $ selected by the mechanism designer, we solve
\begin{equation*}
	\Lambda(d)\mathbf{x}=\mathbf{x}, \quad \text{ subject to } \quad \bx\le \mathbf{q}(d).
\end{equation*}
Let the maximum solution $ \bx^* $ become the outcome of the trading process at this step. Then, for every $ v\in \mathcal{V}(d) $, update its quota: $ q_v(d+1)=q_v(d)-x^*_v $. If a node's quota becomes zero, remove the node. Go to the next step. Stop when all nodes are removed.
\end{itemize}

At each step of the above procedure, at least one node's quota is exhausted. Thus, trading mechanisms defined in our approach must stop in finite steps. Within each step, since linear equations can be solved in polynomial time using well-developed methods and software, computation is not a concern for our approach.


\paragraph{Graphical explanation} While our approach does not identify any graphical component to describe how trades occur, it admits a graphical interpretation that generalizes the well-known description of TTC. We show that any trading process defined in our approach corresponds to clearing \textbf{absorbing sets} in a directed graph.\footnote{Absorbing set is a standard concept in graph theory.  In the housing market model with weak preferences, it has been used by \cite{quint2004houseswapping} to discuss the nonemptiness of the strong core and by \cite{alcalde2011exchange} to propose an extension of TTC.}  Absorbing sets can be viewed as generalizations of cycles. They satisfy conditions provided below, which imply that: (1) in a directed graph, absorbing sets must exist and be disjoint; (2) the nodes in each absorbing set must exchange resources exclusively among themselves, while those not belonging to any absorbing set do not participate in any trading.\footnote{To see that these facts are implied by Definition \ref{definition:absorbing set}, note first that the entire graph has no outgoing edges. If a subset of nodes has no outgoing edges but is not inside connected, it must contain a strict subset with no outgoing edges. Finally, every singleton set is inside connected. Thus, absorbing sets must exist. If two absorbing sets share a node, we can find a directed path between any two nodes in different sets, violating the ``no outgoing edge'' condition. To see that the nodes within each absorbing set must trade among themselves, note that those nodes must obtain nothing from the outside nodes, since there is no outgoing edge. Then, outside nodes must also obtain nothing from those nodes, because otherwise balanced exchange is violated.}


\begin{definition}\label{definition:absorbing set}
In a directed graph $ (\mathcal{V},\mathcal{E},\mathbf{q}) $, a subset $ V\subseteq \cv $ is an \textbf{absorbing set} if
\begin{enumerate}
	
	\item (No outgoing edge) There is no directed path from any node $v\in V $ to any node $ u\notin V $;\footnote{A \textit{directed path} from a node $ v $ to another node $ u $ is a sequence of nodes $ v_1,v_2,\ldots,v_z $ such that $ v_1=v $, $ v_z=u $, and for every $ \ell\in \{1,\ldots,z-1\} $, $ v_\ell $ points to $ v_{\ell+1} $.}
	
	\item (Inside connected) Within $ V $, a directed path from every node to every other node exists.
\end{enumerate}
\end{definition}

In \autoref{figure:example2}, there is a unique absorbing set, $ \{1,2,3,4,5,a,c,d\} $. These are nodes involved in cycles in the graph.  In \autoref{figure:illus:ex2}, there is a unique absorbing set, $ \{1,a\} $. These two nodes form a cycle in the graph. However, the other cycle, $\{3, 4, b, c\}$, is not an absorbing set.

Unlike disjoint cycles that directly indicate how trades should occur, absorbing sets can be complex and do not indicate how the nodes within them should trade with each other. To find the answer, we need to solve $\Lambda \mathbf{x} = \mathbf{x}$. Actually, the maximum solution $\mathbf{x}^*$ to $\Lambda \mathbf{x} = \mathbf{x}$ is obtained by solving the equation system for each absorbing set individually. Specifically, let $V_1, \ldots, V_K$ represent the absorbing sets in a graph, and let $U$ denote the set of nodes outside all absorbing sets. For each absorbing set $V$, let $\Lambda_V$ denote the restriction of $\Lambda$ to the nodes in $V$, and let $ \mathbf{x}^*_{V} $ denote the maximum solution to $ \Lambda_V\mathbf{x}_V=\mathbf{x}_V  $ subject to quota constraints.\footnote{Each $\Lambda_V$ has an eigenvalue of 1, with a positive eigenvector $\tilde{\mathbf{x}}_V$ satisfying $\Lambda_V \mathbf{x}_V = \mathbf{x}_V$. Then, the quota constraints determine the maximum solution $ \mathbf{x}^*_{V} $. See the Online Appendix for details.} Then, the maximum solution $ \mathbf{x}^* $ to $ \Lambda\mathbf{x} = \mathbf{x} $ is 
$
\mathbf{x}^*=(\mathbf{x}^*_{V_1},\ldots,\mathbf{x}^*_{V_K},\mathbf{0}_{U}).
$

\section{Trading mechanisms for fractional endowment exchange}\label{section:FEEmech}

In this section, we apply the approach developed in Section \ref{section:method} to define a class of mechanisms, called \textbf{balanced trading mechanisms} (BTMs), for the FEE model.

Some notations are useful for describing the procedure of BTM. Given an FEE economy $ (I,O,\succ_I, \w) $, at the beginning of each step $ d$, $ I(d) $ denotes the set of remaining agents; $ O(d) $ denotes the set of remaining objects; $ \w(d)=(\w_{i,o}(d))_{i\in I(d),o\in O(d)} $ denotes remaining agents' endowments; $ p(d)=(p_{i,o}(d))_{i\in I,o\in O} $ denotes the found allocation up to the beginning of step $ d $. For every $ i\in I(d) $, $ o_i(d) $ denotes $ i $'s favorite object among $ O(d) $. 

In the trading process at step $ d $, for every $ i\in I(d) $, let $ x_i(d) $ denote the amount of $ o_i(d) $ that $ i $ receives; for every $ o\in O(d) $, let $ x_o(d) $ denote the total amount of $ o $ assigned to agents. On the supply side, the amount $ x_o(d) $ of $ o $ are supplied by its owners. For every $ i\in I(d) $, we use a parameter $ \lambda_{i,o}(d) $ to denote the proportion of $x_o(d) $ supplied by $ i $. Thus, we impose the following conditions on the parameters: for all $ o\in O(d) $ and all $ i\in I(d) $,
\begin{center}
$ \lambda_{i,o}(d)\in [0,1] $, $ \sum_{i\in I(d)}\lambda_{i,o}(d)=1 $, and $ \lambda_{i,o}(d)>0 $ only if $ \w_{i,o}(d)>0 $.
\end{center}  

We impose constraints $ \lambda_{i,o}(d) x_o(d)\le \w_{i,o}(d) $ to ensure that agents do not supply more objects than their endowments.


\begin{center}
\textbf{Balanced Trading Mechanism}
\end{center}

\begin{itemize}
\item[] \textbf{Initialization}: For any $ (I,O,\succ_I, \w) $, let
$ I(1)=I $, $ O(1)=O $, $ \w(1)=\w$, and $ p(1)=\mathbf{0} $.

\item[] \textbf{Step} $ d\ge 1 $: Each $ i\in I(d) $ demands her favorite remaining object $ o_i(d) $. Given parameters $ \Lambda(d)=(\lambda_{i,o}(d))_{i\in I(d),o\in O(d)} $ selected by the mechanism designer, we solve the equation system 
\begin{equation}\label{equation7}
	\begin{cases}
		x_o(d)=\sum_{i\in I(d)} \mathbf{1}\{o_i(d)=o\} x_i(d)\footnotemark  & \text{for all } o\in O(d),\\
		x_i(d)=\sum_{o\in O(d)} \lambda_{i,o}(d) x_o(d) &  \text{for all }i\in I(d),
	\end{cases}
\end{equation}
\footnotetext{By convention, $ \mathbf{1}\{o_i(d)=o\}=1 $ if $ o_i(d)=o $ and otherwise $ \mathbf{1}\{o_i(d)=o\}=0 $.
}
subject to the constraints
\begin{equation}\label{equation9}
	\lambda_{i,o}(d) x_o(d) \le \w_{i,o}(d) \quad \text{ for all }i\in I(d) \text{ and }o\in O(d).
\end{equation}

Let $ \mathbf{x}^*(d)=(x^*_a(d))_{a\in I(d)\cup O(d)} $ denote the maximum solution. Let agents trade endowments according to $ \bx^*(d)$. Then, for all $ i\in I $ and all $ o\in O $, if $ i\in I(d) $ and $ o=o_i(d) $, let $  p_{i,o}(d+1)= p_{i,o}(d)+x^*_i(d)$; otherwise, let $ p_{i,o}(d+1)= p_{i,o}(d) $. For all $ i\in I(d) $ and all $ o\in O(d) $, 
let $ \w_{i,o}(d+1)=\w_{i,o}(d)-\lambda_{i,o}(d)x_o(d) $. 

Let $ I(d+1)=\{i\in I(d):\sum_{o\in O} \w_{i,o}(d+1)>0\} $ and $ O(d+1)=\{o\in O(d):\sum_{i\in I(d+1)} \w_{i,o}(d+1)>0\} $. If $ O(d+1) $ is empty, stop the algorithm and return $p(d+1)$. Otherwise, go to step $ d+1 $.
\end{itemize}

The equation system (\ref{equation7}) and the constraints (\ref{equation9}) are special cases of the general approach in Section \ref{section:method}. Hence, the maximum solution $\mathbf{x}^*(d)$ at each step $d$ exists. In the maximum solution, some constraint $\lambda_{i,o}(d) x_o(d) \leq \w_{i,o}(d)$ must bind, implying that $\w_{i,o}(d+1) = 0$. This means that at least one agent's endowment of an object is exhausted at each step. Therefore, the mechanism must stop in at most $|I| \times |O|$ steps.

BTM is a class of mechanisms because, by varying the equation parameters $ \Lambda(d) $, we obtain different trading processes and allocations for an economy. All of these mechanisms are IR because, at each step, every agent receives an amount of her favorite object by losing an equal amount of endowments. Aggregating all steps, the assignment she receives must weakly stochastically dominate her endowment. All of these mechanisms are sd-efficient because all agents follow their preference orders to demand objects in the algorithm. This ensures that no welfare-improving trade opportunities remain in the found allocation.

\begin{proposition}\label{prop:BTM:efficiency}
Every BTM is individually rational and sd-efficient.
\end{proposition}

However, achieving fairness requires careful selection of the equation parameters. Section \ref{section:BTM:property} is dedicated to discussing the fairness properties of BTMs. Intuitively, $\Lambda(d)$ determines how the owners of each object divide the right to use that object for trading at step $ d $. It is unsurprising that a BTM may fail to find a fair allocation if the parameters are inappropriately chosen. For example, if two agents are identical in terms of endowments and preferences but the parameters favor one agent over the other, then the favored agent will receive a more favorable assignment. Section \ref{section:BTM:property} provides sufficient conditions on $ \Lambda(d) $ to ensure that the outcomes of BTMs satisfy various fairness axioms.

An advantage of our definition of BTMs is its flexibility. It allows mechanism designers to select their preferred $ \Lambda(d) $ based on various factors, such as endowments, preferences, steps, or even the names of agents and objects. This flexibility may be useful in some contexts. However, when aiming for fairness, it might be unintuitive if the parameters depend on factors (e.g., the names of agents) that are irrelevant to the economy's fundamentals. Section \ref{section:BTM:regular} studies a subclass of BTMs in which parameters are selected based solely on the distribution of endowments. Such BTMs are called regular.

Among the class of BTMs, we are particularly interested in one of them, called \textbf{Equal-BTM} and denoted by $ \psi^{E} $, as it embodies an intuitive fairness principle. If mechanism designers in specific applications are uncertain about which BTM to adopt, we recommend using $ \psi^E $. At each step $ d $, $ \psi^E $ uses the parameters:
\[
\lambda_{i,o}(d)=\begin{cases}
\dfrac{1}{|\{j\in I(d):\w_{j,o}(d)>0\}|} & \text{if }\w_{i,o}(d)>0; \\
0 & \text{if }\w_{i,o}(d)=0.
\end{cases}
\]
In words, at each step, the remaining owners of each remaining object use equal amounts of the object for trading, even though they may own different amounts of the object. This implies that if two agents initially own equal amounts of an object $ o $, they will use up their owned amounts of $o$ at the same step; if an agent $ i $ owns more of $ o $ than the other $ j $, they will be treated equally until the step in which $j$ exhausts her owned amount of $ o $, and only after that step $i$ can use her additional amount of $o$ for trading. 
$ \psi^E $ degenerates to TTC in the housing market model and to PS in the house allocation model.

$ \psi^E $ is regular and satisfies all fairness conditions outlined in Section \ref{section:BTM:property}. We study its more precise fairness properties in Section \ref{section:fairness of Equal-BTM} and characterize it in Section \ref{section:characterization of Equal-BTM}.

Section \ref{section:BTM:incentive} discusses the incentive properties of BTMs.

\subsection{Fairness of BTM}\label{section:BTM:property}

ETE and EENE are standard fairness axioms, but they have restrictive power in the FEE model, since many agents may have unequal endowments. In scenarios where no agents have equal endowments, these axioms have no restrictions. This motivates us to introduce a new fairness axiom that generalizes the idea of EENE but applies to any two agents.

The basis of our idea is that envy between two agents is acceptable if they own unequal endowments, but any advantage one agent has over the other must be attributable to the difference in their endowments. For instance, if two agents have slightly different endowments, then neither of them should receive a significantly better assignment than the other's.  Our fairness axiom, \textbf{bounded envy}, stipulates that any agent's envy toward another must be bounded by the envied agent's relative advantage in endowments.

Formally, given an economy, in an allocation $ p $, if an agent $ i $ envies another $ j $ (i.e., $ p_i\not\succsim^{sd}_i p_j $), there must exist $ o\in O $ such that $ \sum_{o'\succsim_i o} p_{j,o'}>\sum_{o'\succsim_i o}p_{i,o'} $. We measure $ i $'s envy towards $ j $ by $ \max_{o\in O} \big[\sum_{o'\succsim_i o} p_{j,o'}-\sum_{o'\succsim_i o}p_{i,o'}\big] $, which represents the maximum cumulative advantage of $ j $'s assignment relative to $ i $'s, according to $ i $'s preferences. We measure $ j $'s advantage in endowments relative to $ i $ by $ \sum_{o\in O:\w_{j,o}> \w_{i,o}}  \big(\w_{j,o}-\w_{i,o} \big) $, which can be viewed similarly as the maximum cumulative advantage of $ j $'s endowments. Note that for $ i $ to envy $ j $, it is not necessary for $ j $ to own more of all objects than $ i $; it suffices that $ j $ owns more of certain objects that can be traded for objects preferred by $ i $. Our measure sums up over all such possible objects.

\begin{definition}\label{defn:bounded:envy}
An allocation $ p $ satisfies \textbf{bounded envy} if, for every distinct $ i,j \in I$,
\begin{equation}\label{equation:boundedenvy}
	\max_{o\in O} \big[\sum_{o'\succsim_i o} p_{j,o'}-\sum_{o'\succsim_i o}p_{i,o'}\big]\le \sum_{o\in O:\w_{j,o}> \w_{i,o}}  \big(\w_{j,o}-\w_{i,o} \big).
\end{equation}
\end{definition}

To illustrate this concept, consider an economy where $ i $ owns one unit of $ a $ and another $ j $ owns $ (1/2b, 1/2c) $, and both regard $ a $ as the worst object. In the unique IR allocation, the two agents receive their own endowments, and $ i $ envies $ j $. Notice that $ j $ does not own more of all objects than $ i $, but the objects $ j $ owns more of contribute to both the measure of $ i $'s envy towards $ j $ and the measure of $ j $'s advantage in endowments. Specifically, we have $ \max_{o\in O} \big[\sum_{o'\succsim_i o} p_{j,o'}-\sum_{o'\succsim_i o}p_{i,o'}\big] =  \sum_{o\in O:\w_{j,o}> \w_{i,o}}  \big(\w_{j,o}-\w_{i,o} \big)=1$. This example demonstrates that the inequality (\ref{equation:boundedenvy}) in Definition \ref{defn:bounded:envy} can bind.

Bounded envy implies EENE. For any two agents with equal endowments, the right-hand side of (\ref{equation:boundedenvy}) is zero, and thus the left-hand side must be zero, meaning no envy between them.

Now, we provide sufficient conditions on parameters to ensure that BTM outcomes satisfy various fairness axioms. In these conditions, when we impose $ \lambda_i(d)=\lambda_j(d) $ for two agents $ i $ and $ j $, it means that for all $o \in O(d)$, $\lambda_{i,o}(d) = \lambda_{j,o}(d)$. It implies that, at step $ d $, $ x_i(d)=x_j(d) $, meaning that the two agents obtain equal amounts of their respective favorite objects, and that for all $ o\in O(d) $, $ \w_{i,o}(d)-\w_{i,o}(d+1)=\w_{j,o}(d)-\w_{j,o}(d+1)$, meaning that the two agents lose equal amounts of each object in their endowments. Simply speaking, the two agents are treated equally at step $ d $.

\begin{definition}\label{defn:fairness:parameter}
A BTM satisfies
\begin{itemize}
	\item[(1)]  \textbf{stepwise equal treatment of equals} (stepwise ETE) if, in any economy, at each step $ d $, for every $ i,j\in I(d) $ such that $ \w_i(d)=\w_j(d)$ and  $ o_i(d)=o_j(d) $, $\lambda_i(d)=\lambda_j(d) $.
	
	\item[(2)]  \textbf{stepwise equal-endowment equal treatment} (stepwise EEET) if, in any economy, at each step $ d $, for every $ i,j\in I(d) $ such that $ \w_i(d)=\w_j(d) $, $ \lambda_i(d)=\lambda_j(d) $.
	
	\item[(3)] \textbf{bounded advantage} if, in any economy, at each step $ d $, for every $ i,j\in I(d) $ and every $ o\in O(d) $ such that $ \w_{i,o}(d)\ge \w_{j,o}(d)>0 $, $ \frac{\w_{i,o}(d)}{\w_{j,o}(d)} \ge \frac{\lambda_{i,o}(d)}{\lambda_{j,o}(d)} \ge 1 $.
\end{itemize}
\end{definition}

The first two conditions apply the ETE idea to the procedure of BTMs.  Stepwise ETE requires that if any two agents own equal amounts of all objects at the beginning of step $ d $ and most prefer the same object at the step, they are treated equally at the step. Stepwise EEET imposes this condition on any two agents who own equal amounts of all objects but not necessarily prefer the same object. 
Bounded advantage generalizes stepwise EEET. At any step $ d $, if $ \w_{i,o}(d)\ge \w_{j,o}(d) >0$, then $\frac{\lambda_{i,o}(d)}{\lambda_{j,o}(d)} \ge 1 $ implies that $  \lambda_{i,o}(d)x_o(d)-\lambda_{j,o}(d)x_o(d) \ge 0 $, meaning that $ i $ uses weakly more of $ o $ than $ j $'s for trading at step $ d $, while $ \frac{\w_{i,o}(d)}{\w_{j,o}(d)} \ge \frac{\lambda_{i,o}(d)}{\lambda_{j,o}(d)} $ implies that $ \lambda_{i,o}(d)x_o(d)-\lambda_{j,o}(d)x_o(d) \le  \w_{i,o}(d)-\w_{j,o}(d)$,\footnote{$ \lambda_{i,o}(d)x_o(d)-\lambda_{j,o}(d)x_o(d)= \lambda_{j,o}(d)x_o(d)\big(\frac{\lambda_{i,o}(d)}{\lambda_{j,o}(d)}-1\big) \le \w_{j,o}(d) \big(\frac{\w_{i,o}(d)}{\w_{j,o}(d)}-1\big)= \w_{i,o}(d)-\w_{j,o}(d)$.} meaning that the more amount of $ o $ that $ i $ uses for trading does not exceed the more of $ o $ she owns relative to $ j $.

\begin{proposition}\label{prop:BTA:fairness}
(1) Every BTM satisfying stepwise ETE satisfies ETE.

(2) Every BTM satisfying stepwise EEET satisfies EENE.

(3) Every BTM satisfying bounded advantage satisfies bounded envy.
\end{proposition}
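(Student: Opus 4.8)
The plan is to reduce all three parts to one bookkeeping identity. For an agent $k$, object $o$, and step $d$, write $g_{k,o}(d):=\lambda_{k,o}(d)x_o(d)=\w_{k,o}(d)-\w_{k,o}(d+1)$ for the amount of $o$ that $k$ gives up at step $d$, with the convention $g_{k,o}(d)=0$ and $x_k(d)=0$ once $k$ has left. Then $x_k(d)=\sum_{o\in O(d)}g_{k,o}(d)$ and $p_{k,o}=\sum_{d:\,o_k(d)=o}x_k(d)$, so summing over an upper contour set gives $\sum_{o'\succsim_k o}p_{k,o'}=\sum_{d}\mathbf{1}\{o_k(d)\succsim_k o\}\,x_k(d)$ for every $o$. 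I would also use repeatedly the elementary fact that if $k\in I(d)$ then $o_k(d)$ is the $\succ_k$-best object in $O(d)$; in particular, if $i,j\in I(d)$ and $o_j(d)\succsim_i o$, then (since $o_j(d)\in O(d)$) $o_i(d)\succsim_i o_j(d)\succsim_i o$.

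For parts (1) and (2) the first step is to show by induction on $d$ that $\w_i=\w_j$ implies $\w_i(d)=\w_j(d)$. The inductive step is short: equal remaining endowments --- together, under stepwise ETE, with the fact that equal preferences force $o_i(d)=o_j(d)$ --- give $\lambda_i(d)=\lambda_j(d)$, hence $x_i(d)=x_j(d)$ from the equation $x_k(d)=\sum_o\lambda_{k,o}(d)x_o(d)$, hence $\w_i(d+1)=\w_j(d+1)$; in particular $i$ and $j$ stay or leave together. For (1), equal preferences give $o_i(d)=o_j(d)$ at every step, so the assignment updates coincide and $p_i=p_j$, which is ETE. For (2), the identity above and the elementary fact give, for every $o$, $\sum_{o'\succsim_i o}p_{j,o'}=\sum_d\mathbf{1}\{o_j(d)\succsim_i o\}x_i(d)\le\sum_d\mathbf{1}\{o_i(d)\succsim_i o\}x_i(d)=\sum_{o'\succsim_i o}p_{i,o'}$, so $p_i\succsim^{sd}_i p_j$; by symmetry $p_j\succsim^{sd}_j p_i$, i.e.\ EENE.

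Part (3) is the substantive one. I would first prove a dichotomy for bounded advantage: for every object $o$, either $\w_{j,o}(d)\ge\w_{i,o}(d)$ for all $d$, or $\w_{i,o}(d)\ge\w_{j,o}(d)$ for all $d$ --- because bounded advantage propagates each of these weak inequalities from step $d$ to $d+1$, so the step sets $\{d:\w_{j,o}(d)<\w_{i,o}(d)\}$ and $\{d:\w_{i,o}(d)<\w_{j,o}(d)\}$ are disjoint initial segments, whence at most one is nonempty. Let $O^j$ be the objects of the first type, so that $\sum_o(\w_{j,o}-\w_{i,o})^+=\sum_{o\in O^j}(\w_{j,o}-\w_{i,o})$; on $O^j$, the other clause of bounded advantage gives $\lambda_{j,o}(d)\ge\lambda_{i,o}(d)$, hence $g_{j,o}(d)\ge g_{i,o}(d)\ge0$ at every step. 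Fixing a threshold $o^\ast$ and $U=\{o':o'\succsim_i o^\ast\}$, I would bound $\sum_{o'\in U}p_{j,o'}-\sum_{o'\in U}p_{i,o'}$ as follows: by the elementary fact one may drop from $\sum_{o'\in U}p_{i,o'}$ all steps with $o_i(d)\notin U$ without increasing it; and at each surviving step one bounds $x_j(d)-x_i(d)$ --- or, at a step where $i$ has left but $j$ has not, $x_j(d)$ itself, using that then every $O\setminus O^j$ endowment is already exhausted for both agents --- by $\sum_{o\in O^j}\bigl(g_{j,o}(d)-g_{i,o}(d)\bigr)$, since for $o\notin O^j$ the term $g_{j,o}(d)-g_{i,o}(d)$ is $\le0$. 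Hence the difference is at most $\sum_d\sum_{o\in O^j}\bigl(g_{j,o}(d)-g_{i,o}(d)\bigr)$; every summand being nonnegative, summing over all steps only increases it, and telescoping in $d$ (endowments eventually vanish) yields $\sum_{o\in O^j}(\w_{j,o}-\w_{i,o})$. As the right-hand side of the bounded-envy inequality does not depend on $o^\ast$, taking the maximum over $o^\ast$ finishes the proof.

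The main obstacles I anticipate are (a) the dichotomy --- the one place the endowment-preservation clause of bounded advantage is used, and the basis of the object-by-object accounting --- and (b) the bookkeeping at steps where only one of $i,j$ is still active: one must verify, explicitly if trivially, that $\w_{j,o}(d)\ge\w_{i,o}(d)$ and $g_{j,o}(d)\ge g_{i,o}(d)$ on $O^j$ and the per-step bound on $x_j(d)$ all persist there, so that the telescoped sum closes at exactly $\sum_o(\w_{j,o}-\w_{i,o})^+$ rather than something larger. Parts (1) and (2) should be routine once the equal-endowment induction and the identity are in place.
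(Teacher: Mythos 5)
Your proposal is correct and follows essentially the same route as the paper's proof: the induction on equal remaining endowments for (1)--(2), the upper-contour-set accounting for no envy, and, for (3), the observation that bounded advantage propagates the object-by-object ordering of endowments and of the amounts given up, which telescopes to $\sum_{o}(\w_{j,o}-\w_{i,o})^{+}$. The only difference is presentational: the paper truncates the accounting at the step after which the upper contour set of $o^{*}$ is exhausted and compares aggregate endowment changes there, splitting objects by the initial comparison $\w_{i,o}\ge\w_{j,o}$ versus $\w_{i,o}<\w_{j,o}$, whereas you do the bookkeeping step by step over the whole run via your dichotomy; both rest on the same two clauses of bounded advantage.
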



$\psi^{E}$ is an example of BTMs that satisfies all the conditions defined in Definition \ref{defn:fairness:parameter}.

Another example is the \textbf{Proportional-BTM} (denoted by $\psi^{P}$), which uses the parameter $
\lambda_{i,o}(d)=\frac{\w_{i,o}(d)}{\sum_{j\in I(d)}\w_{j,o}(d)}
$. In words, at each step, the owners of each object use amounts in proportion to their endowments of the object for trading.

BTMs degenerate to well-studied mechanisms in classical models. In the housing market model, every BTM coincides with TTC. In the house allocation model, \cite{bogomolnaia2001new} propose the class of simultaneous eating algorithms (SEAs) to find sd-efficient allocations.  In a SEA, agents ``eat'' probability shares of objects continuously in time. An intuitively fair SEA is the PS algorithm, which gives agents equal eating rates.

When the house allocation model is viewed as a special case of the FEE model where agents own equal divisions of all objects, IR becomes equivalent to \textit{equal-division lower bound}, which is a fairness axiom used in the literature requiring that each agent's received assignment weakly stochastically dominates the equal division. We prove that every BTM coincides with a SEA satisfying equal-division lower bound, and every BTM satisfying stepwise EEET coincides with PS.

\begin{proposition}\label{prop:house allocation:housing market}
	(1) In the housing market model, every BTM coincides with TTC.
	
	(2) In the house allocation model, every BTM coincides with a simultaneous eating mechanism satisfying equal-division lower bound, and every BTM satisfying stepwise EEET coincides with the probabilistic serial mechanism.
\end{proposition}

\subsection{BTM with endowment-dependent parameters}\label{section:BTM:regular}

Since we are considering an endowment exchange model, it is intuitive to focus on a subclass of BTMs where the parameters at each step depend only on the distribution of endowments at the step, and the dependence is governed by an exogenous rule.

Given an endowment matrix $ \w=(\w_{i,o})_{i\in I, o\in O} $, for each $ o \in O$, we use a multiset, denoted by $ \langle \w_o \rangle $, to represent the distribution of agents' endowments of $ o $. Let $ \langle w_o \rangle = \{\!\!\{ w_{i,o} : i \in I,\, w_{i,o} > 0 \}\!\!\} $, where an element appears $ k $ times in $ \langle \w_o \rangle $ if and only if $ k $ agents own that amount of $ o $. Importantly, the multiset does not include the identities of the owners, and the order of elements does not matter. For convenience, we order the elements of $ \langle \w_o \rangle $ in non-decreasing order. For instance, if there are four agents who own zero of $ o $, $ 1/2o $, $ 1/3o $, and $ 1/3o $, respectively, then $ \langle \w_o \rangle=\{\!\!\{1/3,1/3,1/2\}\!\!\} $. We use $ \mathcal{D} $ to denote the set of all multisets that represent the distribution of any object in all possible economies. 

\begin{definition}
A BTM $ \psi $ is \textbf{regular} if there exists a function $ f: [0,1]\times \mathcal{D} \rightarrow [0,1] $ such that for every $ D\in \mathcal{D} $, $ f(0,D)=0 $ and $ \sum_{x\in D} f(x,D)=1 $, and for any endowment matrix $ \w(d)=(\w_{i,o}(d))_{i\in I(d), o\in O(d)} $ at the beginning of any step $ d $, $ \psi $ selects the parameters $ \lambda_{i,o}(d) =f(\w_{i,o}(d),\langle \w_o(d) \rangle)$ for all $ i\in I(d) $ and all $ o\in O(d) $.
\end{definition}


For any regular BTM $ \psi $, let $ f^\psi $ denote the function governing its parameters and call it the parameter function of $ \psi $. For instance, $ \psi^E $ is a regular BTM that uses a constant function: for every $ D\in \mathcal{D} $ and every $ x\in D $,  $ f^{\psi^E}(x,D)=\frac{1}{|D|} $. $ \psi^P $ is another regular BTM that uses a proportional function: for every $ D\in \mathcal{D} $ and every $ x\in D $, $ f^{\psi^P}(x,D)=\frac{x}{\sum_{y\in D}y} $.

All regular BTMs satisfy EENE, because they satisfy stepwise EEET. Moreover, all regular BTMs are \textbf{anonymous} and \textbf{neutral}, because their parameters do not depend on the identities of agents and objects. Formally, a mechanism $ \psi $ is \textit{anonymous} if, for any two economies $ \Gamma= (I,O,\succ_I, \w) $ and $ \Gamma'= (I,O,\succ'_I, \w') $, if there exists a bijection $ g: I\rightarrow I $ such that for all $ i\in I $, $ \succ_i=\succ'_{g(i)} $ and $ \w_i=\w'_{g(i)} $, then for all $ i\in I $, we have $ \psi_i(\Gamma)=\psi_{g(i)}(\Gamma') $. It is \textit{neutral} if, for any two economies $ \Gamma= (I,O,\succ_I, \w) $ and $ \Gamma'= (I,O,\succ'_I, \w') $, if there exists a bijection $ g: O\rightarrow O $ such that for all $ i\in I $ and all distinct $ o,o'\in O $, $ o \succ_i o'$ if and only if $g(o)\succ'_{i} g(o') $, and $ \w_{i,o}=\w'_{i,g(o)} $, then for all $ i\in I $ and all $ o\in O $, we have $ \psi_{i,o}(\Gamma)=\psi_{i,g(o)}(\Gamma') $.

\begin{proposition}\label{prop:regular:fairness}
(1) Every regular BTM is anonymous, neutral, and satisfies EENE. 

(2) A regular BTM $ \psi $ satisfies bounded envy if its parameter function $ f^\psi $ satisfies that, for all $ D\in \mathcal{D} $ and all $ x,y\in D $ such that $ x\ge y $,  $ \frac{x}{y}\ge \frac{f^\psi(x,D)}{f^\psi(y,D)}\ge 1 .$
\end{proposition}

Another appealing property of regular BTM is decomposability: in any economy, splitting any agent $ i $ into multiple sub-agents each owning only one of $ i $'s original endowments does not change $ i $'s assignment, which now equals the combination of the assignments received by these sub-agents, as well as the other agents' assignments. This holds because splitting any agent's endowments does not change the distribution of objects in agents' endowments.

Formally, for any economy $ \Gamma= (I,O,\succ_I, \w) $ and any $ i\in I $, let $ \Gamma_{[i\rightarrow I']}$ denote a new economy $ (I\backslash \{i\} \cup I',O,(\succ_{I\backslash \{i\}},\succ_{I'}), (\w_{I\backslash \{i\}},\w_{I'})) $, where $ I' $ is a set of agents representing the decomposition of $ i $ such that $ |I'|=|\mathrm{supp}(\w_i)| $, $ \sum_{i'\in I'}\w_{i'}=\w_i $, and for all distinct $ i',i''\in I' $, $ \succ_{i'}= \succ_{i''}=\succ_i $ and $ \mathrm{supp}(\w_{i'})\cap \mathrm{supp}(\w_{i''})=\emptyset $.

\begin{definition}
A mechanism $ \psi $ satisfies \textbf{decomposability} if for any FEE economy $ \Gamma= (I,O,\succ_I, \w) $ and any $ i\in I $, $ \psi_i(\Gamma)=\sum_{i'\in I'}\psi_{i'}(\Gamma_{[i\rightarrow I']}) $, and for all $ j\neq i$, $ \psi_j(\Gamma)=\psi_{j}(\Gamma_{[i\rightarrow I']}) $.
\end{definition} 

\begin{lemma}\label{lemma:regular:decomposability}
Every regular BTM satisfies decomposability.
\end{lemma}

Given an economy, if we split every agent into multiple sub-agents each owning only one of the agent's original endowments, we obtain a new economy in which every two agents either own distinct objects, or own the same object but possibly in different amounts. To pin down a regular BTM, it is sufficient to specify its allocations for such simplified economies.

\begin{definition}
An FEE economy $ (I,O,\succ_I, \w) $ is called \textbf{simple} if, for every $ i\in I $, there exists at most one $ o\in O $ such that $ \w_{i,o}>0 $.
\end{definition}

\subsection{Fairness of Equal-BTM in simple FEE economies}\label{section:fairness of Equal-BTM}

Among the class of BTMs, we are particularly interested in $ \psi^E $ because of its intuitively fair procedure. $ \psi^E $ is a regular BTM. To more precisely demonstrate its fairness properties, we analyze it in simple FEE economies. The simplified endowment structure in these economies allows us to study the impact of endowments in a more transparent way.

We first define a generalization of EENE for agents who own the same object but possibly in different amounts. Consider $ i $ and $ j $ who own $ o $ in a simple economy, with $ \w_{i,o}<\w_{j,o} $. In any allocation, $ j $ must receive more of objects than $ i $ does. If $ i $ envies $ j $, it could be because $ j $ receives better objects than $ i $ does, or simply because $ j $ receives more of objects but without receiving better objects. The axiom requires that $ i $'s envy towards $ j $ be eliminated when $ i $ is restricted to receiving no more objects than her endowment.

\begin{definition}
In a simple FEE economy, an allocation $ p $ satisfies \textbf{generalized EENE} if, for every object $ o $ and its every two owners $ i $ and $ j $ with $ 0<\w_{i,o}\le \w_{j,o} $, $ p_j \succsim^{sd}_j p_i $ and $ p_i \succsim^{sd}_i p'_j $ for all  $ p'_j \in 2^{p_j} $ with $ \norm{p'_j}=\norm{p_i} $.
\end{definition}

Generalized EENE is stronger than bounded envy for the owners of the same object. When $ 0<\w_{i,o}< \w_{j,o} $, bounded envy requires that $ i $'s potential envy towards $ j $ be bounded by $\w_{j,o}-\w_{i,o} $, but it does not preclude the possibility that $ i $ prefers a subassignment of $ j $'s assignment with the same size as her endowment.

We prove that $ \psi^E $ satisfies a stronger axiom called \textbf{ordinal fairness}. It is originally used by \cite{hashimoto2014two} to characterize PS in the house allocation model.  
Specifically, in any allocation, an agent's surplus at an object is defined as the amount of weakly better objects she receives. Our ordinal fairness requires that, for any $ i $ and $ j $ with $0< \w_{i,o}\le \w_{j,o} $, $ i $'s surplus at any object in her assignment is no more than $ j $'s surplus at the same object; conversely, $ j $'s surplus at any object in her assignment is no more than $ i $'s surplus at the same object, unless $ i $'s surplus has reached the size of her assignment and $ j $'s surplus exceeds this size.

\begin{definition}
In a simple FEE economy, an allocation $ p $ satisfies \textbf{ordinal fairness} if, for every object $ o $ and its every two owners $ i $ and $ j $ with $ 0<\w_{i,o}\le \w_{j,o} $,

(1) $ p_{i,a}>0\implies\sum_{o'\succsim_i a}p_{i,o'}\le  \sum_{o'\succsim_j a}p_{j,o'} $, and

(2) $ p_{j,a}>0\implies\sum_{o'\succsim_j a}p_{j,o'}\le  \sum_{o'\succsim_i a}p_{i,o'} $, unless $ \sum_{o'\succsim_i a}p_{i,o'}= \norm{p_i}<\sum_{o'\succsim_j a}p_{j,o'}$.
\end{definition}


Ordinal fairness can be intuitively understood through the line interpretations of assignments introduced in Section \ref{section2:FEEmodel}. Recall that any assignment $ p_i $ for agent $ i $ with preference $ \succ_i $ can be arranged along a line within the interval $[0, \norm{p_i}]$.
Now, for agent $ i $, we define a function $ t_i:[0,\norm{p_i}] \rightarrow \mathrm{supp}(p_i) $ such that, for every $ x\in [0,\norm{p_i}] $, $ t_i(x) $ is the worst object in $ \mathrm{supp} (p_i[x])$. Intuitively, $ t_i(x) $ corresponds to the object located at the point $ x $ in the line. We then show that ordinal fairness is equivalent to the condition in Lemma \ref{lemma:ordinalfairness}.

\begin{figure}[!htb]
\centering
\small
\begin{tikzpicture}[x=2.5cm]
	\draw[black,thick,>=latex,line cap=rect]
	(0,0) -- (3,0);
	\foreach \Xc in {0,3}
	{
		\draw[black,thick] 
		(\Xc,-2pt) -- ++(0,4pt);
	} 
	
	
	\node[below,align=left,anchor=north,inner xsep=0pt] 
	at (3.2,.2) 
	{$ p_j $};

	\draw[black,thick,>=latex,line cap=rect]
	(0,0.4) -- (2.5,.4);
	\foreach \Xc in {0,2.5}
	{
		\draw[black,thick] 
		(\Xc,0.33) -- ++(0,4pt);
	}
	
	
	\node[above,align=left,anchor=north,inner xsep=0pt] 
	at (2.7,.7) 
	{$ p_i $};
	
	\draw[black,thick, dotted,>=latex,line cap=rect]
	(1,0) -- (1,0.5);
	
	\draw[black,thick, dotted,>=latex,line cap=rect]
	(2.5,0) -- (2.5,0.38);
	
	\draw[decorate,decoration={brace,amplitude=8pt}] (0,.5) -- (1,.5)
	node[anchor=south,midway,above=5pt] {$ \scriptstyle x  $};

	
	\node[above,align=left,anchor=north,inner xsep=0pt] 
	at (1.15,0.85) 
	{$ \scriptstyle t_i(x) $};
	
	\node[above,align=left,anchor=north,inner xsep=0pt] 
	at (1.15,.05) 
	{$ \scriptstyle t_j(x) $};
	
	\node[above,align=left,anchor=north,inner xsep=0pt] 
	at (2.75,.05) 
	{$ \scriptstyle t_j(y) $};
	
	\draw[fill=black] (1,0) circle [radius=2pt];
	
	\draw[fill=black] (1,.4) circle [radius=2pt];
	
	\draw[fill=black] (2.75,0) circle [radius=2pt];
	
\end{tikzpicture}
\end{figure}

\begin{lemma}\label{lemma:ordinalfairness}
In a simple FEE economy, an allocation $ p $ satisfies ordinal fairness if and only if, for every object $ o $ and its every two owners $ i $ and $ j $ with $0< \w_{i,o}\le \w_{j,o} $,

(1) for all $  x \in [0, \norm{p_i}] $, $ t_i(x)\succsim_i t_j(x) $ and $ t_j(x)\succsim_j t_i(x) $, and

(2) for all $ y\in (\norm{p_i},\norm{p_j}] $, $ t_i(\norm{p_i})\succsim_i t_j(y) $.
\end{lemma}

With Lemma \ref{lemma:ordinalfairness}, it is easy to prove that ordinal fairness implies generalized EENE.\footnote{To see that ordinal fairness is stronger than generalized EENE, consider an economy in which two agents $ i $ and $ j $ own equal endowments $ (1/2a,1/2b) $ but have opposite preferences: $ a\succ_i b $ and $ b\succ_j a $. Then, assigning them the equal assignments $ (1/2a,1/2b) $ satisfies generalized EENE but violates ordinal fairness.}

\begin{lemma}\label{lemma:ordinalfairness:implies:GEENE}
In a simple FEE economy, if an allocation satisfies ordinal fairness, it satisfies generalized EENE.
\end{lemma}

The procedure of $ \psi^E $ ensures that its outcome satisfies the condition in Lemma \ref{lemma:ordinalfairness}. Thus, $ \psi^E $ satisfies ordinal fairness.


\begin{proposition}\label{prop:ordinalfair}
In simple FEE economies, $ \psi^E $ satisfies ordinal fairness.
\end{proposition}

While ordinal fairness imposes strong restrictions on the owners of the same object, the following example shows that it is insufficient to characterize the outcome of $ \psi^E $.

\begin{example}\label{example:fairness:complicated}
Consider a simple economy of four agents $ \{1,2,3,4\} $ and three objects $ \{a,b,c\} $. Agents have the endowments and preferences in the following tables. \autoref{figure:fairness:complicated} shows the graph where agents point to their favorite objects and objects point to their owners.

\begin{table}[!ht]
	\centering
	\scalebox{.95}{
		\begin{subtable}{.2\linewidth}
			\centering
			\begin{tabular}[c]{c|ccc}
				& $a$ & $b$ & $c$  \\\hline
				$1$ & $1$ &  \\
				$2$ &  & $1$ &  \\
				$3$ &  & $1$ &  \\
				$4$ &  &  & $1$  \\
			\end{tabular}
			\subcaption{Endowments}
		\end{subtable}
		\quad
		\begin{subtable}{.2\linewidth}
			\centering
			\begin{tabular}[c]{cccc}%
				$\succsim_{1}$ & $\succsim_{2}$ & $\succsim_{3}$ & $\succsim_{4}$  \\\hline
				$c$ & $a$ & $c$ & $b$  \\
				$a$ & $b$ & $b$ & $c$  \\
				$ b $ & $ c $ & $ a $ & $ a $ 	\\
				
				& & \\
				
			\end{tabular}
			\subcaption{Preferences}
		\end{subtable}
		\quad
		\begin{subtable}{.5\linewidth}
			\centering
			\begin{tikzpicture}[bend angle=15,xscale=.8,yscale=.8]
				\node (i11) at (0,-1) [agent] {$1$};
				\node (i21) at (2,0) [agent] {$2$};
				\node (i31) at (6,0) [agent] {$3$};
				\node (i41) at (4,0) [agent] {$4$};
				\node (a1) at (0,1) [object] {$a$};
				\node (b1) at (4,2) [object] {$b$};
				\node (c1) at (4,-2) [object] {$c$};
				
				\node at (0.3,0) {\footnotesize $1$};
				\node at (2.8,1.3) {\footnotesize $1$};
				\node at (5.2,1.3) {\footnotesize $1$};
				\node at (4.3,-1) {\footnotesize $1$};

				\draw[-latex] (i11) to (c1);
				\draw[-latex] (i21) to (a1);
				\draw[-latex] (i31) to (c1);
				\draw[-latex] (i41) to (b1);
				\draw[-latex] (a1)  to (i11);
				\draw[-latex] (b1) to (i21);
				\draw[-latex] (b1) to (i31);
				\draw[-latex] (c1) to (i41);
			\end{tikzpicture}
			\subcaption{Generated graph at step one of $ \psi^E $}\label{figure:fairness:complicated}
		\end{subtable}
	}
	\caption{Example \ref{example:fairness:complicated}}\label{table:fairness:complicated}
\end{table}

There are two trading opportunities represented by the two cycles, $ b \rightarrow 3 \rightarrow c \rightarrow 4 \rightarrow b $ and $ b \rightarrow 2 \rightarrow a \rightarrow 1 \rightarrow c \rightarrow 4 \rightarrow b $. In any IR and sd-efficient allocation, the sum of the amounts of objects traded in the two cycles must be one. Any IR and sd-efficient allocation can be obtained by dividing this sum among the two cycles, and they all satisfy ordinal fairness. However, $ \psi^E $ selects a specific allocation in which equal amounts of objects are traded in the two cycles. As a result, $ 4 $ receives one unit of $ b $, and all other agents receive $ 1/2 $ of their favorite objects and $ 1/2$ of their own endowments.
\end{example}

Example \ref{example:fairness:complicated} shows that by ensuring fairness among the owners of the same object,  $ \psi^E $ may also achieve certain fairness among the owners of different objects. In this example, because all agents are located in the two cycles initiated by the two owners of $ b $, no envy is achieved among all agents. However, in general cases, depending on the endowments and preferences of agents, situations may become complex, and formalizing the fairness property of $ \psi^E $ for the owners of different objects may become extremely difficult. 
For instance, in the housing market model where $ \psi^E $ is equivalent to TTC, it is unclear how to define the fairness property of $ \psi^E $. This motivates us to characterize $ \psi^E $ using a different approach, which is presented in the next subsection.

\subsection{Characterization of Equal-BTM}\label{section:characterization of Equal-BTM}

Since it is challenging to characterize $ \psi^E $ in a fixed-size economy through its efficiency and fairness properties, we take an approach that varies agents' endowment amounts. An important feature of $ \psi^E $, which contributes to its fairness properties, is that when two agents own different amounts of an object, only after the agent owning less exhausts her endowment of the object, can the agent owning more input her additional endowment into the trading process. Hence, in any economy, if we increase the amount of an object owned by an agent who already owns the most among all owners of that object, the agent will not input the increased amount into the trading process before all owners exhaust their original endowments of the object. If the object is no better than the objects received by all agents in the original economy, the procedure of  $ \psi^E $ in the new economy will remain unchanged until all agents exhaust their original endowments. This observation leads to the following concept.

\begin{definition}\label{definition:endowment-expansion}
In the FEE model, a mechanism $ \psi $ satisfies \textbf{endowment-expansion invariance} if, for any two economies $\Gamma=(I,O,\succ_I, \w) $ and $\Gamma'=(I,O,\succ_I, \w') $ such that $ \w'\ge \w $, and, for every $ o\in O $ such that $ \sum_{i\in I}\w'_{i,o}>\sum_{i\in I}\w_{i,o} $, we have

(1) $\w'_{i,o}> \w_{i,o} \implies \w_{i,o}\ge \w_{j,o}$ for all $j\in I\backslash \{i\}$,

(2) $ \psi_{i,o'} (\Gamma)>0 \implies o'\succsim_i o $,

\noindent then, for all  $ i\in I $, if $ \w_i=\w'_i $, $ \psi_i (\Gamma')=\psi_i (\Gamma) $; if $ \w_i\neq\w'_i $, $ \psi_i (\Gamma')[\norm{\psi_i (\Gamma)}]= \psi_i (\Gamma)$.\footnote{$ \psi_i (\Gamma')[\norm{\psi_i (\Gamma)}] $ denotes the best subassignment of $ \psi_i (\Gamma') $ with the size $ \norm{\psi_i (\Gamma)} $.}
\end{definition}
Conditions (1) and (2) mean that we only expand the endowments of those who already own most of the relevant objects, and all agents weakly prefer their received objects over the objects with increased amounts. The property requires that for agents whose endowments do not change, their assignments neither change, while for agents whose endowments expand, their original assignments remain the ``upper part'' of their assignments in the new economy.

\begin{proposition}\label{prop:characterization}
In the FEE model, a mechanism satisfies sd-efficiency and endowment-expansion invariance if and only if it is $ \psi^E $.
\end{proposition}

The two axioms are independent: assigning all agents their endowments satisfies endowment-expansion invariance but violates sd-efficiency; $ \psi^P $ satisfies sd-efficiency but violates endowment-expansion invariance.

We obtain characterizations of TTC and PS as corollaries. In the housing market model, it is standard to focus on deterministic mechanisms in which agents receive deterministic objects. Then, sd-efficiency can replaced by Pareto efficiency, and endowment-expansion invariance can be reformulated as follows: for any housing market $\Gamma=(I,O,\succ_I, \w) $, if $ \Gamma' $ is a new market obtained from $ \Gamma $ by adding a set of agents $ I' $ with their endowments $ O' $ and, for all $ i\in I $ and all $ o\in O' $, $ \psi_{i} (\Gamma) \succ_i o $, then for all  $ i\in I $, $ \psi_i (\Gamma')= \psi_i (\Gamma)$. 

\begin{corollary}\label{corollary:PS}
In the house allocation model, a mechanism satisfies sd-efficiency and endowment-expansion invariance if and only it is PS.
\end{corollary}

\begin{corollary}\label{corollary:TTC}
In the housing market model, a deterministic mechanism satisfies Pareto efficiency and endowment-expansion invariance if and only it is TTC.
\end{corollary}

\subsection{Bounded invariance and asymptotic strategy-proofness}\label{section:BTM:incentive}

Unlike fairness, which restricts the allocation for each preference profile, incentives restrict how allocations vary when agents change their preferences. Several results in the literature have shown the tension between efficiency, fairness, and strategy-proofness for random assignment mechanisms. In the FEE model, this tension is exacerbated by the IR constraint. Specifically, in the house allocation model, every sd-efficient mechanism satisfying ETE is not strategy-proof \citep{bogomolnaia2001new}. In the FEE model, every IR and sd-efficient mechanism is not weakly strategy-proof \citep{AS2011,aziz2018impossibility}. Hence, no BTM is weakly strategy-proof in the FEE model.

However, certain manipulation strategies are ineffective in regular BTMs. Every regular BTM satisfies \textbf{bounded invariance}, meaning that an agent cannot manipulate the allocation of an object by changing the reported ordering of less preferred objects. This holds because, in the procedure of a regular BTM, the allocation found upon any step depends solely on agents' reported preferences before that step.

Moreover, a subclass of regular BTMs is approximately strategy-proof in large economies, and their fairness property plays a key role. In small economies, an agent may have manipulation power when her endowments or preferences are rare. In large economies, however, any individual's manipulation power becomes negligible when there exist many others with identical endowments and preferences. Specifically, if an agent misreports her preferences in a large economy, the change in the distribution of agents' types (defined by their endowments and preferences) will be small. If a regular BTM is insensitive to such a small change, which is formalized as a continuity property, the resulting allocation will change only slightly. Thus, after manipulation, the agent will receive an assignment close to that received by others with identical endowments but different preferences in the original economy. Since regular BTMs satisfy EENE, the benefit from manipulation diminishes as the economy grows. 

Formally, in an economy $ \Gamma=(I,O,\succ_I,\w) $, define $ \Omega_\Gamma=\{\hat{\w}:\exists i\in I, \ \hat{\w}=\w_i\} $ as the set of endowment types. An agent is type-$ (\hat{\w},\succsim) $ if her endowment is $ \hat{\w} $ and her preference is $ \succ $. Let $ \mathcal{P} $ be the set of all strict preferences. For every $ (\hat{\w},\succ)\in \Omega_\Gamma\times \mathcal{P} $, let $ N(\hat{\w},\succ) $ be the number of type-$ (\hat{\w},\succ) $ agents and $
A(\hat{\w},\succ)=\frac{N(\hat{\w},\succ)}{ |I|}
$ the proportion of type-$ (\hat{\w},\succ) $ agents. 

Given a base economy $ \Gamma=(I,O,\succ_I,\w) $, as the economy grows, we assume that the number of agents increases, but the set of object types $ O $ and the set of endowment types $ \Omega_\Gamma $ remain fixed. A sequence of economies $ (\Gamma^{[n]})_{n=1}^\infty $, where $ \Gamma^{[n]}=(O,I^{[n]},\succ_{I^{[n]}},\w^{[n]}) $, is called \textit{regular}  if  $ \Gamma^{[1]}=\Gamma $, for all $ n\ge 2 $, $ |I^{[n]}|=n|I| $, $ \Omega_{\Gamma^{[n]}}=\Omega_{\Gamma} $, and for all $ (\hat{\w},\succ)\in \Omega_{\Gamma}\times \mathcal{P} $,  $\exists A^{[\infty]}(\hat{\w},\succ)\in (0,1) $ such that
$
\lim_{n\rightarrow \infty}A^{[n]}(\hat{\w},\succ)= A^{[\infty]}(\hat{\w},\succ)$. In words, for every $ (\hat{\w},\succ)\in \Omega_\Gamma\times \mathcal{P} $, the proportion of type-$ (\hat{\w},\succ) $ agents converges to a positive fraction $ A^{[\infty]}(\hat{\w},\succ) $.

A mechanism $ \psi $ is \textbf{asymptotically strategy-proof} if, for any regular sequence of economies $ (\Gamma^{[n]})_{n=1}^\infty $ and any $ \varepsilon>0 $, there exists $ n^*\in \mathbb{N} $ such that, for any $ n>n^* $ and any agent $ i $ in $ \Gamma^{[n]} $, the assignment $ i $ receives from truth-telling weakly stochastically dominates the assignment from misreporting any $ \succ'_i \neq \succ_i $, with an error bounded by $ \varepsilon $:
\[
\sum_{o'\succsim_i o}\psi_{i,o}(\succ_{I^{[n]}})\ge\sum_{o'\succsim_i o}\psi_{i,o}(\succ'_i,\succ_{I^{[n]}\backslash \{i\}})-\varepsilon \quad \text{for all }o\in O.
\]

We now define the subclass of regular BTMs that are asymptotically strategy-proof. It restricts how the parameters used by a regular BTM may change when the distribution of endowments in an economy changes. Specifically, it requires that the parameters are continuous in the distribution of endowments. Formally, a sequence $ \{D^n\}_{n=1}^\infty\subseteq \mathcal{D} $ is said to \textit{converge} to any $ D\in \mathcal{D} $ if, for sufficiently large $ n $, $ |D^n|=|D |$, and for every $ k\in \{1,2,\ldots,|D|\} $, the sequence of the $ k $-th element of $ D^n $ converges to the $ k $-th element of $ D $.

\begin{definition}
A regular BTM $ \psi $ is \textbf{continuous} if its parameter function $ f^\psi $ satisfies that, for every sequence $ \{D^n\}_{n=1}^\infty\subseteq \mathcal{D} $ that converges to any $ D \in \mathcal{D}$, and for every sequence $ \{x^n\}_{n=1}^\infty $ with $ x^n\in D^n $ that converges to any $ x\in D $,  $\lim_{n\rightarrow \infty} f^\psi(x^n,D^{[n]})= f^\psi(x,D) $.
\end{definition}

$ \psi^E $ and $ \psi^P $ are examples of continuous regular BTMs.

\begin{proposition}\label{prop:asymptotic:IC}
Every regular BTM satisfies bounded invariance. Every continuous regular BTM is asymptotically strategy-proof.
\end{proposition}

\section{Priority-based allocation model}\label{section:priority}

Using the approach developed in Section \ref{section:method}, this section defines a mechanism for the priority-based allocation model, where agents trade their priorities for different objects. Our mechanism reduces to the TTC of \cite{abdulkasonmez2003} when priorities are strict.

In the model, each $ o\in O $ is available in an integer quantity $ q_o $ and ranks the agents in $ I $ by a weak order $ \succsim_o $. Two agents $ i$ and $j $ are in priority ties for an object $ o $ if $ i\sim_o j $. An economy is represented by $ (I,O,\succ_I, \succsim_O) $ where $ \succsim_O=(\succsim_o)_{o\in O} $.

We propose the \textbf{priority trading mechanism} (PTM). At each step, agents demand their favorite objects, and only those with the highest priority for an object can use the object for trading. If multiple agents are tied for an object, we let them use equal amounts of the object for trading. In this sense, PTM generalizes $ \psi^E $ in the FEE model.

\begin{center}
\textbf{Priority trading mechanism}
\end{center}

\begin{itemize}
\item[] \textbf{Notations}: Let $ I(d) $, $ O(d) $, $ o_i(d) $, $ p(d) $, and $ \mathbf{x}^*(d) $ be defined as in BTM. For each $ o\in O(d) $, let $ I_o(d) $ denote the set of agents who have the highest priority among $ I(d) $.

\item[] \textbf{Initialization}: Given $ (I,O,\succ_I, \succsim_O) $, let
$ I(1)=I $, $ O(1)=O $, and $ p(1)=\mathbf{0} $.

\item[]\textbf{Step} $ d\ge 1 $: Each $ i\in I(d) $ reports her favorite object $ o_i(d) $. Let $\mathbf{x}^*(d) $ be the maximum solution to the equations
\begin{equation*}
	\begin{cases}
		x_o(d)=\sum_{i\in I(d)}\mathbf{1}\{o_i(d)=o\}\cdot x_i(d) & \text{ for all } o\in O(d),\\ 
		x_i(d)=\sum_{o\in O(d)} \mathbf{1}\{i\in I_o(d)\}\cdot\dfrac{x_o(d)}{|I_o(d)|} & \text{ for all }i\in I(d),
	\end{cases}
\end{equation*}
subject to the constraints
\begin{equation*}
	\begin{cases}
		x_o(d)\le q_o-\sum_{k=1}^{d-1} x^*_o(k) & \text{ for all } o\in O(d),\\
		x_i(d)\le 1-\sum_{k=1}^{d-1} x^*_i(k) & \text{ for all }i\in I(d).
	\end{cases}
\end{equation*}
For all $ i\in I $ and all $ o\in O$, if $ i\in I(d) $ and $o=o_i(d) $, let $ p_{i,o}(d+1)= p_{i,o}(d)+x^*_i(d)$; otherwise, let $ p_{i,o}(d+1)=p_{i,o}(d) $. 
Let  $ I(d+1)=\{i\in I(d):\sum_{k=1}^{d} x^*_i(k)<1\} $ and $ O(d+1)=\{o\in O(d):\sum_{k=1}^{d} x^*_o(k)<q_o \} $. If $ O(d+1)$ or $ I(d+1) $ is empty, stop the algorithm. Otherwise, go to step $ d+1 $.
\end{itemize}

Since PTM resembles $ \psi^E $, we refrain from providing an example. PTM is sd-efficient. Its intuitively fair procedure also ensures fairness in the outcome. However, similar to $ \psi^E $, it is challenging to characterize its precise fairness property.\footnote{We will need to define agents' contingent endowments implied by the priority structure and then compare agents' endowments as in the FEE model.} An evident property of PTM is that an agent with weakly higher priorities for all objects than another agent must receive a weakly better assignment. Formally, an allocation $ p $ satisfies \textbf{no envy towards weakly lower priority} if, for all $ i,j\in I $ such that $ i\succsim_o j $ for all $ o\in O$, $ p_i\succsim^{sd}_i p_j $. This property implies \textbf{equal-priority no envy}: for any $ i,j\in I $ who have equal priority for all objects, $ p_i\succsim^{sd}_i p_j $ and $ p_j\succsim^{sd}_j p_i $. Similar to $ \psi^E $, PTM satisfies bounded invariance. PTM is also asymptotically strategy-proof if, as the economy grows, for any agent, there are sufficiently many others with equal priority but different preferences. Since the result is similar to that for $ \psi^E $, we refrain from providing a formal analysis.

\begin{proposition}\label{prop:priority}
In the priority-based allocation model, PTM satisfies sd-efficiency, no envy towards weakly lower priority, and bounded invariance.
\end{proposition}

The house allocation model in the most general form can be viewed as a special case of the priority-based allocation model where all agents have equal priority for all objects. Then, PTM degenerates to PS in the general house allocation model.

\section{House allocation with existing tenants}\label{section:HET}

The HET model can be viewed as a special case of the priority-based allocation model, where each existing tenant has the highest priority for her private endowment, with other agents being tied, while all agents have equal priority for public endowments. Then, we can apply PTM and obtain a new mechanism for the HET model. Interestingly, this mechanism admits a description that combines the features of TTC and simultaneous eating algorithms.

Formally, in the HET model, a subset of agents $ I_E $, called existing tenants, own a subset of objects $ O_E $, called private endowments, with each $i\in I_E $ privately owning an object in $ O_E $, denoted by $ \w(i) $. The owner of each $ o\in O_E $ is denoted by $ i_o $. The agents in $ I\backslash I_E $ are called newcomers, and the objects in $ O\backslash O_E $ are called public endowments.

We introduce the \textbf{eating-trading mechanism} (ETM). At each step, we let agents points to their favorite objects and private endowments point to their owners. We then check for cycles among existing tenants. If there exists a cycle, let existing tenants in the cycle trade their private endowments immediately. The traded amount in each cycle can be fractional, depending on the residual demands of the existing tenants involved. If no cycles exist, agents ``eat'' their favorite objects according to the following rates: all newcomers' eating rates are set to one; every existing tenant's eating rate is equal to one plus the sum of the eating rates of the agents who are eating her private endowment. This eating-rate rule is referred to as ``you request my house - I get your rate''. ETM degenerates to PS in the house allocation model and to TTC in the housing market model.\footnote{ETM is first introduced in the third chapter of \cite{zhang2017essays}. Using the technique of \cite{CheKojima2010} who prove the asymptotic equivalence between PS and the random priority mechanism in large economies, \cite{zhang2017essays} proves that ETM is asymptotically equivalent to the ``you request my house - I get your turn'' mechanism of \cite{AbduSonmez1999}.}

\begin{center}
\textbf{Eating-trading mechanism}
\end{center}

\begin{itemize}
\item[] \textbf{Notations}: $ I(d) $, $ O(d) $, and $ o_i(d) $ are defined as before. At the beginning of step $ d $, let $ r_i(d) $ denote the residual demand of each $ i\in I(d) $. During step $ d $, let $ s_i(d) $ denote the eating rate of each $ i\in I(d) $, if they are eating objects at the step. 


\item[] \textbf{Initialization}: $ I(1)=I $, $ O(1)=O $, and $ r_i(1)=1 $ for all $ i \in I(1) $. 

\item[] \textbf{Step $ d \ge 1 $}: Let each $ i\in I(d) $ point to her favorite object $ o_i(d) $. Let each $ o\in  O(d)\cap O_E $ point to its owner $ i_o $, if $ i_o\in I(d) $; otherwise, it does not point to any agent.

\begin{itemize}
	\item If there exist cycles among existing tenants, let the agents in each cycle trade an amount of their private endowments instantly, where the traded amount equals the minimum residual demand of the agents involved in the cycle.\footnote{Our mechanism ensures that the residual amount of each private endowment is always weakly more than the residual demand of its owner.}

	\item If there do not exist cycles, let remaining agents simultaneously eat their favorite objects with following rates: for all $ i\in I(d)\backslash I_E $, $ s_i(d)=1 $; for all $ j\in I(d)\cap I_E $, $ s_j(d)=s_{w(j)}(d)+1 $, where  $ s_{\w(j)}(d)=\sum_{i\in I(d):o_i(d)=\w(j)} s_i(d)$. Agents stop eating when some agent's demand is satisfied or when some object is exhausted.
\end{itemize} 

After each step, remove the agents whose demands are satisfied and the objects that are exhausted. When all agents are satisfied or all objects are exhausted, stop the algorithm. Otherwise, go to step $ d+1 $.
\end{itemize}

We present an example to illustrate the procedure of ETM.

\begin{example}\label{example:HET}
Consider an HET economy of six agents $ \{1,2,3,4,5,6\}$ and six objects $ \{a,b,c,d,e,f\}$. Agents have the following preferences. $ 1,2,3,4,5 $ are existing tenants, with private endowments underlined in their preference orders. $ 6 $ is a newcomer, and $ f $ is a public endowment.

\begin{table}[!ht]
	\centering
	\scalebox{.95}{
		\begin{subtable}{.4\linewidth}
			\centering
			\begin{tabular}{cccccc}
				$ \succsim_{1} $ & $ \succsim_{2} $ & $ \succsim_{3} $ & $ \succsim_{4} $ & $ \succsim_{5} $ & $ \succsim_{6} $ \\ \hline
				$ b $ & $ c $ & $ a $ & $ b $ & $ a $ & $ c $ \\
				$ c $ & \underline{$b$} & $ e $ & $ f $ & $ f $ & $ d $ \\
				\underline{$a$} & $ \vdots $  & \underline{$c$} & $ e $ & $ d $ & $ e $\\
				$ \vdots $ & $ \vdots $ & $ \vdots $ & \underline{$d$} & \underline{$e$} &  $ \vdots $ \\
				& &\\ 
			\end{tabular}
			\subcaption{Preferences}
		\end{subtable}
		\quad
		\begin{subtable}{.4\linewidth}
			\centering
			\begin{tabular}[c]{c|cccccc}
				& $a$ & $b$ & $c$ & $d$ & $ e $ & $ f $ \\ \hline
				$ 1 $ & $0$ & $1$ & $ 0 $ & $ 0 $ & $ 0 $ & $ 0 $ \\
				$ 2 $ & $0 $ & $ 0 $ & $1$ & $ 0 $ & $ 0 $ & $ 0 $  \\
				$ 3 $ & $1$ & $ 0 $ & $ 0 $ & $0$ & $ 0 $ & $ 0 $    \\
				$ 4 $ & $0$ & $0$ & $0$ & $0$ & $ 1/3 $& $ 2/3 $    \\
				$ 5 $ & $ 0 $ & $ 0 $ & $ 0$ & $ 1/2 $ & $ 1/6 $ & $ 1/3 $ \\
				$ 6 $ & $ 0 $ & $ 0$ & $ 0$ & $ 1/2 $ & $ 1/2 $& $ 0 $
			\end{tabular}
			\subcaption{Outcome of ETM}
		\end{subtable}
	}
\end{table}

ETM proceeds in the following steps to find the allocation shown above.

\textbf{Step 1:}  $ 1 $ and $ 4 $ point to $ b $, $ 2 $ and $ 6 $ point to $ c $, and $ 3 $ and $ 5 $ point to $ a $. Existing tenants $1,2,3$ form a cycle. So, they exchange their private endowments instantly.

\textbf{Step 2:} $ 4 $ and $ 5 $ point to $ f $, and $ 6 $ points to $ d $.	There are no cycles. The eating rates of $ 5 $ and $ 6 $ are one. The eating rate of $4$ is two since her private endowment $ d $ is being eaten with a rate of one. When $f$ is exhausted, $ 4 $ obtains $ 2/3f $, $ 5 $ obtains $ 1/3f $, and $ 6 $ obtains $ 1/3d $.

\textbf{Step 3:} $4$ points to $e$, and $ 5 $ and $ 6 $ point to $ d $. $ 4 $ and $ 5 $ form a cycle. Their residual demands are $ 1/3 $ and $ 2/3 $, respectively. So, $ 1/3 $ of their private endowments is traded in the cycle. After that, $ 4 $ is satisfied and removed.

\textbf{Step 4:} $ 5 $ and $ 6 $ point to $d$. There are no cycles. The eating rate of all agents is one. When $d $ is exhausted, each agent obtains $ 1/6d $.

\textbf{Step 5:} $ 5 $ and $ 6 $ point to $e$. There is a cycle between $ 5 $ and her private endowment. After clearing the cycle, $ 5 $ obtains $ 1/6e $. Then, $ 5 $ is satisfied and removed.

\textbf{Step 6:}  $ 6 $ points to $ e $ and eats it with a rate of one. So, $ 6 $ obtains the residual amount of $e $ and is satisfied.	
\end{example} 

ETM is sd-efficient. It is IR for existing tenants, meaning that no existing tenant receives positive amounts of objects worse than her private endowment. It is because every private endowment is exhausted no earlier than its owner being removed. ETM satisfies \textbf{no envy towards newcomers}: for every agent $ i $ and every newcomer $j $, $ p_j\succsim^{sd}_j p_i $. It implies that newcomers do not envy each other. ETM also satisfies bounded invariance.

\begin{proposition}\label{prop:pse}
ETM is equivalent to the restriction of PTM to the HET model. So, ETM satisfies individual rationality for existing tenants, sd-efficiency, no envy towards newcomers, and bounded invariance.
\end{proposition}

ETM differs from IR-PS of \cite{Yilmaz2010} in that ETM is essentially a trading mechanism. ETM allows existing tenants to use their private endowments for trading, while IR-PS minimizes the role of private endowments except for addressing the IR constraints of existing tenants.\footnote{IR-PS runs PS until reaching a point where the IR constraint of a group of agents is about to be violated. At this point, IR-PS isolates that group and their remaining acceptable objects as a subproblem where PS is applied independently. IR-PS proceeds as follows in Example \ref{example:HET}. Starting from time 0, agents eat their favorite objects with a constant rate of one. At time $ 1/4 $, the problem is broken into two sub-problems, $m_{1}=\{\{1/2a,1/2b,1/2c\},\{1,2\}\}$ and $m_{2}=\{\{d,e, f\},\{3,4,5,6\}\}$. In the sub-problem $m_{1}$, $1$ and $2$ eat $b $ and $c$ respectively with a rate of one. At time $ 1/2 $, $m_{1}$ is further broken into two sub-problems $m_{11}=\{\{1/4b,1/4c\},\{2\}\}$ and $m_{12}=\{\{1/2a\},\{1\}\}$. In each of $ m_2 $, $ m_{11} $ and $ m_{12} $, IR-PS coincides with PS.} 
This difference leads to different properties of the mechanisms. IR-PS satisfies a different fairness axiom called \textit{no justified envy},\footnote{An allocation $ p $ satisfies no justified envy if, for any  $ i $ and $ j $, $ p_i \succsim^{sd}_i p_j$ if $ p_i $ is individually rational for $ j $. No justified envy implies no envy towards newcomers, because newcomers accept any assignment.} but it is vulnerable to manipulation by existing tenants through truncation strategies. This vulnerability arises because IR-PS does not satisfy upper invariance, a weaker property than bounded invariance.\footnote{Upper invariance requires that an agent cannot change her received amount of an object by changing the reported ordering of less preferred objects. \cite{mennle2020partial} prove that, for random assignment mechanisms, strategy-proofness can be decomposed into upper invariance, lower invariance, and swap monotonicity. So, upper invariance is implied by strategy-proofness. }

\section{Concluding remarks}\label{section:conclusion}

We conclude the paper by discussing additional results and directions for future research. 

While this paper focuses on models using a single trading mechanism to find allocations, some papers in the literature explore multi-stage allocation processes, where trading mechanisms are used to resolve inefficiencies in the outcomes of other mechanisms. For example, in the school choice model with weak priorities, \cite{erdil2008s} introduce a trading algorithm to address inefficiencies in the deferred acceptance (DA) mechanism with exogenous tie-breaking. This approach is extended by \cite{KestenUnver2014}, \cite{erdil2017two}, and \cite{erdil2019efficiency}, among others. Our method can be applied to define trading mechanisms in these multi-stage settings. To illustrate this, in an earlier version of this paper, \cite{YuZhang2021} show that our method can address the complexities in defining the trading algorithm in the second stage of the fractional deferred acceptance and trading (FDAT) mechanism proposed by \cite{KestenUnver2014}. Our method can readily solve the challenge in treating equal-priority students equally in the trading algorithm, on the basis of the allocation produced by FDA in the first stage.

This paper assumes strict preferences. However, in some models, agents may view some objects as indifferent. In a companion paper, \cite{YuzhangFTTCweak} extend the mechanisms defined in this paper to accommodate weak preferences. To preserve efficiency and fairness, we leverage the trading process, with Lemma \ref{thm:existence} remaining crucial. Take $ \psi^E $ as an example. At any step, if an object has been exhausted but an agent who receives a positive amount of the object finds indifferent objects among the remaining ones, we let the agent label the received amount of the exhausted object as an endowment available for trading. In the subsequent trading process, if the agent loses an amount of this labeled endowment, she is compensated with an equal amount of indifferent objects. This labeling continues until no indifferent objects remain. Fairness is maintained by selecting intuitively fair equation parameters in the trading process.

In the housing market model, it is well known that TTC finds the unique allocation in the strong core when all preferences are strict. We may ask whether a similar result holds for BTMs in the FEE model.  When agents' preferences over assignments are defined by the stochastic dominance relation, they are incomplete. As a result, there are more than one way to define the core in the FEE model. \cite{YuzhangXXcore} define the weak core as the set of IR allocations where no coalition of agents can reallocate their endowments among themselves to obtain new assignments stochastically dominating their original ones. The strong core is similarly defined, but requiring that only some members of the coalition be strictly better off. It turns out that the strong core may be empty, and while the weak core is nonempty, it is compatible only with mild fairness: there always exists an allocation in the weak core that satisfies ETE, but there exist economies where every allocation in the weak core violates EENE. Consequently, a regular BTM may never find an allocation in the weak core in some economies.

Several questions remain open for future research. First, the precise fairness property that characterizes the outcome of $ \psi^E $ remains unknown. Although the procedure of $ \psi^E $ is intuitively fair, formalizing this as an outcome-based property is challenging. Second, characterizing PTM in the priority-based allocation model is an open question. While we characterize $ \psi^E $ through endowment decomposition and variation, this approach seems inapplicable to PTM. A characterization of PTM would also inform the characterization of ETM in the HET model. Finally, exploring further applications of our method to address other market design challenges presents promising opportunities for future research.

\bibliographystyle{aea}

\setlength{\bibsep}{0pt plus 0.3ex}
\bibliography{reference}

\newpage

\appendix

\section{Proofs}\label{appendix:propproof}

\begin{proof}[\normalfont\textbf{Proof of Proposition \ref{prop:BTM:efficiency}}]
(IR) Let $ p $ be the outcome of any BTM. At each step $ d $, for all $ i\in I(d) $ and all $ o\in O(d) $, $ o_i(d)\succsim_i o $. Suppose that $ p $ is not IR. Then, there exists $ o^*\in O $ and $ i\in I $ such that $ \sum_{o\succsim_i o^*}p_{i,o}<\sum_{o\succsim_i o^*}\w_{i,o}  $. Let $ d $ be the earliest step after which all objects in $ U(\succsim_i,o^*) $ are exhausted; that is, $ U(\succsim_i,o^*)\cap O(d+1)=\emptyset $ and $ U(\succsim_i,o^*)\cap O(d)\neq\emptyset $. Then, $ i $'s favorite objects from step one to step $ d $ must belong to $U(\succsim_i,o^*) $, and $ \sum_{o\succsim_i o^*}p_{i,o} $ is the cumulative amount of objects obtained by $ i $ up to step $ d $. However, because $ \sum_{o\succsim_i o^*}p_{i,o}<\sum_{o\succsim_i o^*}\w_{i,o}  $, there must exist $ o'\succsim_i o^* $ such that $ \w_{i,o'}(d+1)>0 $. This is a contradiction. 

(Sd-efficiency)  Define a binary relation $ \rhd $ on $ O $ such that, for every distinct $ o,o'\in O $, $ o\rhd o' $ if there exists $ i \in I$ such that $ o\succ_i o' $ and $ p_{i,o'}>0 $. \cite{bogomolnaia2001new} have shown that $ p $ is sd-efficient if and only if $ \rhd $ is acyclic. In the procedure of BTM, an agent demands an object only after better objects are exhausted. So, for any $ o\succ_i o' $ and $ p_{i,o'}>0 $, $ o $ must be exhausted earlier than $ o' $. This implies that $ \rhd  $ is acyclic.
\end{proof}

\begin{proof}[\normalfont\textbf{Proof of Proposition \ref{prop:BTA:fairness}}]
(1) For any BTM satisfying stepwise ETE, if any $ i$ and $j $ own equal endowments and have identical preferences, at step one, $ \lambda_{i}(1)=\lambda_{j}(1) $. So, they obtain equal amounts of the same favorite object and lose equal amounts of each endowment at step one. At step two, $ \w_i(2)=\w_j(2) $ and identical preferences imply $ \lambda_{i}(2)=\lambda_{j}(2) $. So, again, they obtain equal amounts of the same favorite object and lose equal amounts of each endowment at step two. This inductively holds for all remaining steps. So, the two agents must receive equal assignments.

(2) For any BTM satisfying stepwise EEET, if any $ i$ and $j $ own equal endowments, then at step one, $ \lambda_{i}(1)=\lambda_{j}(1) $. So, they obtain equal amounts of their respective favorite objects and lose equal amounts of each endowment at step one. At step two, $ \w_i(2)=\w_j(2) $ implies $ \lambda_{i}(2)=\lambda_{j}(2) $. So, again, they obtain equal amounts of their respective favorite objects and lose equal amounts of each endowment at step two. This inductively holds for all remaining steps. So, the two agents weakly prefer their own assignment over the other's.

(3) In any economy, let $ p $ be the outcome of any BTM satisfying bounded advantage. 
Suppose that an agent $ i $ envies another $ j $. Let $ o^* $ be the solution to $ \max_{o\in O} \big[\sum_{o'\succsim_i o} p_{j,o'}-\sum_{o'\succsim_i o}p_{i,o'}\big] $.
Let $ d $ be the step after which all objects $ o $ with $o\succsim_i o^* $ are exhausted; that is, $U(\succsim_i, o^*) \cap O(d+1)=\emptyset $ and $ U(\succsim_i, o^*) \cap O(d) \neq\emptyset $. Then, $ \sum_{o\succsim_i o^*}p_{i,o}=\sum_{o\in O}\big(\w_{i,o}-\w_{i,o}(d+1)\big) $ and $ \sum_{o\succsim_i o^*}p_{j,o}\le\sum_{o\in O}\big(\w_{j,o}-\w_{j,o}(d+1)\big) $. So, 
\[
\sum_{o\succsim_i o^*}p_{j,o}-\sum_{o\succsim_i o^*}p_{i,o}
\le\sum_{o\in O}\big[\big(\w_{j,o}-\w_{j,o}(d+1)\big)-\big(\w_{i,o}-\w_{i,o}(d+1)\big)\big].
\]

For every $ o\in O $ such that $ \w_{i,o}\ge \w_{j,o} $, bounded advantage implies that, for all $ 1\le d'\le d $, $ \w_{i,o}(d'+1)\ge \w_{j,o}(d'+1) $ and $ \lambda_{i,o}(d')\ge \lambda_{j,o}(d') $. So, 
$
\w_{j,o}-\w_{j,o}(d+1)=\sum_{d'=1}^d \lambda_{j,o}(d')x^*_o(d')\le  \sum_{d'=1}^d \lambda_{i,o}(d')x^*_o(d')=\w_{i,o}-\w_{i,o}(d+1).
$
Equivalently,
\[
\big(\w_{j,o}-\w_{j,o}(d+1)\big)-\big(\w_{i,o}-\w_{i,o}(d+1)\big)\le  0.
\] 
For every $ o\in O $ such that $ \w_{i,o}< \w_{j,o} $, bounded advantage implies that, for all $ 1\le d'\le d $, $ \w_{i,o}(d'+1)\le \w_{j,o}(d'+1) $ and $ \lambda_{i,o}(d')\le \lambda_{j,o}(d') $. In particular, $ \w_{i,o}(d+1)\le \w_{j,o}(d+1)  $. So,
\[
\big(\w_{j,o}-\w_{j,o}(d+1)\big)-\big(\w_{i,o}-\w_{i,o}(d+1)\big)\le \w_{j,o}-\w_{i,o}.
\]
Therefore, $
\sum_{o\succsim_i o^*}p_{j,o}-\sum_{o\succsim_i o^*}p_{i,o}
\le \sum_{o\in O:\w_{i,o}< \w_{j,o}} \big(\w_{j,o}-\w_{i,o} \big)$. 
\end{proof}

\begin{proof}[\normalfont\textbf{Proof of Proposition \ref{prop:house allocation:housing market}}]
	(1) In the housing market model, at the first step of TTC, suppose that a group of agents $ \{i_1,i_2,\ldots,i_k\} $ forms a cycle in which $ i_1 $ points to $ i_2 $'s endowment denoted by $ o_2 $, $ i_2 $ points to $ i_3 $'s endowment denoted by $ o_3 $, $ \ldots $, $ i_k $ points to $ i_1 $'s endowment denoted by $ o_1 $. Then, at the first step of any BTM, there exist equations $ x_{i_1}=x_{o_1} $, $ x_{o_1}=x_{i_k} $, $ x_{i_k}=x_{o_k} $, $ \ldots $, $ x_{i_2}=x_{o_2} $ and $ x_{o_2}=x_{i_1} $. Every constraint takes the form $ x_\ell \le 1 $. In the maximum solution to the equation system, it must be that $ x^*_{i_1}=\cdots=x^*_{i_k} =1$, meaning that those agents receive their preferred objects as in TTC. Conversely, in the maximum solution to the equation system at the first step of any BTM, if $ x^*_{i_1}=1 $ for any agent $ i_1 $, then $ x^*_{o_2}=1 $, where $ o_2 $ is the object demanded by $ i_1 $, and $ x^*_{i_2}=1 $, where $ i_2 $ is the owner of $ o_2 $, and so on. Since there are finite agents, we must find a group of agents $ \{i_1,i_2,\ldots,i_k\} $ who demand each other's endowment as forming a cycle. Then, this cycle must exist at the first step of TTC. These arguments inductively hold for all remaining steps.  
	
	(2) In the house allocation model, consider any economy. Suppose that the procedure of a BTM has $ n $ steps. At every step $ d$, every $ i\in I $ receives the amount $ p_{i,o_i(d)}(d+1)-p_{i,o_i(d)}(d) $ of her preferred object $ o_i(d) $.  We then arbitrarily choose $ n $ rational numbers $ t_1,t_2,\ldots,t_n \in [0,1] $ such that $ 0=t_0<t_1<t_2<\cdots<t_n= 1 $. We will describe the procedure of the BTM as a SEA procedure in which each step $ d $ starts from $ t_{d-1} $ to $ t_d $. For every $ i $, arbitrarily choose an eating rate function $ s_i:[0,1]\rightarrow \Re_+ $ such that for each step $ d $ of the BTM, $ \int_{t_{d-1}}^{t_d}s_i(t)=p_{i,o_i(d)}(d+1)-p_{i,o_i(d)}(d) $. By choosing these rate functions, it is clear that at any $ t\in [t_{d-1},t_d] $, $ i $ eats $ o_i(d) $. So, the constructed SEA procedure coincides with the BTM procedure. Since agents own equal divisions of all objects and every BTM is IR, the constructed SEA procedure satisfies equal-division lower bound.
	
	If a BTM satisfies stepwise EEET, at the first step, agents receive equal amounts of their respective favorite objects, and after that, they still own equal endowments. This inductively holds in each remaining step. So, the BTM coincides with PS.
\end{proof}

\begin{proof}[\normalfont \textbf{Proof of Proposition \ref{prop:regular:fairness}}]
(1) For any regular BTM $ \psi $, consider any two economies $ \Gamma= (I,O,\succ_I, \w) $ and $ \Gamma'= (I,O,\succ'_I, \w') $ such that for some bijection $ g: I\rightarrow I $ and for all $ i\in I $, $ \succ_i=\succ'_{g(i)} $ and $ \w_i=\w'_{g(i)} $.  It is easy to verify that, at each step of $ \psi $ in $ \Gamma $, if the role of each $ i\in I $ is replaced by $ g(i) $, we obtain the corresponding step of $ \psi $ in $ \Gamma' $. So, $ \psi $ is anonymous. Similarly, we can prove that $ \psi $ is neutral.

At the beginning of any step $ d $, for every $ i,j\in I(d) $ with $ \w_i(d)=\w_j(d) $ and for every $ o\in O(d) $, $ \lambda_{i,o}(d)=f(\w_{i,o}(d),\langle \w_o(d) \rangle)=f(\w_{j,o}(d),\langle \w_o(d) \rangle)=\lambda_{j,o}(d)$. So, $ \psi $ satisfies Stepwise EEET. By Proposition \ref{prop:BTA:fairness}, $ \psi $ satisfies EENE.

(2) For any regular BTM $ \psi $ with its parameter function satisfying the condition, consider any economy appearing at the beginning of any step $ d $ with an endowment matrix $ \w(d)=(\w_{i,o}(d))_{i\in I(d), o\in O(d)} $. Consider any $ i,j\in I(d) $ and any $ o\in O(d) $ such that $ \w_{i,o}(d)\ge \w_{j,o}(d)>0 $. Then, we have $ \frac{\w_{i,o}(d)}{\w_{j,o}(d)} \ge \frac{f(\w_{i,o}(d),\langle \w_o(d) \rangle)}{f(\w_{j,o}(d),\langle \w_o(d) \rangle)} \ge 1 $. So, $ \psi $ satisfies bounded advantage. By Proposition \ref{prop:BTA:fairness}, it satisfies bounded envy.
\end{proof}

\begin{proof}[\normalfont \textbf{Proof of Lemma \ref{lemma:regular:decomposability}}]

Let $ \psi $ be a regular BTM. For any economy $ \Gamma= (I,O,\succ_I, \w) $ and any $ i\in I $, we compare the procedures of $ \psi $ in $ \Gamma $ and $ \Gamma_{[i\rightarrow I']} $. We want to prove that, at each step, the assignment received by $ i $ in $ \Gamma $ equals the sum of the assignments received by all $ i' \in I $ in $ \Gamma_{[i\rightarrow I']} $, while every other agent receives the same assignment in the two economies. To differentiate between the parameters and solutions in the equation systems for the two economies, we use $ \lambda $ and $ x$ for $ \Gamma $, and use $ \lambda' $ and $ x'$ for $ \Gamma_{[i\rightarrow I']} $. Note that for every $ o\in O $, $ \langle \w_o \rangle $ do not change in the two economies. 

At step one of $ \psi $, for every $ o\in O $, if $ \w_{i,o}>0 $, in $ \Gamma $, $ \lambda_{i,o}(1)=f^\psi(\w_{i,o},\langle \w_o \rangle)$, whereas in $ \Gamma_{[i\rightarrow I']}$, there exists a unique $ i'\in I' $ such that $ \w_{i',o}=\w_{i,o} $, implying that $ \lambda'_{i',o}(1)=f^\psi(\w_{i,o},\langle \w_o \rangle)=\lambda_{i,o}(1) $. For every $ i''\in I'\backslash \{i'\} $, $ \lambda'_{i'',o}(1)=0 $. If $ \w_{i,o}=0 $, in $ \Gamma $, $ \lambda_{i,o}(1)=0 $, and in $ \Gamma_{[i\rightarrow I']}$, for every $ i'\in I' $, $ \lambda'_{i',o}(1)=0 $. The parameters for all $ j\in I\backslash \{i\} $ do not change in the two economies. In the equation systems, in $ \Gamma $, $ x_i(1)=\sum_{o\in O(1)} \lambda_{i,o}(1) x_o(1)$, whereas in $ \Gamma_{[i\rightarrow I']} $, for every $ i'\in I' $, $x'_{i'}(1)= \sum_{o\in O(1)} \lambda'_{i',o}(1) x'_o(1)$. The equations for the other agents do not change in the two economies. For $ i $'s favorite object $ o_i(1) $, in $ \Gamma $, $ x_{o_i(1)}=\sum_{j\in I\backslash
	\{i\}}\mathbf{1}\{o_j(1)=o_i(1)\} x_j(1)+x_i(1)$, whereas in $ \Gamma_{[i\rightarrow I']} $, $ x'_{o_i(1)}=\sum_{j\in I\backslash
	\{i\}}\mathbf{1}\{o_j(1)=o_i(1)\} x'_j(1)+\sum_{i'\in I'} x'_{i'}(1)$. The equations for the other objects do not change in the two economies. The quota constraints in the two economies are the same. Comparing the equation systems in the two economies implies that in the maximum solutions, $ x^*_i(1)= \sum_{i'\in I'} x^{\prime *}_{i'}(1)$ and for every $ j\in I\backslash \{i\} $, $ x^*_j(1)=x^{\prime *}_j(1) $. 
The above argument inductively holds for each remaining step.
\end{proof}

\begin{proof}[\normalfont \textbf{Proof of Lemma \ref{lemma:ordinalfairness}}]
(\textbf{If}) Let $ p $ be an allocation satisfying the stated condition. Consider any object $ o $ and its any two owners $ i $ and $ j $ with $0< \w_{i,o}\le \w_{j,o} $. For any object $ a $ such that $ p_{i,a}>0$, let $ x=\sum_{o'\succsim_i a}p_{i,o'} $. Then, $ t_i(x)=a $, and $ t_j(x) \succsim_j t_i(x)$. This implies that the amount of weakly better objects than $ a $ received by $ j $ is at least $ x $. Therefore, $ \sum_{o'\succsim_i a}p_{i,o'} \le  \sum_{o'\succsim_j a}p_{j,o'}$. Similarly, for any object $ a $ such that $ p_{j,a}>0$, if $ \sum_{o'\succsim_j a}p_{j,o'}\le \norm{p_i} $, we can switch the roles of $ i $ and $ j $ in the above argument to prove that $\sum_{o'\succsim_j a}p_{j,o'}\le  \sum_{o'\succsim_i a}p_{i,o'} $. If $ \sum_{o'\succsim_j a}p_{j,o'}> \norm{p_i} $, then there exists $ y\in (\norm{p_i},\norm{p_j}] $ such that $ t_j(y)=a $. Since $ t_i(\norm{p_i})\succsim_i t_j(y) $, it implies that $ \sum_{o'\succsim_i a}p_{i,o'} =\norm{p_i} $.

(\textbf{Only if}) Let $ p $ be an allocation satisfying ordinal fairness. Consider any object $ o $ and its any two owners $ i $ and $ j $ with $0< \w_{i,o}\le \w_{j,o} $. Then, $ \norm{p_i}=\w_{i,o}\le \w_{j,o}=\norm{p_j} $. If there exists $  x \in [0, \norm{p_i}] $ such that $ t_i(x)\succ_j t_j(x) $, then $ \sum_{o'\succsim_i t_i(x)}p_{i,o'}\ge x > \sum_{o'\succsim_j t_i(x)}p_{j,o'}$, which violates ordinal fairness. Similarly, we can obtain a contradiction if $ t_j(x)\succ_i t_i(x) $ for some $  x \in [0, \norm{p_i}] $. So, for all $  x \in [0, \norm{p_i}] $, $ t_i(x)\succsim_i t_j(x) $ and $ t_j(x)\succsim_j t_i(x) $. If $ \norm{p_i}< \norm{p_j} $, for any $ y\in (\norm{p_i},\norm{p_j}] $, $ \sum_{o'\succsim_j t_j(y)}p_{j,o'}\ge y > \norm{p_i} $. Then, ordinal fairness requires that $ \sum_{o'\succsim_i t_j(y)}p_{i,o'}= \norm{p_i}  $. This implies that $ t_i(\norm{p_i})\succsim_i t_j(y) $.
\end{proof}

\begin{proof}[\normalfont \textbf{Proof of Lemma \ref{lemma:ordinalfairness:implies:GEENE}}]
In a simple FEE economy, let $ p $ be an allocation satisfying ordinal fairness. Consider any object $ o $ and its any two owners $ i $ and $ j $ with $ 0<\w_{i,o}\le \w_{j,o} $. We first prove that, for all  $ p'_j \in 2^{p_j} $ with $ \norm{p'_j} = \norm{p_i} $, $ p_i \succsim^{sd}_i p'_j $. 
If this does not hold for some $ p'_j $ satisfying the condition, then there exists an object $ a $ such that $ \sum_{o' \succsim_i a} p_{i,o'} < \sum_{o' \succsim_i a} p'_{j,o'} \leq \norm{p_i}$. Since $ p $ satisfies ordinal fairness, by Lemma \ref{lemma:ordinalfairness}, for any $x \in  (\sum_{o' \succsim_i a} p_{i,o'}, \sum_{o' \succsim_i a} p'_{j,o'}) $, any $y_1 \in [x, \norm{p_i} ] $, and any $y_2 \in (\norm{p_i}, \norm{p_j}) $, $a \succ_i t_i (x) \succsim_i t_i (y_1) \succsim_i t_j (y_1)$ and $a \succ_i t_i (x) \succsim_i t_i (\norm{p_i}) \succsim_i t_j (y_2)$.
Therefore, for every object $ o' \in \mathrm{supp} (p_j) $ such that $ o'\succsim_i a $, it must be that $ o' \succsim_j t_j (x) $. This implies that $ \sum_{o' \succsim_i a} p'_{j,o'} \le \sum_{o' \succsim_i a} p_{j,o'} \le x < \sum_{o'\succsim_i a} p'_{j,o'}$, which is a contradiction. 
Similarly, we can prove that $ p_j \succsim^{sd}_j p_i $.
\end{proof}

\begin{proof}[\normalfont \textbf{Proof of Proposition \ref{prop:ordinalfair}}]
Let $ p $ denote the outcome of $ \psi^E $ in any simple FEE economy. Consider any object $ o $ and its any two owners $ i $ and $ j $ with $0< \w_{i,o}\le \w_{j,o} $. Then, $ \norm{p_i}=\w_{i,o}\le \w_{j,o}=\norm{p_j} $. For any $  x \in [0, \norm{p_i}] $, there exists a step $ d $ in which the cumulative amount of objects received by the two agents reaches $ x $. Therefore, $ t_i(x) $ and $ t_j(x) $ correspond to the objects received by the two agents at step $ d $. Since the two agents receive their respective favorite objects at that step, $ t_i(x)\succsim_i t_j(x) $ and $ t_j(x)\succsim_j t_i(x) $. If $ \w_{i,o}<\w_{j,o} $, then $ j $ inputs her additional endowment of $ o $ into trading only after $ i $ has exhausted her endowment of $ o $. Therefore, for any $ y\in (\norm{p_i},\norm{p_j}] $, $ t_j(y) $ is an object no better than any object received by $ i $ for $ \succ_i $. This means that $ t_i(\norm{p_i})\succsim_i t_j(y) $. By Lemma \ref{lemma:ordinalfairness}, $ p $ satisfies ordinal fairness. 
\end{proof}

\begin{proof}[\normalfont\textbf{Proof of Proposition \ref{prop:characterization}}]
(\textbf{If}) Sd-efficiency of $ \psi^E $ has been proved by Proposition \ref{prop:BTM:efficiency}. We prove that $ \psi^E $ satisfies endowment-expansion invariance. 

Consider any two economies $\Gamma=(I,O,\succ_I, \w) $ and $\Gamma'=(I,O,\succ_I, \w') $ such that $ \w'\ge \w $ and for every $ o\in O $ such that $ \sum_{i\in I}\w'_{i,o}>\sum_{i\in I}\w_{i,o} $,

(1) $\w'_{i,o}> \w_{i,o} \implies \w_{i,o}\ge \w_{j,o}$ for all $j\in I\backslash \{i\}$,

(2) $ \psi^E_{i,o'} (\Gamma)>0 \implies o'\succsim_i o $,

For each $ o\in O $, let $ I_o $ denote the set of owners of $ o $ in $ \Gamma $. 

First, for all $ o\in O $ and all $ i\in I_o $ such that $ \w'_{i,o}> \w_{i,o}$, since $ \w_{i,o}\ge \w_{j,o} $ for all $ j\in I\backslash \{i\} $, the procedure of $ \psi^E $ ensures that $ i $'s increased endowment of $ o $ will not be traded before all $ j\in I_o $ have exhausted their original endowment of $ o $. Second, since for all $ o\in O $ such that $ \sum_{i\in I}\w'_{i,o}>\sum_{i\in I}\w_{i,o} $, $ \psi^E_{i,o'} (\Gamma)>0 \implies o'\succsim_i o $, every agent will not demand the increased amount of every such $ o $ before the original amounts of the objects the agent receives in $ \psi^E(\Gamma) $ are exhausted. These together imply that the procedure of $ \psi^E $ in $ \Gamma' $ remains the same with the procedure of $ \psi^E $ in $ \Gamma $ until all agents have exhausted their endowments in $ \Gamma $. Therefore, for all  $ i\in I $, if $ \w_i=\w'_i $, then $ \psi^E_i (\Gamma')=\psi^E_i (\Gamma) $; if $ \w_i\neq\w'_i $, then $ \psi^E_i (\Gamma')[\norm{\psi^E_i (\Gamma)}]= \psi^E_i (\Gamma)$.

(\textbf{Only if}) Let $ \psi $ be a mechanism satisfying sd-efficiency and endowment-expansion invariance. For any economy $ \Gamma =(I,O,\succ_I, \w)$, we prove that $ \psi(\Gamma)=\psi^E(\Gamma) $.

In the procedure of $ \psi^E $, for each step $ d $, we construct an economy $ \Gamma(d)=(I,O,\succ_I,\w^*(d)) $ such that, for all $ i\in I $, $ \w^*_i(d)=\w_i - \w_i(d+1) $. In words, $ \w^*_i(d) $ denotes $ i $'s exhausted endowments by the end of step $ d $. 
At each step, all remaining owners of each remaining object use equal amounts of the object for trading. Therefore, for all $ o\in O $ and all $ i,j\in I_o $ (where $ I_o $ denotes the set of owners of $ o $), at step one, $ \w^*_{i,o}(1)=\w^*_{j,o}(1) $; at any step $ d\ge 2 $, if $ \w^*_{i,o}(d)>\w^*_{i,o}(d-1)$, then, from step one to step $ d $, as long as $ o $ is involved in trading at any step, $ i $ contributes her endowment of $ o $ to trading at that step. So, $ \w^*_{i,o}(d-1)\ge \w^*_{j,o}(d-1) $ for all other $ j\in I_o\backslash \{i\} $, and for any other $ j\in I_o\backslash \{i\} $ that also satisfies $ \w^*_{j,o}(d)>\w^*_{j,o}(d-1)$, $ \w^*_{i,o}(d-1)=\w^*_{j,o}(d-1) $ and $ \w^*_{i,o}(d)= \w^*_{j,o}(d)$.

We prove by induction that, for all step $ d $, $ \psi(\Gamma(d))=\psi^E(\Gamma(d)) $. So, $ \psi(\Gamma)=\psi^E(\Gamma) $.

\textbf{Base step}: At step one of $ \psi^E $, every agent $ i $ supplies her endowment $ \w^*_i(1) $ in exchange for an equal amount of her favorite object. Therefore, $ \psi^E(\Gamma(1))  $ is the only sd-efficient allocation in the economy $ \Gamma(1) $. Since $ \psi $ is sd-efficient, we must have $ \psi(\Gamma(1))=\psi^E(\Gamma(1)) $. 

\textbf{Induction step}: Suppose that, for every $ 1\le d\le \ell $, $ \psi(\Gamma(d))=\psi^E(\Gamma(d))$. We want to prove that $ \psi(\Gamma(\ell+1))=\psi^E(\Gamma(\ell+1)) $. We first prove that $ \Gamma(\ell) $ and $ \Gamma(\ell+1) $ satisfy the conditions in the definition of endowment-expansion invariance. First, obviously, $ \w^*(\ell+1)\ge \w^*(\ell) $. Second, for all $ o\in O $ and all $ i\in I_o $ such that $ \w^*_{i,o}(\ell+1)> \w^*_{i,o}(\ell) $, we have explained that $ \w^*_{i,o}(\ell) \ge \w^*_{j,o}(\ell)$ for all $ j\in I_o\backslash \{i\} $. Last, for any object $ o $ such that $\sum_{i\in I} \w^*_{i,o}(\ell+1)> \sum_{i\in I} \w^*_{i,o}(\ell) $, it means that $ o $ is not exhausted at the end of step $ \ell $ of $ \psi^E $. Therefore, if $ \psi^E_{i,o'}(\Gamma(\ell))>0 $, $ i $ must most prefer $ o' $ in some step, which implies that $ o'\succsim_i o $. Then, endowment-expansion invariance of $ \psi $ implies that, for all $ i\in I $, if $ \w^*_{i}(\ell+1)=\w^*_{i}(\ell) $, then 	
$ \psi_i(\Gamma(\ell+1))= \psi_i(\Gamma(\ell))$, while if $ \w^*_i(\ell+1)\neq \w^*_i(\ell) $, then
$ \psi_i(\Gamma(\ell+1))[\norm{\psi_i(\Gamma(\ell))}]=\psi_i(\Gamma(\ell)) $. Since $ \psi $ satisfies sd-efficiency, $ \psi(\Gamma(\ell+1))- \psi(\Gamma(\ell))$ must be an sd-efficient allocation in the economy $(I,O,\succ_I, \w^*(\ell+1)-\w^*(\ell)) $. At step $ \ell+1 $ of $ \psi^E $, if an agent obtains a positive amount of any object, it must be her favorite object among the remaining ones. Therefore, $ \psi^E(\Gamma(\ell+1))- \psi^E(\Gamma(\ell))$ must be the only sd-efficient allocation in the economy $(I,O,\succ_I, \w^*(\ell+1)-\w^*(\ell)) $. So, $ \psi(\Gamma(\ell+1))- \psi(\Gamma(\ell))=\psi^E(\Gamma(\ell+1))- \psi^E(\Gamma(\ell))$, which, given $ \psi(\Gamma(\ell))=\psi^E(\Gamma(\ell)) $, implies that $ \psi(\Gamma(\ell+1))=\psi^E(\Gamma(\ell+1)) $. 
\end{proof}

\begin{proof}[\normalfont\textbf{Proof of Proposition \ref{prop:asymptotic:IC}}]

(\textbf{Bounded invariance}) Let $ \psi $ be a regular BTM. Consider an economy $ \Gamma= (I,O,\succ_I, \w) $. For any $ i\in I $, any $ o\in O $, and any $ \succ'_i \neq \succ_i $ such that $ \succ_i|_{U(\succsim_i,o)} = \succ'_i|_{U(\succsim'_i,o)} $, construct $ \Gamma' $ from $ \Gamma $ by replacing $ \succ_i $ with $ \succ'_i $. $ \Gamma $ and $ \Gamma' $ have equal endowment distributions. So, the procedures of $ \psi $ in the two economies coincide until $ o $ is exhausted. Therefore,  for every $ j\in I $, $\psi_{j,o}(\Gamma)=\psi_{j,o}(\Gamma')$.

(\textbf{Asymptotic strategy-proofness}) Let $ \psi $ be a continuous regular BTM. So, it satisfies EENE, which implies ETE. 
In any economy $ \Gamma=(I,O,\succ_I,\w) $, for every $ (\hat{\w},\succ)\in \Omega_\Gamma\times \mathcal{P} $, if there exist type-$ (\hat{\w},\succ) $ agents, they are treated equally in the procedure and outcome of $ \psi $. Let $ p =\psi (\Gamma) $. We use $ p_{(\hat{\w},\succ)} $ to denote the assignment received by each type-$ (\hat{\w},\succ) $ agent. Moreover, we describe the equations at each step of $ \psi $ by classifying students into types. 

Specifically, at each step $ d $ of $ \psi $, let $ \mathcal{T}(d) $ denote the set of agent types among $ I(d) $. For each $ (\hat{\w},\succ)\in \mathcal{T}(d) $, let $ N(\hat{\w},\succ)(d) $ and $ A(\hat{\w},\succ)(d) $ respectively denote the number and the proportion of type-$ (\hat{\w},\succ) $ agents among $ I(d) $; let $ o_{(\hat{\w},\succ)}(d) $ denote their favorite object among $ O(d) $; let $ \lambda_{(\hat{\w},\succ),o}(d) $ denote the parameter for them; let $ x_{(\hat{\w},\succ)}(d) $ denote the amount of the favorite object they receive at the step; let $ \w_{(\hat{\w},\succ)}(d) $ denote their endowment at the beginning of the step. Then, the equation system (\ref{equation7}) and constraints (\ref{equation9}) can be written as
\begin{equation*}
	\begin{cases}
		x_o(d)=\sum_{(\hat{\w},\succ)\in \mathcal{T}(d)} \mathbf{1}\{o_{(\hat{\w},\succ)}(d)=o\}\cdot N(\hat{\w},\succ)(d) \cdot x_{(\hat{\w},\succ)}(d)  & \text{for all } o\in O(d),\\
		x_{(\hat{\w},\succ)}(d)=\sum_{o\in O(d)} \lambda_{(\hat{\w},\succ),o}(d) \cdot x_o(d) &  \text{for all }(\hat{\w},\succ)\in \mathcal{T}(d),
	\end{cases}
\end{equation*}
subject to \quad
$
\lambda_{(\hat{\w},\succ),o}(d) x_o(d) \le \w_{(\hat{\w},\succ),o}(d) $ \quad for all $ (\hat{\w},\succ)\in \mathcal{T}(d) $ and all $ o\in O(d) $.

Define $ \hat{x}_o(d)= \frac{x_o(d)}{|I(d)|}$ and use it to replace $ x_o(d) $ in the above equations and constraints: 
\begin{equation}\label{equation10}
	\begin{cases}
		\hat{x}_o(d)=\sum_{(\hat{\w},\succ)\in \mathcal{T}(d)} \mathbf{1}\{o_{(\hat{\w},\succ)}(d)=o\}\cdot A(\hat{\w},\succ)(d) \cdot x_{(\hat{\w},\succ)}(d)  & \text{for all } o\in O(d),\\
		x_{(\hat{\w},\succ)}(d)=\sum_{o\in O(d)} \lambda_{(\hat{\w},\succ),o}(d) \cdot |I(d)| \cdot \hat{x}_o(d) &  \text{for all }(\hat{\w},\succ)\in \mathcal{T}(d),
	\end{cases}
\end{equation}
subject to \quad 
$
\lambda_{(\hat{\w},\succ),o}(d) \cdot |I(d)| \cdot \hat{x}_o(d) \le \w_{(\hat{\w},\succ),o}(d) $ \quad  for all $ (\hat{\w},\succ)\in \mathcal{T}(d) $ and all $ o\in O(d) $.

Let $ \Theta(d) $ denote the coefficient matrix of the equation system (\ref{equation10}) where the rows and columns are indexed by $ O(d) \cup \mathcal{T}(d) $. The elements of $ \Theta(d) $ are $ A(\hat{\w},\succ)(d) $ and $ \lambda_{(\hat{\w},\succ),o}(d) |I(d)|  $.

Now, consider any regular sequence of economies $ (\Gamma^{[n]})_{n=1}^\infty $, where $ \Gamma^{[1]} =\Gamma$. There exists $ N^*>0 $ such that, for all $ n\ge N^* $, there are more than one type-$ (\hat{\w},\succ) $ agent for all $(\hat{\w},\succ)\in \Omega_\Gamma\times \mathcal{P} $ in $ \Gamma^{[n]} $. Suppose for some $ (\hat{\w},\succ)\in \Omega_\Gamma\times \mathcal{P} $, a type-$ (\hat{\w},\succ) $ agent reports $ \succ' \neq \succ $ in all $ \Gamma^{[n]} $ with $ n \ge N^* $. Then, we obtain a new regular sequence of economies, $ (\tilde{\Gamma}^{[n]})_{n=N^*}^\infty $. For every $ n\ge N^* $, $\tilde{N}^{[n]}(\hat{\w},\succ)=N^{[n]}(\hat{\w},\succ)-1
$, $\tilde{N}^{[n]}(\hat{\w},\succ')=N^{[n]}(\hat{\w},\succ')+1
$, and the numbers of agents of other types in $ \tilde{\Gamma}^{[n]} $ are equal to those in $ \Gamma^{[n]} $. However, for all $ (\hat{\w},\succ)\in \Omega_\Gamma\times \mathcal{P} $, $ \lim_{n \rightarrow \infty}\tilde{A}^{[n]}(\hat{\w},\succ)=\lim_{n \rightarrow \infty} A^{[n]}(\hat{\w},\succ)  $.

Let $ (p^{[n]})_{n=N^*}^\infty $ and $ (\tilde{p}^{[n]})_{n=N^*}^\infty $ denote the sequences of allocations found by $ \psi $ for $ (\Gamma^{[n]})_{n=N^*}^\infty $ and $ (\tilde{\Gamma}^{[n]})_{n=N^*}^\infty $, respectively. For every $ n \ge N^* $, $ \tilde{p}^{[n]} $ may differ from $ p^{[n]} $. However, we prove that, because $ \psi $ is continuous and $ \lim_{n \rightarrow \infty}\tilde{A}^{[n]}(\hat{\w},\succ)=\lim_{n \rightarrow \infty} A^{[n]}(\hat{\w},\succ)  $ for all $ (\hat{\w},\succ) $, $$ \lim_{n \rightarrow \infty}(\tilde{p}^{[n]}-p^{[n]}) =\mathbf{0}. $$

Specifically, let $ \Theta^{[n]}(d) $ and $ \tilde{\Theta}^{[n]}(d) $ denote the coefficient matrix of the equation system at step $ d $ of $ \psi $ in $ \Gamma^{[n]} $ and $ \tilde{\Gamma}^{[n]} $, respectively. 

At step one of $ \psi $, because agents' endowments are same in $ \Gamma^{[n]} $ and $ \tilde{\Gamma}^{[n]} $, if $ \Theta^{[n]}(d) $ differs from $ \tilde{\Theta}^{[n]}(d) $, the difference must be due to the manipulation. That is, $ A^{[n]}(\hat{\w},\succ) > \tilde{A}^{[n]}(\hat{\w},\succ)$ and $ A^{[n]}(\hat{\w},\succ') < \tilde{A}^{[n]}(\hat{\w},\succ')$. However, since $ \lim_{n \rightarrow \infty}\tilde{A}^{[n]}(\hat{\w},\succ)=\lim_{n \rightarrow \infty} A^{[n]}(\hat{\w},\succ)  $ for all $ (\hat{\w},\succ) $, as $ n $ goes to infinity, $ \Theta^{[n]}(1) $ and $ \tilde{\Theta}^{[n]}(1) $ are arbitrarily close, implying that the maximum solutions at step one of $ \psi $ in the two economies are arbitrarily close. 

Then, consider step two of $ \psi $. Since the maximum solutions at step one of $ \psi $ in the two economies may differ, agents' endowments at the beginning of step two in the two economies may differ. However, as $ n $ goes to infinity, since the maximum solutions at step one in the two economies are arbitrarily close, agents' endowments at step two in the two economies must also be arbitrarily close. Then, because $ \psi $ is continuous, $ \lambda_{(\hat{\w},\succ),o}(2)  $ and $ \tilde{\lambda}_{(\hat{\w},\succ),o}(2) $ for all $ (\hat{\w},\succ) $ and all $ o $ in the two economies must be arbitrarily close. Therefore, $ \Theta^{[n]}(2) $ and $ \tilde{\Theta}^{[n]}(2) $ are arbitrarily close, implying that their maximum solutions at step two are arbitrarily close. 

The above argument can be inductively applied to the remaining steps of $ \psi $. Because there are finite agent types, the procedure of $ \psi $ must stop in finite steps with a fixed upper bound in all economies. Therefore, we conclude that $ \lim_{n \rightarrow \infty}(\tilde{p}^{[n]}-p^{[n]}) =\mathbf{0}. $ This means that, for any $ \varepsilon>0 $, there exists $ n^*(\hat{\w},\succ,\succ')>N^* $ such that, for all $ n> n^*(\hat{\w},\succ,\succ')$,
\[
\max_{o\in O}\big|\sum_{o'\succsim o}p^{[n]}_{(\hat{\w},\succ'),o'}-\sum_{o'\succsim o}\tilde{p}^{[n]}_{(\hat{\w},\succ'),o'}\big|\le \varepsilon.
\]

Since $ \psi $ satisfies EENE, $ p^{[n]}_{(\hat{\w},\succ)} \succsim^{sd}p^{[n]}_{(\hat{\w},\succ')} $.
That is, for all $ o\in O $, 
$
\sum_{o'\succsim o} p^{[n]}_{(\hat{\w},\succ),o'} \ge \sum_{o'\succsim o} p^{[n]}_{(\hat{\w},\succ'),o'}
$. Thus, for all $ o\in O $, \[
\sum_{o'\succsim o} p^{[n]}_{(\hat{\w},\succ),o'} \ge  \sum_{o'\succsim o}\tilde{p}^{[n]}_{(\hat{\w},\succ'),o'}-\varepsilon.
\]

Define $ n^*=\max\{n^*(\hat{\w},\succ,\succ'):(\hat{\w},\succ,\succ')\in \Omega_{\Gamma}
\times \mathcal{P}\times \mathcal{P}, \succ\neq \succ' \} $. Then, for all $ n> n^* $ and all $ i\in I^{[n]}$  in $ \Gamma^{[n]} $, if $ i $ reports any $ \succ'\neq \succ $ when her true preference is $ \succ $, then her assignment from truth-telling weakly stochastically dominates her assignment from reporting $ \succ' $, with an error bounded by $ \varepsilon $.
\end{proof}

\begin{proof}[\normalfont\textbf{Proof of Proposition \ref{prop:priority}}]
Sd-efficiency and bounded invariance hold similarly as for $ \psi^E $.  If an agent $ i $ has weakly higher priority than another $ j $ for every object, at every step of PTM, $ i $ obtains weakly more of her favorite object than $ j $ does. So, $ i $ does not envy $ j $.
\end{proof}

\begin{proof}[\normalfont\textbf{Proof of Proposition \ref{prop:pse}}]
We apply PTM to the HET model by regarding the model as a special case of the priority-based allocation model, and explain that the procedure essentially coincides with that of ETM. 

Let $ I(d) $, $ O(d) $, and $ o_i(d) $ be defined as in PTM. Among $ O(d) $, let $ O_E(d)=\{o\in O_E:i_o\in I(d)\} $, which is the set of private endowments whose owners have not been removed at the beginning of step $ d $. Then, at each step $ d $, PTM finds the maximum solution $ \mathbf{x}^*(d) $ to
\begin{equation*}
	\begin{cases}
		x_o(d)=\sum_{i\in I(d)} \mathbf{1}\{o_i(d)=o\} \cdot x_i(d) & \text{ for all } o\in O(d),\\ 
		x_i(d)=\sum_{o\in O(d)\backslash O_E(d)} \dfrac{x_o(d)}{|I(d)|} & \text{ for all }i\in I(d)\backslash I_E,\\
		x_j(d)=\sum_{o\in O(d)\backslash O_E(d)} \dfrac{x_o(d)}{|I(d)|}+x_{o_j}(d) & \text{ for all }j\in I(d)\cap I_E.
	\end{cases}
\end{equation*}
subject to the constraints
\begin{equation*}
	\begin{cases}
		x_o(d)\le 1-\sum_{k=1}^{d-1} x^*_o(k) & \text{ for all } o\in O(d),\\
		x_i(d)\le 1-\sum_{k=1}^{d-1} x^*_i(k) & \text{ for all }i\in I(d).
	\end{cases}
\end{equation*}

At each step $ d $ of ETM, let $ y_i(d) $ denote the amount of the favorite object received by each $ i\in I(d) $; let $ y_o(d) $ denote the amount of each $ o\in O(d)  $ assigned to agents. We show that $ y(d)= \{y_a(d)\}_{a\in I(d)\cup O(d)} $  is the maximum solution to the above equations.

At each step $ d $ of ETM, if there exist cycles among existing tenants, for each existing tenant $ j $ involved in a cycle, it is obvious that $ y_j(d)=y_{\w(j)}(d) $, whereas for each agent $ i $ and each object $ o $  not involved in any cycle, $ y_i(d)=y_o(d)=0 $. So, $ y(d) $ satisfies the equations and constraints at step $ d $ of PTM. 

If there do not exist cycles, let $ t(d) $ be the duration of step $ d $. Then, for each $ o\in O(d) $, $ y_o(d)=\sum_{i\in I(d)} \mathbf{1}\{o_i(d)=o\} \cdot y_i(d) $. For each $ i\in I(d)\backslash I_E $, $ y_i(d)=s_i(d)t(d)=t(d) $, whereas for each $ j\in I(d)\cap I_E $, $ y_j(d)=s_j(d)t(d)=\sum_{i\in I(d)}\mathbf{1}\{o_i(d)=\w(j)\}\cdot s_i(d)t(d)+t(d)=\sum_{i\in I(d)}\mathbf{1}\{o_i(d)=\w(j)\}\cdot y_i(d)+t(d)=y_{\w(j)}(d)+t(d)$. We say that an agent $ i $ is linked to an object $ o $ if there exist distinct existing tenants $ j_1,j_2,\ldots, j_\ell $ such that $ i $ points to $ \w(j_1) $, $ j_1 $ points to $ \w(j_2) $, $ j_2 $ points to $ \w(j_3) $, \ldots, $ j_\ell $ points to $ o $. Because there do not exist cycles, every agent in $ I(d) $ must be linked to an object in $ O(d)\backslash O_E(d) $. Then, the ``you request my house - I get your rate'' rule implies that the sum of the rates of the agents who are eating the objects in $ O(d)\backslash O_E(d) $ equals the total number of agents, $ \norm{I(d)} $. So, $|I(d)| t(d)=\sum_{o\in O(d)\backslash O_E(d)} y_o(d)$. Therefore, for each $ i\in I(d)\backslash I_E $, $ y_i(d)= \sum_{o\in O(d)\backslash O_E(d)} y_o(d)/\norm{I(d)}$, whereas for each $ j\in I(d)\cap I_E $, $ y_j(d)=y_{\w(j)}(d)+\sum_{o\in O(d)\backslash O_E(d)} y_o(d)/\norm{I(d)}$. This means that $ y(d) $ satisfies the equations and constraints at step $ d $ of PTM. 

Each step $ d $ of ETM stops when some agent is satisfied or when some object is exhausted. Therefore, $ y(d) $ must coincide with the maximum solution to the equations at step $ d $ of PTM. This means that ETM is equivalent to the reduction of PTM to the HET model.

Then, the sd-efficiency, no envy towards newcomers, and bounded invariance of ETM are implications of the properties of PTM in Proposition \ref{prop:priority}. ETM is IR for existing tenants because every private endowment is never exhausted before its owner being removed.
\end{proof}


\section{The procedure of $ \psi^E $ in Example \ref{example:2}}\label{section:BTM:procedure}

Step 1 of $ \psi^E $ in Example \ref{example:2} has been presented in Section \ref{section:method}. This appendix presents the remaining steps of $ \psi^E $. \autoref{figure:procedure:Example2} draws the generated graphs at all steps.

\begin{description}

\item[\textbf{Step 2:}] $ 1,2, 3, 5$ most prefer $ c $ and $ 4 $ most prefers $ a $. We solve the following equations: 
\begin{equation*}
	\scalebox{.9}{$\begin{cases}
			x_{a}=x_4,\\
			x_{b}=0,\\
			x_{c}=x_1+x_2+x_3+x_5,\\
			x_{e}=0,\\
			x_1=1/2x_{a} +1/2x_{b},\\
			x_2=1/2x_{a} +1/2x_{b}\\
			x_3=1/3x_{c} +1/3x_{e},\\
			x_4=1/3x_{c} +1/3x_{e},\\
			x_5=1/3x_{c} +1/3x_{e}.
		\end{cases}
		\text{subject to} \quad
		\begin{cases}
			1/2x_{a}\le 1/6,\\
			1/2x_{b}\le 1/2, \\
			1/3x_{c}\le 1/12, \\
			1/3x_{e}\le 1/4.
		\end{cases}$
	}
\end{equation*}

The maximum solution is
\[
\scalebox{.9}{$
	\mathbf{x}^*=\left(
	\begin{array}{ccccccccc}
		x^*_{1} & x^*_{2} & x^*_{3} & x^*_{4} &x^*_{5} & x^*_{a} & x^*_{b} & x^*_{c} & x^*_{e}\\
		1/24&1/24&1/12&1/12&1/12&1/12&0&3/12&0
	\end{array}
	\right).$
}
\]

The amounts of $ c $ owned by $ 3 $ and $ 4 $ are exhausted. But $ 5 $ still owns $ 1/4c $.

\item[\textbf{Step 3:}] We solve the following equations:

\begin{equation*}
	\scalebox{.9}{$\begin{cases}
			x_{a}=x_4,\\
			x_{b}=0,\\
			x_{c}=x_1+x_2+x_3+x_5,\\
			x_{e}=0,\\
			x_1=1/2x_{a} +1/2x_{b},\\
			x_2=1/2x_{a} +1/2x_{b}\\
			x_3=1/3x_{e},\\
			x_4=1/3x_{e},\\
			x_5=x_{c} +1/3x_{e}.
		\end{cases}
		\text{subject to} \quad
		\begin{cases}
			1/2x_{a}\le 1/8,\\
			1/2x_{b}\le 1/2, \\
			x_{c}\le 1/4, \\
			1/3x_{e}\le 1/4.
		\end{cases}$
	}
\end{equation*}

In the maximum solution, only $ 5 $ obtains her endowment $ 1/4c $. Then, $ c $ is exhausted.

\item[\textbf{Step 4:}] $ 1,2,3,4 $ most prefer $ a $ and $ 5 $ most prefers $ e$. We solve the following equations:

\begin{equation*}
	\scalebox{.9}{$\begin{cases}
			x_{a}=x_1+x_2+x_3+x_4,\\
			x_{b}=0,\\
			x_{e}=x_5,\\
			x_1=1/2x_{a} +1/2x_{b},\\
			x_2=1/2x_{a} +1/2x_{b}\\
			x_3=1/3x_{e},\\
			x_4=1/3x_{e},\\
			x_5=1/3x_{e}.
		\end{cases}
		\text{subject to} \quad
		\begin{cases}
			1/2x_{a}\le 1/8,\\
			1/2x_{b}\le 1/2, \\
			1/3x_{e}\le 1/4.
		\end{cases}$
	}
\end{equation*}

In the maximum solution,  $ 1$ and $2$ obtain their endowment $ 1/8a $. Then, $ a $ is exhausted.

\item[\textbf{Step 5:}]  $ 1,2 $ most prefer $ b $ and $ 3,4,5 $ most prefer $ e $. We solve the following equations:

\begin{equation*}
	\scalebox{.9}{$\begin{cases}
			x_{b}=x_1+x_2,\\
			x_{e}=x_3+x_4+x_5,\\
			x_1=1/2x_{b},\\
			x_2=1/2x_{b}\\
			x_3=1/3x_{e},\\
			x_4=1/3x_{e},\\
			x_5=1/3x_{e}.
		\end{cases}
		\text{subject to} \quad
		\begin{cases}
			1/2x_{b}\le 1/2, \\
			1/3x_{e}\le 1/4.
		\end{cases}$
	}
\end{equation*}

In the maximum solution, $ 1 $ and $2 $ obtain their endowment $ 1/2b $; $ 3 $, $ 4 $, and $ 5 $ obtain their endowment $ 1/4e $.

\item[\textbf{Step 6:}] $ 5 $ is the only remaining agent who owns the remaining endowment $ 1/4e $. So, $ 5 $ obtains $ 1/4e $. The graph for this step is omitted in \autoref{figure:procedure:Example2}.		
\end{description}

The allocation found by $ \psi^E $ is 

\begin{table}[!htb]
\centering
\scalebox{.9}{
	\begin{tabular}[c]{c|ccccc}
		& $a$ & $b$ & $c$ & $d$ & $e$ \\\hline
		$1$ & $1/8$ & $1/2$ & $3/8$ &  & \\
		$2$ & $1/8$ & $1/2$ & $1/24$ & $1/3$ &  \\
		$3$ &  &  & $1/12$ & $2/3$ & $1/4$ \\
		$4$ & $3/4$ &  &  &  & $1/4$ \\
		$5$ &  &  & $1/2$ &  & $1/2$ \\
	\end{tabular}
}
\end{table}

\begin{figure}[!ht]
\centering
\scalebox{.9}{
	\begin{subfigure}{.45\linewidth}
		\raggedright
		\begin{tikzpicture}[semithick,bend angle=20,xscale=0.8,yscale=0.8]
			\node (o1) at (2,0) [object] {$a$};
			\node (o5) at (8,5) [object] {$e$};
			\node (o3) at (5,2) [object] {$c$};
			\node (o2) at (0,0) [object] {$b$};
			\node (i1) at (5,0) [agent] {$1$};
			\node (i2) at (0,5) [agent] {$2$};
			\node (i3) at (5,5) [agent] {$3$};
			\node (i4) at (2,2) [agent] {$4$};
			\node (i5) at (8,2) [agent] {$5$};
			
			\node at (0.8,2) {\footnotesize $1/6$};
			\node at (3.5,1.7) {\footnotesize $1/12$};
			\node at (4.2,3.5) {\footnotesize $1/12$};
			\node at (6.6,4) {\footnotesize $1/4$};
			\node at (6.5,5.3) {\footnotesize $1/4$};
			\node at (6.5,1.4) {\footnotesize $1/3$};
			\node at (7.7,3.5) {\footnotesize $1/2$};
			\node at (-.3,2) {\footnotesize $1/2$};
			\node at (2.4,-.8) {\footnotesize $1/2$};
			\node at (3.5,.2) {\footnotesize $1/6$};
			
			\draw[-latex] (i1) to (o3);
			\draw[-latex] (i2) to (o3);
			\draw[-latex] (i3) [bend left] to (o3);
			\draw[-latex] (i5) [bend right] to (o3);
			\draw[-latex] (i4) to (o1);
			\draw[-latex] (o1) to (i1);
			\draw[-latex] (o1) to (i2);
			\draw[-latex] (o3) [bend left] to (i3);
			\draw[-latex] (o3) to (i4);
			\draw[-latex] (o3) [bend right] to (i5);
			\draw[-latex,dashed] (o2) to (i2);
			\draw[-latex,dashed] (o2) [bend right] to (i1);
			\draw[-latex,dashed] (o5) to (i3);
			\draw[-latex,dashed] (o5) to (i5);
			\draw[-latex,dashed] (o5) to (i4);
		\end{tikzpicture}
		\subcaption{Step 2}\label{figure:step2}
	\end{subfigure}
	\quad\quad\quad
	\begin{subfigure}{.45\linewidth}
		\raggedleft
		\begin{tikzpicture}[semithick,bend angle=20,xscale=0.8,yscale=0.8]
			\node (o1) at (2,0) [object] {$a$};
			\node (o5) at (8,5) [object] {$e$};
			\node (o3) at (5,2) [object] {$c$};
			\node (o2) at (0,0) [object] {$b$};
			\node (i1) at (5,0) [agent] {$1$};
			\node (i2) at (0,5) [agent] {$2$};
			\node (i3) at (5,5) [agent] {$3$};
			\node (i4) at (2,2) [agent] {$4$};
			\node (i5) at (8,2) [agent] {$5$};
			
			\node at (0.8,2) {\footnotesize $1/8$};
			\node at (6.6,4) {\footnotesize $1/4$};
			\node at (6.5,5.3) {\footnotesize $1/4$};
			\node at (6.5,1.4) {\footnotesize $1/4$};
			\node at (7.7,3.5) {\footnotesize $1/2$};
			\node at (-.3,2) {\footnotesize $1/2$};
			\node at (2.4,-.8) {\footnotesize $1/2$};
			\node at (3.5,.2) {\footnotesize $1/8$};
			
			\draw[-latex,dashed] (i1) to (o3);
			\draw[-latex,dashed] (i2) to (o3);
			\draw[-latex,dashed] (i3)  to (o3);
			\draw[-latex] (i5) [bend right] to (o3);
			\draw[-latex,dashed] (i4) to (o1);
			\draw[-latex,dashed] (o1) to (i1);
			\draw[-latex,dashed] (o1) to (i2);
			\draw[-latex] (o3) [bend right] to (i5);
			\draw[-latex,dashed] (o2) to (i2);
			\draw[-latex,dashed] (o2) [bend right] to (i1);
			\draw[-latex,dashed] (o5) to (i3);
			\draw[-latex,dashed] (o5) to (i5);
			\draw[-latex,dashed] (o5) to (i4);
		\end{tikzpicture}
		\subcaption{Step 3}\label{figure:step3}
	\end{subfigure}
}
\par\bigskip
\scalebox{.9}{
	\begin{subfigure}{.45\linewidth}
		\raggedright
		\begin{tikzpicture}[semithick,bend angle=20,xscale=0.8,yscale=0.8]
			\node (o1) at (2,2) [object] {$a$};
			\node (o5) at (8,5) [object] {$e$};
			\node (o2) at (0,0) [object] {$b$};
			\node (i1) at (5,0) [agent] {$1$};
			\node (i2) at (0,5) [agent] {$2$};
			\node (i3) at (5,5) [agent] {$3$};
			\node (i4) at (5,2) [agent] {$4$};
			\node (i5) at (8,2) [agent] {$5$};
			
			\node at (.85,3.4) {\footnotesize $1/8$};
			\node at (6.5,3.1) {\footnotesize $1/4$};
			\node at (6.5,4.7) {\footnotesize $1/4$};
			\node at (8.7,3.5) {\footnotesize $1/2$};
			\node at (-.4,2) {\footnotesize $1/2$};
			\node at (2.4,-.4) {\footnotesize $1/2$};
			\node at (2.8,.6) {\footnotesize $1/8$};
			
			\draw[-latex] (i1) [bend right] to (o1);
			\draw[-latex] (i2) [bend left] to (o1);
			\draw[-latex,dashed] (i3)  to (o1);
			\draw[-latex,dashed] (i5) [bend left] to (o5);
			\draw[-latex,dashed] (i4) to (o1);
			\draw[-latex] (o1) [bend right] to (i1);
			\draw[-latex] (o1) [bend left] to (i2);
			\draw[-latex,dashed] (o2) to (i2);
			\draw[-latex,dashed] (o2)  to (i1);
			\draw[-latex,dashed] (o5) to (i3);
			\draw[-latex,dashed] (o5) [bend left] to (i5);
			\draw[-latex,dashed] (o5)  to (i4);
		\end{tikzpicture}
		\subcaption{Step 4}\label{figure:step4}
	\end{subfigure}
	\quad\quad \quad
	\begin{subfigure}{.45\linewidth}
		\vspace{1.6cm}
		\raggedleft
		\begin{tikzpicture}[semithick,bend angle=15,xscale=.8,yscale=.8]
			\node (i1) at (0,0) [agent] {$1$};
			\node (i2) at (0,3) [agent] {$2$};
			\node (i3) at (4,3) [agent] {$3$};
			\node (i4) at (4,0) [agent] {$4$};
			\node (i5) at (8,1.5) [agent] {$5$};
			\node (o5) at (6,1.5) [object] {$e$};
			\node (o2) at (2,1.5) [object] {$b$};		
			
			\node at (1.4,2.6) {\footnotesize $1/2$};
			\node at (1.4,.3) {\footnotesize $1/2$};
			\node at (5.4,2.6) {\footnotesize $1/4$};
			\node at (5.4,.3) {\footnotesize $1/4$};
			\node at (7,1) {\footnotesize $1/2$};
			
			\draw[-latex] (i1) [bend right] to (o2);
			\draw[-latex] (i2) [bend right] to (o2);
			\draw[-latex] (i3) [bend right] to (o5);
			\draw[-latex] (i4) [bend right] to (o5);
			\draw[-latex] (i5) [bend right] to (o5);
			\draw[-latex] (o2) [bend right] to (i1);
			\draw[-latex] (o2) [bend right] to (i2);
			\draw[-latex] (o5) [bend right] to (i3);
			\draw[-latex] (o5) [bend right] to (i4);
			\draw[-latex] (o5) [bend right] to (i5);
		\end{tikzpicture}
		\subcaption{Step 5}\label{figure:step5}
	\end{subfigure}
}
\caption{In each graph, the nodes connected by solid arrows constitute an absorbing set.}\label{figure:procedure:Example2}
\end{figure}

\clearpage

\pagenumbering{arabic}

\begin{center}
\LARGE Online Appendix \\\normalsize(Not for publication)	  
\end{center}

\section{Proof of Lemma \ref{thm:existence}}\label{appendix_Theorem1}

The equation system $ \Lambda \bx=\bx $ can be written as $ (\mathbf{I}-\Lambda)\mathbf{x}=\mathbf{0}$.
Because every column of $ \Lambda $ sums to one, the sum of all columns of $ \mathbf{I}-\Lambda $ is zero. So $ \mathbf{I}-\Lambda $ is singular. It means that $ (\mathbf{I}-\Lambda)\mathbf{x}=\mathbf{0}$ has nonzero solutions. We want to prove that it has nonnegative solutions, and it has a maximum solution that satisfies the constraints $ \bx\le \mathbf{q} $.

We introduce some definitions. In a directed graph $ (\mathcal{V},\mathcal{E}) $, for any $ V\subseteq \cv $, let $ \Lambda_{V}=(\lambda_{v,u})_{v,u\in V} $ be the restriction of  $ \Lambda $ to $ V $. For any $ V'\subseteq V $, $ \Lambda_{V'} $ is called a \textit{submatrix} of $ \Lambda_{V} $; if $ V'\subsetneq V $, $ \Lambda_{V'} $ is called a strict submatrix of $ \Lambda_{V} $. $ \Lambda_{V'} $ is called \textit{isolated in $ \Lambda_{V} $} if $ \lambda_{v,u}=0 $ for all $ v\in V\backslash V'  $ and $ u\in V' $. For any $ V $, $ \Lambda_{V} $ is called \textit{isolated} if it is isolated in $ \Lambda $; otherwise it is called \textit{unisolated}. $ \Lambda_{V} $ is called \textit{reducible} if it has a strict nonempty submatrix that is isolated in $ \Lambda_{V} $; otherwise, it is called \textit{irreducible}. For convenience, we also say $ V $ is (un)isolated or (ir)reducible if $ \Lambda_{V} $ is (un)isolated or (ir)reducible.

\begin{lemma}\label{lemma1}
$ \cv $ can be uniquely partitioned into disjoint sets $ V_1, V_2,\ldots,V_k,V_{k+1} $ such that

(1) for all $ \ell=1,\ldots,k $, $ V_\ell $ is nonempty, isolated, and irreducible;

(2) $ V_{k+1} $, which can be empty, is either unisolated or reducible, and it does not contain any strict nonempty subset that is isolated and irreducible.
\end{lemma}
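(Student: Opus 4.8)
The plan is to produce the pieces explicitly: take $V_1,\dots,V_k$ to be the minimal (under set inclusion) nonempty isolated subsets of $\mathcal V$ --- these are precisely the absorbing sets of Definition~\ref{definition:absoring set} --- and set $V_{k+1}=\mathcal V\setminus(V_1\cup\dots\cup V_k)$. Everything then reduces to one characterization lemma: a nonempty $V\subseteq\mathcal V$ is isolated and irreducible if and only if $V$ is minimal among nonempty isolated subsets of $\mathcal V$. For the forward direction, if $V$ is isolated and irreducible and $V'\subsetneq V$ is nonempty and isolated, then $\lambda_{v,u}=0$ for all $v\in\mathcal V\setminus V'$ and $u\in V'$, hence in particular for $v\in V\setminus V'$, so $\Lambda_{V'}$ is a strict nonempty submatrix isolated in $\Lambda_V$, contradicting irreducibility. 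For the converse, if $V$ is minimal nonempty isolated but $\Lambda_{V'}$ is isolated in $\Lambda_V$ for some nonempty $V'\subsetneq V$, then combining $\lambda_{v,u}=0$ for $v\in V\setminus V'$, $u\in V'$ with $\lambda_{v,u}=0$ for $v\in\mathcal V\setminus V$, $u\in V$ (the latter because $V$ is isolated in $\Lambda$) yields $\lambda_{v,u}=0$ for all $v\in\mathcal V\setminus V'$, $u\in V'$, so $V'$ is a strictly smaller nonempty isolated set, contradicting minimality.

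I would then record two elementary facts. (i) The intersection of two isolated sets is isolated: if $A,B$ are isolated, $u\in A\cap B$ and $v\notin A\cap B$, then either $v\notin A$, which forces $\lambda_{v,u}=0$, or $v\notin B$, which forces $\lambda_{v,u}=0$. Consequently two distinct minimal nonempty isolated sets are disjoint, since a nonempty intersection would be an isolated set contained in both, forcing equality by minimality. (ii) Every nonempty isolated set $W$ contains a minimal nonempty isolated set: a nonempty isolated subset of $W$ of least cardinality (one exists, as $W$ itself is such a subset and $\mathcal V$ is finite) can have no proper nonempty isolated subset. Since $\mathcal V$ is itself isolated (vacuously), fact (ii) gives $k\ge1$ whenever $\mathcal V\ne\emptyset$, and it shows $V_{k+1}$ is either empty or unisolated: a nonempty isolated $V_{k+1}$ would contain some $V_\ell$, contradicting $V_\ell\cap V_{k+1}=\emptyset$.

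With $V_1,\dots,V_k$ the (finitely many, pairwise disjoint) minimal nonempty isolated subsets, property~(1) is immediate from the characterization lemma. For property~(2), any nonempty isolated and irreducible subset of $V_{k+1}$ would be a minimal nonempty isolated set, hence equal to some $V_\ell$, which is impossible since $V_\ell\cap V_{k+1}=\emptyset$; and $V_{k+1}$ itself is unisolated or empty by fact (ii). For uniqueness, let $W_1,\dots,W_m,W_{m+1}$ be any partition with the stated properties. Each $W_\ell$ with $\ell\le m$ is isolated and irreducible, hence a minimal nonempty isolated set, hence some $V_j$. Conversely each $V_j$, being minimal nonempty isolated, cannot be contained in $W_{m+1}$ (which contains no such set), so it meets some $W_\ell$ with $\ell\le m$; then $V_j\cap W_\ell$ is a nonempty isolated set contained in both $V_j$ and $W_\ell$, so $V_j=W_\ell$ by minimality of each. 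Hence $\{W_1,\dots,W_m\}=\{V_1,\dots,V_k\}$ and $W_{m+1}=V_{k+1}$.

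The one place needing care --- and the main obstacle --- is keeping the two notions ``isolated in $\Lambda$'' and ``isolated in $\Lambda_V$'' straight throughout the characterization lemma and the uniqueness argument; once that distinction is handled correctly, the remainder is routine finite combinatorics.
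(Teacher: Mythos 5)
Your proposal is correct, and it takes a genuinely different (and in one respect more complete) route than the paper's. The paper proves the lemma by a greedy extraction: starting from $\cv$, it repeatedly descends into strict nonempty isolated subsets until it hits an irreducible one (using that singletons are trivially irreducible), peels that off as $V_1$, and iterates on the remainder; disjointness is then automatic from the construction, but the uniqueness asserted in the statement is never explicitly argued, and indeed is not obvious from that construction since the descent involves arbitrary choices. You instead characterize the blocks intrinsically --- a nonempty set is isolated and irreducible if and only if it is minimal among nonempty isolated subsets of $\cv$ --- and then harvest everything from that: existence of at least one block from finiteness, pairwise disjointness from the fact that isolatedness is closed under intersection, the properties of $V_{k+1}$ from the observation that any nonempty isolated set contains a minimal one, and uniqueness because any competing partition's blocks must also be exactly the minimal nonempty isolated sets. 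Your handling of the two notions ``isolated in $\Lambda$'' versus ``isolated in $\Lambda_V$'' in both directions of the characterization lemma is exactly the point where care is needed, and you get it right. The trade-off is that the paper's argument is shorter and constructive in flavor, while yours is slightly longer but actually delivers the uniqueness claim and gives a clean description of what the $V_\ell$ are (which also matches the paper's later identification of them with absorbing sets in Lemma \ref{lemma3}).
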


\begin{proof}

We prove the lemma by constructing the partition. We first construct $ V_1 $. If $ \cv $ is irreducible, let $ V_1=\cv $, and we are done by letting $ k=1 $ and $ V_2=\emptyset $. 	
Otherwise, $ \cv $ contains a strict nonempty subset $ V $ that is isolated. If $ V $ is irreducible, let $ V_1=V  $. Otherwise, $ V $ contains a strict nonempty subset $ V' $ that is isolated in $ V $. Since $ V $ is isolated, $ V' $ is isolated. If $ V' $ is irreducible, let $ V_1=V' $. Otherwise, $ V' $ also contains a strict nonempty subset $ V'' $ that is isolated in $ V' $. So $ V'' $ is also isolated. Note that every submatrix that consists of a single element must be irreducible. So by repeating the above argument we must be able to find an irreducible and isolated nonempty subset. Let the subset be $ V_1 $. 

We then construct $ V_2 $. 	If $ \cv\backslash V_1 $ is irreducible and isolated, we are done by letting $ V_2=\cv\backslash V_1 $, $ k=2 $, and $ V_3=\emptyset $.
Otherwise, if $ \cv\backslash V_1 $ does not contain any strict nonempty subset that is irreducible and isolated, we are done by letting $ k=1 $ and $ V_2=\cv\backslash V_1 $. If $ \cv\backslash V_1 $ contains a strict nonempty subset $ V $ that is irreducible and isolated, let $ V_2=V $.
Then we construct $ V_3 $ and possibly $ V_4,\ldots, V_{k+1} $ from $ \cv\backslash (V_1\cup V_2) $ in a similar way as above. Since $ \cv $ is a finite set, we must stop in finite steps.
\end{proof}
Lemma \ref{lemma1} implies that by permuting rows and columns, we can write $ \Lambda $ as the following block form:
\begin{equation}\label{blockform}
\scalebox{.9}{$
	\Lambda = \left[
	\begin{array}{cccc|c}
		\multicolumn{1}{c}{\Lambda_{V_1}} & & & \multirow{2}{.5cm}{\large \textbf{0}} & \multirow{3}{.5cm}{\large Z} \\
		& \multicolumn{1}{c}{\Lambda_{V_2}} & & & \\
		&  & \ddots & & \\
		\multirow{2}{0cm}{\large \textbf{0}} & &  & \multicolumn{1}{c|}{\Lambda_{V_k}} & \\ \cline{5-5}
		& & & & \Lambda_{{V_{k+1}}} \\
	\end{array}
	\right]
	$}
\end{equation}

\bigskip
For every $ \ell=1,\ldots,k+1 $, let $ \mathbf{I}_{V_\ell} $ denote the identity matrix of dimension $ |V_\ell|\times |V_\ell| $.
\begin{lemma}\label{lemma2}
(1) $ \text{Rank}(\mathbf{I}_{V_\ell}-\Lambda_{V_\ell})=|V_\ell|-1 $ for all $ \ell=1,\ldots,k $;

(2) $ \text{Rank}(\mathbf{I}_{V_{k+1}}-\Lambda_{V_{k+1}})=|V_{k+1}| $.
\end{lemma}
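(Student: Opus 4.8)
Here is the plan. The whole lemma is the finite‑dimensional heart of Perron–Frobenius for stochastic matrices, and I would prove both parts by the elementary route: pass to the transpose and run a maximum‑principle (averaging) argument, using throughout that $\text{rank}(M)=\text{rank}(M^{T})$, hence $\dim\ker M=\dim\ker M^{T}$, for any square matrix $M$.

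For part (1), fix $\ell\le k$. Since $V_\ell$ is isolated, $\lambda_{v,u}=0$ whenever $v\notin V_\ell$ and $u\in V_\ell$, so $\sum_{v\in\cv}\lambda_{v,u}=1$ collapses to $\sum_{v\in V_\ell}\lambda_{v,u}=1$ for each $u\in V_\ell$; thus $\Lambda_{V_\ell}$ is column‑stochastic, the all‑ones vector is a left null vector of $\mathbf{I}_{V_\ell}-\Lambda_{V_\ell}$, and $\text{rank}(\mathbf{I}_{V_\ell}-\Lambda_{V_\ell})\le|V_\ell|-1$. For the reverse inequality it is enough to show $\dim\ker(\mathbf{I}_{V_\ell}-\Lambda_{V_\ell}^{T})=1$. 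The matrix $\Lambda_{V_\ell}^{T}$ is row‑stochastic, so if $\Lambda_{V_\ell}^{T}y=y$ and $M=\max_{v\in V_\ell}y_v$ with $S=\{v:y_v=M\}$, then for $v\in S$ the chain $M=\sum_u(\Lambda_{V_\ell}^{T})_{vu}y_u\le\sum_u(\Lambda_{V_\ell}^{T})_{vu}M=M$ forces $y_u=M$, i.e.\ $u\in S$, whenever $(\Lambda_{V_\ell}^{T})_{vu}=\lambda_{u,v}>0$; equivalently $\lambda_{u,v}=0$ for all $u\in V_\ell\setminus S$, $v\in S$. If $S$ were a strict nonempty subset of $V_\ell$ this would say $\Lambda_S$ is isolated in $\Lambda_{V_\ell}$, contradicting irreducibility of $\Lambda_{V_\ell}$; hence $S=V_\ell$, $y$ is constant, and $\text{rank}(\mathbf{I}_{V_\ell}-\Lambda_{V_\ell})=|V_\ell|-1$.

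For part (2), the claim is vacuous when $V_{k+1}=\emptyset$, so assume otherwise; I would show $\mathbf{I}_{V_{k+1}}-\Lambda_{V_{k+1}}$ is nonsingular. Because $V_{k+1}$ need not be isolated (the block $Z$ in (\ref{blockform})), the columns of $\Lambda_{V_{k+1}}$ sum only to $\le 1$, and correspondingly the rows of $\Lambda_{V_{k+1}}^{T}$ sum to $\le 1$. Suppose $\Lambda_{V_{k+1}}x=x$ with $x\ne 0$ real; replacing $x$ by $-x$ if needed, assume $M:=\max_v x_v>0$ and set $S=\{v\in V_{k+1}:x_v=M\}\ne\emptyset$. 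Running the same averaging step on $\Lambda_{V_{k+1}}^{T}x=x$, and using $M>0$, each $v\in S$ must satisfy (i) its row of $\Lambda_{V_{k+1}}^{T}$ sums to exactly $1$ — equivalently the $v$-th column of $\Lambda_{V_{k+1}}$ sums to $1$, so $\lambda_{w,v}=0$ for every $w\in\cv\setminus V_{k+1}$ — and (ii) $\lambda_{u,v}=0$ for every $u\in V_{k+1}\setminus S$. Together (i) and (ii) give $\lambda_{w,v}=0$ for all $w\in\cv\setminus S$ and $v\in S$, i.e.\ $\Lambda_S$ is isolated in $\Lambda$. Now exactly as in the construction of $V_1$ in the proof of Lemma \ref{lemma1} (recurse on a strict isolated submatrix; singletons are irreducible), $S$ contains a nonempty subset $S'$ with $\Lambda_{S'}$ isolated and irreducible. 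If $S'=V_{k+1}$ then $\Lambda_{V_{k+1}}$ is isolated and irreducible, contradicting clause (2) of Lemma \ref{lemma1}; if $S'\subsetneq V_{k+1}$ then $S'$ is a strict nonempty isolated irreducible subset of $V_{k+1}$, again contradicting clause (2). Hence $x=0$ and $\text{rank}(\mathbf{I}_{V_{k+1}}-\Lambda_{V_{k+1}})=|V_{k+1}|$.

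The main obstacle is part (2). Part (1) is just the classical statement that an irreducible stochastic matrix has $1$ as a simple eigenvalue, but in part (2) one must use both halves of the $V_{k+1}$ description in Lemma \ref{lemma1}, and in particular carry out the two‑level reduction ``fixed vector $\Rightarrow$ isolated set $S$ $\Rightarrow$ isolated irreducible subset $S'$'' — merely producing an isolated $S$ is not enough, since $V_{k+1}$ is permitted to be isolated (as long as it is reducible). Getting the equality conditions in the averaging step and the bookkeeping of which columns leak out of $V_{k+1}$ is the delicate part, though none of it is computationally heavy.
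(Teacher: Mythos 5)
Your proof is correct, but it follows a genuinely different route from the paper's. The paper works with determinants: it invokes Corollary~3.3 of \cite{peterson1982leontief} (if a sub-stochastic matrix and all of its one-index-deleted principal submatrices behave in a certain way, then every column sums to one), combines it with irreducibility (Claim~1) to get a descent step (Claim~2), and iterates the deletion of single indices until it reaches a $1\times 1$ block $1-\lambda_{v,v}$, whose nonvanishing ($\lambda_{v,v}=0$) yields the contradiction; for part~(2) it first proves separately that $\Lambda_{V_{k+1}}$ is unisolated. You instead run the classical Perron--Frobenius maximum principle on the transpose: the equality case of the averaging inequality shows that the set $S$ where a fixed vector attains its maximum has no outgoing positive weights, and irreducibility (part~1) or clause~(2) of Lemma~\ref{lemma1} (part~2) forces the contradiction. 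Your version is self-contained (no external corollary), and in part~(2) it elegantly avoids the paper's separate ``unisolated'' step by extracting an isolated irreducible $S'\subseteq S$ and noting that \emph{either} alternative ($S'=V_{k+1}$ or $S'\subsetneq V_{k+1}$) contradicts Lemma~\ref{lemma1}(2); the price is that the reader must track the two equality conditions (column sum exactly one, hence no leakage out of $V_{k+1}$; and no positive weight into $V_{k+1}\setminus S$), both of which need $M>0$. One small writing slip: in part~(2) you posit $\Lambda_{V_{k+1}}x=x$ and then average using $\Lambda_{V_{k+1}}^{T}x=x$, which are different equations for the same $x$; the intended (and valid) move is to argue entirely on the transpose and conclude via $\dim\ker(\mathbf{I}-\Lambda_{V_{k+1}})=\dim\ker(\mathbf{I}-\Lambda_{V_{k+1}}^{T})$, which you do flag at the outset --- just state it that way.
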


\begin{proof}
(1) For every $ \ell=1,\ldots,k $, since $ V_\ell $ is isolated, it must be that $ \sum_{v\in V_\ell} \lambda_{v,u}=1 $ for every $ u\in V_\ell $. Therefore, the sum of all columns of $ \mathbf{I}_{V_\ell}-\Lambda_{V_\ell} $ is zero. So $ \text{det}(\mathbf{I}_{V_\ell}-\Lambda_{V_\ell})=0 $ and $ \text{Rank}(\mathbf{I}_{V_\ell}-\Lambda_{V_\ell})<|V_\ell| $.

\medskip

\textbf{Claim 1.} For every nonempty $ V\subsetneq V_\ell $, there exists $ u\in V $ such that $ \sum_{v\in V}\lambda_{v,u}<1 $.

\textbf{Proof}: Suppose $ \sum_{v\in V}\lambda_{v,u}=1 $ for all $ u\in V $. Since $ \sum_{v\in V_\ell} \lambda_{v,u}=1 $ for every $ u\in V_\ell $, it means that $ \lambda_{v,u}=0 $ for all $ v\in V_\ell \backslash V $ and $ u\in V $. So $ \Lambda_{V} $ is isolated in $ \Lambda_{V_\ell} $, which contradicts the fact that $ \Lambda_{V_\ell} $ is irreducible.

\medskip

Corollary 3.3 of \cite{peterson1982leontief} states that for a general matrix $ \mathbf{D}=(d_{ij})_{i,j=1}^k $ such that $ d_{ij}\in [0,1] $ and $ \sum_{i=1}^k d_{ij}\le 1 $ for all $ j\in X=\{1,\ldots, k\} $, if $ \text{det}(\mathbf{I}_{k\times k}-\mathbf{D})=0 $ and $ \text{det}(\mathbf{I}_{(k-1)\times (k-1)}-\mathbf{D}_{X\backslash {j}})\neq 0 $ for all $ j\in X $, then every column of $ \mathbf{D} $ sums to one. This result and Claim 1 imply the following claim. 

\medskip

\textbf{Claim 2.} For every nonempty $ V\subsetneq V_\ell $, if $ \text{det}(\mathbf{I}_{V}-\Lambda_{V})=0 $, then there exists some $ u\in V $ such that $ \text{det}(\mathbf{I}_{V\backslash \{u\}}-\Lambda_{V\backslash \{u\}})=0 $.

\textbf{Proof}: Suppose $ \text{det}(\mathbf{I}_{V\backslash \{u\}}-\Lambda_{V\backslash \{u\}})\neq 0 $ for all $ u\in V $. By Corollary 3.3 of \cite{peterson1982leontief}, every column of $ \Lambda_{V} $ sums to one. But it contradicts Claim 1.

\medskip

Now we prove that for all $ u\in V_\ell $, $ \text{det}(\mathbf{I}_{V_\ell\backslash \{u\}}-\Lambda_{V_\ell\backslash \{u\}})\neq 0 $. Suppose towards a contradiction that $ \text{det}(\mathbf{I}_{V_\ell\backslash \{u\}}-\Lambda_{V_\ell\backslash \{u\}})= 0 $ for some $ u\in V_\ell $. By Claim 2, there exists $ u_1\in V_\ell\backslash \{u\} $ such that $ \text{det}(\mathbf{I}_{V_\ell\backslash \{u,u_1\}}-\Lambda_{V_\ell\backslash \{u,u_1\}})= 0 $. By Claim 2 again, there further exists $ u_2\in V_\ell\backslash \{u,u_1\} $ such that $ \text{det}(\mathbf{I}_{V_\ell\backslash \{u,u_1,u_2\}}-\Lambda_{V_\ell\backslash \{u,u_1,u_2\}})= 0 $. By repeatedly using Claim 2, we must find a submatrix consisting of only one element $ v\in V_\ell $ such that $ 1-\lambda_{v,v}=0 $, which contradicts the fact that $ \lambda_{v,v}=0 $ for all $ v\in \cv $. So $ \text{det}(\mathbf{I}_{V_\ell\backslash \{u\}}-\Lambda_{V_\ell\backslash \{u\}})\neq 0 $ for all $ u\in V_\ell $. This implies that $ \text{Rank}(\mathbf{I}_{V_\ell}-\Lambda_{V_\ell})=|V_\ell|-1 $.  
\medskip

(2) We first prove that $ \Lambda_{V_{k+1}} $ is unisolated. Suppose it is isolated, then by definition it must be reducible. So $ \Lambda_{V_{k+1}} $ contains a strict nonempty submatrix $ \Lambda_{V} $ that is isolated in $ \Lambda_{V_{k+1}}  $. Since $ \Lambda_{V_{k+1}} $ is isolated, $ \Lambda_{V} $ is also isolated. Still by the definition of $ \Lambda_{V_{k+1}} $, $ \Lambda_{V} $ must be reducible. So $ \Lambda_{V} $ also contains a strict nonempty submatrix $ \Lambda_{V'} $ that is isolated in $ \Lambda_{V} $. Since $ \Lambda_{V} $ is isolated, $ \Lambda_{V'} $ is also isolated. By the definition of $ \Lambda_{V_{k+1}} $ again, $ \Lambda_{V'} $ must be reducible. By repeating this argument, we will finally find a submatrix consisting of a single element and conclude that it is reducible, which is a contradiction. So $ \Lambda_{V_{k+1}} $ is unisolated. It means that not every column of $ \Lambda_{V_{k+1}} $ sums to one. Then by Corollary 3.3 of \cite{peterson1982leontief} and same arguments as in Claim 1 and Claim 2, if $ \text{det}(\mathbf{I}_{V_{k+1}}-\Lambda_{V_{k+1}})=0 $, then there must exist some $ v\in V_{k+1} $ such that $ 1-\lambda_{v,v}=0 $, which is a contradiction. So $ \text{det}(\mathbf{I}_{V_{k+1}}-\Lambda_{V_{k+1}})\neq 0 $, which implies that $ \text{Rank}(\mathbf{I}_{V_{k+1}}-\Lambda_{V_{k+1}})=|V_{k+1}| $.
\end{proof}

Given the block form (\ref{blockform}), Lemma \ref{lemma2} implies that
\[
\text{Rank}(\mathbf{I}-\Lambda)=\sum_{\ell=1}^{k+1} \text{Rank}(\mathbf{I}_{V_\ell}-\Lambda_{V_\ell})=|\cv|-k.
\] 
So $ (\mathbf{I}-\Lambda)\mathbf{x}=\mathbf{0} $ has $ k $ linearly independent solutions. Below we construct the $ k $ solutions.

For all every $ \ell=1,\ldots,k $, we consider the equation system $ (\mathbf{I}_{V_\ell}-\Lambda_{V_\ell})\mathbf{x}_{V_\ell}=\mathbf{0} $. Since $ \text{det}(\mathbf{I}_{V_\ell}-\Lambda_{V_\ell})=0 $, $ 1 $ is an eigenvalue of $ \Lambda_{V_\ell} $. Since $ \Lambda_{V_\ell} $ is irreducible, by Frobenius Theorem (see Section 6.8 of \cite{leonlinear}), $ 1 $ has a positive eigenvector $ \tilde{\mathbf{x}}_{V_\ell} $ that is a solution to $ (\mathbf{I}_{V_\ell}-\Lambda_{V_\ell})\mathbf{x}_{V_\ell}=\mathbf{0} $.
Recall that $ \Lambda $ can be written in the block form (\ref{blockform}). So
\[
\tilde{\mathbf{x}}_\ell=(\underbrace{\mathbf{0},\ldots,\mathbf{0}}_{\ell-1},\tilde{\mathbf{x}}_{V_\ell},\mathbf{0},\ldots,\mathbf{0})
\]
is a nonnegative solution to $ (\mathbf{I}-\Lambda)\mathbf{x}=\mathbf{0} $. It is clear that the $ k $ solutions $ \tilde{\mathbf{x}}_1,\tilde{\mathbf{x}}_2,\ldots,\tilde{\mathbf{x}}_k $ are linearly independent. Therefore, every solution to $ (\mathbf{I}-\Lambda)\mathbf{x}=\mathbf{0} $ is a linear combination of the $ k $ solutions. That is, there exist $ y_1,\ldots,y_k\in \mathbf{R} $ such that

\[
\mathbf{x}=y_1 \begin{bmatrix}
\tilde{\mathbf{x}}_{V_1} \\ \mathbf{0} \\ \mathbf{0} \\ \vdots \\ \mathbf{0} \\
\end{bmatrix}
+y_2 \begin{bmatrix}
\mathbf{0} \\\tilde{\mathbf{x}}_{V_2} \\ \mathbf{0} \\\vdots \\ \mathbf{0} \\
\end{bmatrix}
+\cdots
+y_k \begin{bmatrix}
\mathbf{0}  \\\vdots \\ \mathbf{0} \\\tilde{\mathbf{x}}_{V_k} \\ \mathbf{0}  \\
\end{bmatrix}.
\]
\medskip

For every $ \ell=1,\ldots,k $, define $ y^*_\ell=\min_{v\in V_\ell}\dfrac{q_v}{\tilde{x}_v} $.
Then, $ \mathbf{x}^*=\sum_{\ell=1}^k y^*_\ell \tilde{\mathbf{x}}_\ell $
is the maximum solution that satisfies the constraint $ \bx\le \mathbf{q} $.

This finishes the proof of Lemma \ref{thm:existence}.

\bigskip

Lemma \ref{lemma3} proves that  $ \{V_\ell\}_{1\le \ell \le k} $ is the set of absorbing sets in the graph $ (\mathcal{V},\mathcal{E}) $.

\begin{lemma}\label{lemma3}
A subset of nodes $ V\subseteq \cv $ is an absorbing set in the graph $ (\mathcal{V},\mathcal{E}) $ if and only if $ \Lambda_{V} $ is irreducible and isolated.
\end{lemma}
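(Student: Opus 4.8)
The plan is to match the two defining properties of an absorbing set against the two matrix conditions one at a time: first that $\Lambda_V$ being isolated is equivalent to $V$ having no outgoing edge (the first condition in Definition~\ref{definition:absoring set}), and then, given this, that $\Lambda_V$ being irreducible is equivalent to $V$ being inside connected (the second condition). The one elementary observation I will use repeatedly is that if $V$ has no outgoing edge, then any directed path starting at a node of $V$ stays entirely inside $V$; otherwise the initial segment of that path up to the first node lying outside $V$ would itself be a directed path from a node of $V$ to a node not in $V$.

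For the first equivalence: by definition $\Lambda_V$ is isolated exactly when $\lambda_{v,u}=0$ for every $u\in V$ and every $v\in\cv\backslash V$, i.e.\ when no edge of positive weight leaves $V$. This is equivalent to the no-outgoing-edge condition, since a single positive-weight edge out of $V$ is already a directed path leaving $V$, and conversely any directed path leaving $V$ must traverse such an edge at the step where it first exits $V$. (It matters here that directed paths are built only from edges with $\lambda>0$, which is strictly stronger than $u$ pointing to $v$.)

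For the second equivalence, assume $\Lambda_V$ is isolated. If $\Lambda_V$ is reducible, take a strict nonempty $V'\subsetneq V$ with $\Lambda_{V'}$ isolated in $\Lambda_V$, i.e.\ $\lambda_{v,u}=0$ for all $u\in V'$ and $v\in V\backslash V'$. Picking $u\in V'$ and $w\in V\backslash V'$, inside connectedness yields a directed path from $u$ to $w$, which by the observation above lies in $V$, and the positive-weight edge it uses at its first exit from $V'$ contradicts $\Lambda_{V'}$ being isolated in $\Lambda_V$; hence inside connected implies irreducible. Conversely, if $V$ is not inside connected, choose distinct $v,v'\in V$ with no directed path from $v$ to $v'$, and let $V_1$ consist of $v$ together with every node of $V$ reachable from $v$ by a directed path. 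Then $v\in V_1$ while $v'\notin V_1$, so $V_1$ is a strict nonempty subset of $V$, and no positive-weight edge leaves $V_1$ within $V$ (such an edge would enlarge the reachable set), so $\Lambda_{V_1}$ is isolated in $\Lambda_V$ and $\Lambda_V$ is reducible. Combining the two equivalences proves the lemma.

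I do not expect a real obstacle: the statement unwinds directly from the definitions. The points that need care are the index convention in $\lambda_{v,u}$ (row indexed by the node pointed to, column by the pointing node, so that $\Lambda_{V'}$ being isolated in $\Lambda_V$ means no positive-weight edge leaves $V'$ inside $V$), the gap between $u$ pointing to $v$ and $\lambda_{v,u}>0$, and the repeated use of the fact that the no-outgoing-edge condition keeps every directed path starting in $V$ inside $V$.
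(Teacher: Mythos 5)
Your proposal is correct and follows essentially the same route as the paper's proof: both directions are handled by matching the no-outgoing-edge condition with isolation and the inside-connectedness condition with irreducibility, and your reachable-set construction for showing that a failure of inside connectedness yields an isolated strict submatrix is exactly the paper's argument. Your version is slightly more explicit than the paper's on the first-exit-edge step and on the distinction between edges of $\mathcal{E}$ and entries with $\lambda_{v,u}>0$, but the substance is identical.
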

\begin{proof}
(Only if) Let $  V $ be an absorbing set. Because $ V $ has no outgoing edges, for every $ u\in V $ and $ v\in \cv\backslash V $, $ \lambda_{v,u}=0 $. It means that $ \Lambda_{V} $ is isolated. Because $ V $ is inside connected, for any strict subset $ V' $ of $ V $, there exist $ v\in V\backslash V' $ and $ u\in V' $ such that $ \lambda_{v,u}>0 $. It means that $ \Lambda_{V} $ is irreducible.

(If) Suppose $ \Lambda_{V} $ is irreducible and isolated. Being isolated directly means that $ V $ has no outgoing edges. Suppose there exist two nodes $ u,v\in V $ such that there is no directed path from $ u $ to $ v $. Let $ V_1 $ be the subset of $ V $ such that there is a directed path from $ u $ to every $ u'\in V_1 $. Let $ V_2 $ be the set of remaining nodes in $ V $. In particular, $ u\in V_1 $ and $ v\in V_2  $. Then there must be no directed path from every node in $ V_1 $ to every node in $ V_2 $; otherwise, there would be a directed path from $ u $ to every node in $ V_2 $, which contradicts the definition of $ V_2 $. It means that $ \lambda_{v,u}=0 $ for all $ v\in V_2 $ and $ u\in V_1 $. So $ \Lambda_{V_1} $ is isolated in $ \Lambda_{V} $, which contradicts that $ \Lambda_{V} $ is irreducible. Therefore, $ V $ must be inside connected.
\end{proof}

\section{Two ideas of solving Example \ref{example:2} through clearing cycles}\label{appendix:possible:idea}

This appendix discusses two ideas of solving Example \ref{example:2} through clearing cycles. Both ideas appear intuitive in some respects, but their found allocations do not satisfy desirable fairness.

\subsection{Clearing all cycles equally}

Our first idea is treating all cycles in a generated graph equally by trading an equal amount of objects in all cycles. Formally, let $ (\mathcal{V},\mathcal{E},\w) $ be a generated graph at any step of a mechanism in which $ \mathcal{V} $ is the set of nodes, $ \mathcal{E} $ is the set of directed edges, and $ \w $ is the endowments of agents. For every object-to-agent edge $ o\rightarrow i $, denote by $ n_{o\rightarrow i} $ the number of cycles involving the edge. Then, at most an amount $ \frac{\w_{i,o}}{n_{o\rightarrow i}} $ of object $ o $ can be traded in each of the $ n_{o\rightarrow i} $ cycles. So, in the graph, we trade an amount
\[
\min_{(o\rightarrow i)\in \mathcal{E}:\ n_{o\rightarrow i}>0} \dfrac{\w_{i,o}}{n_{o\rightarrow i}}
\] 
of objects in every cycle. This defines a mechanism. The mechanism is intuitively fair because the cycles at every step are treated equally. However, if we apply this mechanism to Example \ref{example:2}, we will find the allocation in \autoref{table:firstidea:allocation}, which violates EENE. Specifically, at step one, in \autoref{figure:example2}, we trade an amount $ 1/4 $ of objects in every cycle. So, $ 1 $ obtains $ 1/2c $, $ 2 $ obtains $ 1/4d $, $ 3 $ obtains $ 1/2d $, $ 4 $ obtains $ 3/4a $, and $ 5 $ obtains $ 1/4c $. The remaining steps are straightforward because the generated graphs are simple. At step two, $ 3 $ and $ 5 $ obtain $ 1/4c $. At step three, $ 2 $ obtains $ 1/4a $ and $ 5 $ obtains $ 1/2e $. At step four, $ 1 $ and $2 $ obtain $ 1/2b $, and $ 3 $ and $4 $ obtain $ 1/4e $. This allocation is sd-efficient but violates EENE for $ 1 $ and $ 2 $.

\subsection{Clearing non-redundant cycles}

Our second idea is to select non-redundant cycles to clear, which generalizes the idea of \cite{Kesten2009} to define a probabilistic version of TTC in the house allocation model. When a long cycle nests several short cycles, the long cycle might be viewed as redundant, because every agent in the long cycle can obtain the object she demands by clearing the short cycles. Formally, we call a cycle \textit{redundant} if it nests several shorter cycles and every node in the long cycle is involved in one of the shorter cycles. Otherwise, it is \textit{non-redundant}.

To define a mechanism, we need to specify the amount of objects traded in every non-redundant cycle. In every cycle, the amount of objects that can be traded is no more than the amounts of the objects owned by the agents in the cycle. If several cycles share an object-to-agent edge and the amount of the object owned by the agent is not enough to satisfy all of the cycles, we need to decide how to divide that amount among the several cycles.\footnote{Because agents' endowments are no more than their demands, there is no problem for clearing cycles if several cycles share an agent-to-object edge.}
However, this does not happen in Example \ref{example:2}. In \autoref{figure:example2}, cycle5 is redundant in the presence of cycle1 and cycle3. So, if we clear the four non-redundant cycles, the amount of objects traded in the four cycles will be $ 1/2 $, $ 1/2 $, $ 1/4 $, and $ 1/2 $,  respectively. After clearing the four cycles, $ 1 $ obtains $ 1/4c $, $ 2 $ obtains $ 1/2d $, $ 3 $ obtains $ 1/2d $, $ 4 $ obtains $ 3/4a $, and $ 5 $ obtains $ 1/2c $. The remaining steps are straightforward because all generated cycles are of the form $ i\rightarrow o\rightarrow i $. At step two, $ 3 $ obtains $ 1/4c $. At step three, $ 1 $ obtains $ 1/4a $ and $ 5 $ obtains $ 1/2e $. At step four, $ 1$ and $2 $ obtain $ 1/2b $, and $ 3 $ and $4 $ obtain $ 1/4e $. So, the mechanism will find the allocation in \autoref{table:secondidea:allocation}, which is sd-efficient but violates EENE for $ 1 $ and $ 2 $.

\begin{table}[!htb]
\centering
\scalebox{.9}{
	\begin{subtable}{.35\linewidth}
		\centering
		\begin{tabular}[c]{c|ccccc}
			& $a$ & $b$ & $c$ & $d$ & $e$ \\\hline
			$1$ &  & $1/2$ & $1/2$ &  &  \\
			$2$ & $1/4$ & $1/2$ &  & $1/4$ &  \\
			$3$ &  &  &  & $3/4$ & $1/4$ \\
			$4$ & $3/4$ &  &  &  & $1/4$ \\
			$5$ &  &  & $1/2$ &  & $1/2$ 
		\end{tabular}
		\subcaption{Outcome of the first mechanism}\label{table:firstidea:allocation}
	\end{subtable}
	\quad\quad
	\begin{subtable}{.4\linewidth}
		\centering
		\begin{tabular}[c]{c|ccccc}
			& $a$ & $b$ & $c$ & $d$ & $e$ \\\hline
			$1$ & $1/4$ & $1/2$ & $1/4$ &  &  \\
			$2$ &  & $1/2$ & & $1/2$ &  \\
			$3$ &  &  & $1/4$ & $1/2$ & $1/4$ \\
			$4$ & $3/4$ &  &  &  & $1/4$ \\
			$5$ &  &  & $1/2$ &  & $1/2$
		\end{tabular}
		\subcaption{Outcome of the second mechanism}\label{table:secondidea:allocation}
	\end{subtable}
}
\caption{}\label{table:xx}
\end{table}

\end{document}